\definecolor{purple}{HTML}{7068A4}
\definecolor{blue}{HTML}{3F70B2}
\definecolor{orange}{HTML}{A47458}
\definecolor{yellow}{HTML}{9B7A3C}
\crefname{claim}{Claim}{Claims}
\crefname{property}{Property}{Properties}
\crefname{algocf}{Algorithm}{Algorithms}
\Crefname{algocf}{Algorithm}{Algorithms}
\crefname{step}{Step}{Steps}
\Crefname{step}{Step}{Steps}
\newtheorem{theorem}{Theorem}
\newtheorem{lemma}{Lemma}[section]
\newtheorem{corollary}[lemma]{Corollary}
\newtheorem{proposition}[lemma]{Proposition}
\newtheorem{problem}[theorem]{Problem}
\theoremstyle{definition}
\theoremstyle{remark}
\newtheorem{remark}[lemma]{Remark}
\newtheorem*{remark*}{Remark}
\newcommand{\LOCAL}{$\mathsf{LOCAL}$}
\newcommand{\CONGEST}{$\mathsf{CONGEST}$}
\newcommand{\CCLIQUE}{$\mathsf{CONGESTED~ CLIQUE}$}
\renewcommand{\epsilon}{\varepsilon}
\newcommand{\poly}{\operatorname{poly}}
\newcommand{\E}{\mathbb{E}}
\renewcommand{\emptyset}{\varnothing}
\newcommand{\randcolortrial}{\textbf{$\mathsf{Random\mbox{-}Color\mbox{-}Trial}$}\xspace}
\newcommand{\colorsamp}{\textbf{$\mathsf{Color\mbox{-}Sample}$}\xspace}
\newcommand{\DLC}{\textsf{D1LC}\xspace}
\newcommand{\zec}{\textsf{ZEC}\xspace}
\newcommand{\zecnew}{\textsf{ZEC-NEW}\xspace}
\newcommand{\bea}{\mathbf{E}_A\xspace}
\newcommand{\beb}{\mathbf{E}_B\xspace}
\newcommand{\labela}{ \mathsf{L}_A\xspace}
\newcommand{\labelb}{ \mathsf{L}_B\xspace}
\newcommand{\comments}[1]{\textcolor{black}{#1}}
\title{Round and Communication Efficient Graph Coloring}
 \author{Yi-Jun Chang\footnote{National University of Singapore. ORCID: 0000-0002-0109-2432. Email: cyijun@nus.edu.sg} \and Gopinath Mishra\footnote{National University of Singapore. ORCID: 0000-0003-0540-0292. Email: gopinath@nus.edu.sg} \and Hung Thuan Nguyen\footnote{National University of Singapore. ORCID: 0009-0006-7993-2952.  Email: hung@u.nus.edu} \and Farrel D Salim\footnote{National University of Singapore. ORCID: 0009-0000-8969-0279. Email: farrel\_salim@u.nus.edu} }
\date{}
\begin{document}

\maketitle
\thispagestyle{empty}

\begin{abstract}
In the context of communication complexity, we explore protocols for graph coloring, focusing on the vertex and edge coloring problems in $n$-vertex graphs $G$ with a maximum degree $\Delta$. We consider a scenario where the edges of $G$ are partitioned between two players. 

Our first contribution is a randomized protocol that efficiently finds a $(\Delta + 1)$-vertex coloring of $G$, utilizing $O(n)$ bits of communication in expectation and completing in $O(\log \log n \cdot \log \Delta)$ rounds in the worst case.  This advancement represents a significant improvement over the work of Flin and Mittal [\emph{Distributed Computing} 2025], who achieved the same communication cost but required $O(n)$ rounds in expectation, thereby making a significant reduction in the round complexity. 

Our second contribution is a  {deterministic} protocol to compute a $(2\Delta - 1)$-edge coloring of $G$, which maintains the same $O(n)$ bits of communication and uses only $O(1)$ rounds. % in the worst case.  
We complement the result with a tight $\Omega(n)$-bit lower bound on the communication complexity of the $(2\Delta-1)$-edge coloring problem, while a similar $\Omega(n)$ lower bound for the $(\Delta+1)$-vertex coloring problem has been established by Flin and Mittal [\emph{Distributed Computing} 2025]. Our result implies a space lower bound of $\Omega(n)$ bits for $(2\Delta - 1)$-edge coloring in the $W$-streaming model, which is the first non-trivial space lower bound for edge coloring in the $W$-streaming model. 

\end{abstract}

%\listoftodos

\newpage
\bigskip
\tableofcontents
\bigskip
\thispagestyle{empty}

% \newpage

% \listoftodos

\newpage
\pagenumbering{arabic}
\section{Introduction} 

The \emph{two-party communication model} was introduced by Yao \cite{yao1979complexity} and serves as a foundational framework in various fields, including theoretical computer science. In this model (for graphs), two parties, Alice and Bob, each hold part of the input—specifically, portions of a graph where the edges are partitioned between them. They must communicate to compute a function of their combined inputs, such as finding a valid $(\Delta + 1)$-vertex coloring or $(2\Delta - 1)$-edge coloring. The focus is on minimizing the total number of bits exchanged, known as \emph{communication complexity}, which is defined as the minimum number of bits required for Alice and Bob to compute the desired function. Additionally, \emph{round complexity} refers to the number of communication rounds needed for the two parties to reach a solution, where each round consists of messages exchanged between them. Efficient protocols aim to minimize both communication and round complexity, which is especially valuable in large-scale distributed systems where bandwidth is limited and rapid decision-making is essential. For a thorough exposition of the two-party communication model, we recommend referring to \cite{kushilevitz1997communication,rao2020communication}. This model allows us to investigate the communication costs associated with solving the aforementioned coloring problems, highlighting trade-offs between communication complexity and round efficiency, and leading to new insights for distributed protocols. 

{The interplay between round and communication complexity was initiated by Papadimitriou and Sipser \cite{PapadimitriouSipser1982}, who demonstrated that bounding rounds can substantially increase communication. This foundational perspective has shaped decades of work in communication complexity \cite{kushilevitz1997communication,rao2020communication}.  In this work, we present graph coloring protocols that are round-efficient and communication-optimal, meaning that our protocols achieve optimal communication complexity while using \emph{significantly} fewer rounds.}
% In this work, we focus on graph coloring problems within the communication complexity model.

% Graph coloring is a fundamental problem in computer science, with applications ranging from scheduling and resource allocation to frequency assignment in wireless networks. It involves the assignment of colors to elements of a graph, such as its vertices or edges, subject to specific constraints. In vertex coloring, the goal is to assign colors to vertices such that no two adjacent vertices share the same color, while in edge coloring, the objective is to color edges so that no two edges incident on the same vertex receive the same color. A well-known result is that any graph with a maximum degree $\Delta$ can be greedily colored with $\Delta + 1$ colors for a vertex coloring. Similarly, for a $(2\Delta - 1)$-edge coloring, greedy algorithms can produce a proper coloring, although this bound can often be suboptimal in practice. Vizing’s theorem guarantees that a $(\Delta + 1)$-edge coloring is achievable \cite{vizing1964}. 

Graph coloring is a fundamental problem in computer science, with a wide range of applications, including scheduling, resource allocation, and frequency assignment in wireless networks. The problem involves assigning colors to the elements of a graph, such as its vertices or edges, subject to specific constraints. In \emph{vertex coloring}, the goal is to assign colors to the vertices of a graph such that no two adjacent vertices share the same color, while in \emph{edge coloring}, the objective is to color the edges so that no two edges incident to the same vertex receive the same color. A well-known result in vertex coloring is that any graph with maximum degree $\Delta$ can be greedily colored using at most $\Delta + 1$ colors. Similarly, for edge coloring, a greedy algorithm can achieve a proper coloring using $2\Delta - 1$ colors. However, this bound is often suboptimal in practice. Vizing's theorem guarantees that a proper edge coloring can be achieved using at most $\Delta + 1$ colors for any graph \cite{vizing1964}.

Substantial research has been devoted to graph algorithms in the context of communication complexity~\cite{kushilevitz1997communication, rao2020communication, hajnal1988communication, assadi2021simple, babai1986complexity, razborov1992distributional, ivanyos2012new}, in particular the \emph{two-party edge-partition} model~\cite{assadi2024log, assadi2023noisy, assadi2023coloring, blikstad2022communication}, where the edge set is arbitrarily divided between two players.  However, the study of coloring problems within the communication complexity framework has received comparatively little attention. In contrast, there exists a substantial body of work on $(\Delta + 1)$-vertex coloring and $(2\Delta - 1)$-edge coloring in various \emph{restricted} computational models, such as streaming, dynamic, parallel, and distributed settings~\cite{assadi2019sublinear, linial1992locality, barenboim2016locality, chang2020distributed, ghaffari2021deterministic, chang2019complexity, czumaj2021deterministic, bhattacharya2018dynamic, DBLP:journals/talg/BhattacharyaGKL22, DBLP:journals/talg/HenzingerP22, DBLP:journals/siamcomp/Harris20, DBLP:journals/tcs/HalldorssonN23, bhattacharya2018dynamic}.

\paragraph{Coloring in the two-party edge-partition communication model:}
The $(\Delta+1)$-vertex coloring problem has been recently studied in the two-party communication model. As argued in~\cite{assadi2023coloring}, it is not hard to obtain a \emph{deterministic} protocol for the $(\Delta+1)$-vertex coloring problem with communication complexity $O(n \poly(\log n))$ by simulating the greedy coloring algorithm and using a binary search for finding available colors. However, this approach requires a significant number of communication rounds. 
%A straightforward greedy coloring approach, explored in \cite{10.1145/3662158.3662796}, requires $O(n \log \Delta)$ bits of communication and rounds. This protocol iteratively assigns random colors to vertices based on a random permutation and communicates 2 bits per vertex to check color validity. 
\citeauthor{assadi2023coloring}~\cite[Corollary 3.11 of arXiv:2212.10641v1]{assadi2023coloring} presented a deterministic vertex coloring algorithm using $O(n \log^4 n)$ bits of communication and $O(\log \Delta \log \log \Delta)$ rounds, which is efficient in both round and communication complexities.

Recently, \citet{10.1145/3662158.3662796} presented a randomized vertex coloring algorithm using $O(n)$ bits of communication \emph{in expectation}. Moreover, they showed a lower bound $\Omega(n)$ on the communication complexity matching their upper bound. However, just like the greedy coloring algorithm, the algorithm of \citet{10.1145/3662158.3662796} is inherently sequential and not round-efficient. 

%An intriguing question remained open: Is it possible to design a \emph{round-efficient} protocol for $(\Delta+1)$-vertex coloring with an \emph{optimal} communication complexity $O(n)$?

% In this paper, we focus on optimizing both communication and round complexity for the $(\Delta+1)$-vertex coloring and $(2\Delta-1)$-edge coloring problems in the two-party communication model. For the $(\Delta+1)$-vertex coloring problem, we present a protocol with $O(n)$ expected communication bits and $O(\log \log n \log \Delta)$ rounds, significantly reducing the round complexity. To the best of our knowledge, we provide the first nontrivial protocol for the $(2\Delta-1)$-edge problem that uses $O(n)$ bits of communication in expectation. Note that in the $(\Delta+1)$-vertex coloring problem, both parties must output colorings for all vertices, while in the $(2\Delta-1)$-edge coloring problem, they only need to output colorings for their respective edges.

%\yijun{Maybe start with the trivial algo based on greedy coloring, and then explain that Assadi's algorithm is interesting because the round-communication tradeoff.}

\subsection{Our contribution and comparison with previous works}\label{sec:contribution}
In this paper, we design a new protocol for the $(\Delta+1)$-vertex coloring problem that is both \emph{round-efficient} and \emph{communication-optimal}. Specifically, we present a protocol with $O(n)$ bits of communication in expectation and $O(\log \log n \cdot \log \Delta)$ rounds in the worst case. Moreover, we obtain an analogous result for edge coloring. Specifically, we present a deterministic protocol for the $(2\Delta-1)$-edge coloring problem with $O(n)$ bits of communication and $O(1)$ rounds, and we show a matching lower bound of $\Omega(n)$ bits on the communication complexity. 
%and $(2\Delta-1)$-edge coloring problems in the two-party communication model. For $(\Delta+1)$-vertex coloring, we present a protocol with $O(n)$ expected communication bits and $O(\log \log n \log \Delta)$ rounds, significantly reducing the round complexity. 
To the best of our knowledge, this is the first nontrivial $(2\Delta-1)$-edge coloring communication protocol.  %using $O(n)$ expected communication bits. 

We emphasize that, for the $(\Delta+1)$-vertex coloring problem, both parties must output colorings for all vertices, whereas for the $(2\Delta-1)$-edge coloring problem, each party only needs to output colorings for its respective edges.

%In this paper, we focus on optimizing both communication and round complexity for the $(\Delta+1)$-vertex coloring and $(2\Delta-1)$-edge coloring problems in the two-party communication model. For $(\Delta+1)$-vertex coloring, we present a protocol with $O(n)$ expected communication bits and $O(\log \log n \log \Delta)$ rounds, significantly reducing the round complexity. To the best of our knowledge, we provide the first nontrivial $(2\Delta-1)$-edge coloring protocol using $O(n)$ expected communication bits. Note that in the $(\Delta+1)$-vertex coloring problem, both parties must output colorings for all vertices, while in the $(2\Delta-1)$-edge coloring problem, they only need to output colorings for their respective edges.

% In this paper, we introduce a new randomized algorithm for the $(\Delta + 1)$-coloring communication problem. While a lower bound of $\Omega(n)$ bits complexity has been established in previous work by \citet{10.1145/3662158.3662796}, the round complexity of such protocols has not been previously studied.\yijun{Discuss some prior work about the round complexity of communication protocols, so that it becomes clear that there is significant interest in optimizing the round complexity of communication protocols. There should be many such papers, see e.g., \cite{huang2021communication}.} We aim to address this gap by presenting a protocol that runs in $O(\log \log n \cdot \log \Delta)$ rounds in the worst case while maintaining an expected communication complexity of $O(n)$ bits.

\paragraph{Vertex coloring:} Our result for the $(\Delta+1)$-vertex coloring problem is stated in the following theorem:%, and that for the $(2\Delta-1)$-edge coloring problem is presented in the subsequent theorem.

\begin{restatable}{theorem}{mainvertex}\label{thm:vertex}
    There exists a randomized protocol that, given an $n$-vertex graph $G$ with maximum degree $\Delta$, finds a $(\Delta + 1)$-vertex coloring of $G$ using $O(n)$ bits of communication in expectation and completes in $O(\log \log n \cdot \log \Delta)$ rounds in the worst case.
\end{restatable}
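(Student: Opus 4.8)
The plan is to combine a round-efficient palette-sparsification / sampling approach with the communication-optimal greedy finishing step from \citet{10.1145/3662158.3662796}. At a high level, I would process the vertices in $O(\log\log n)$ "levels," where in each level the uncolored vertices attempt to color themselves by sampling a small random set of colors from their current palette and sharing only those samples across the cut, while the expensive sequential greedy step is reserved for a residual set whose total size is $O(n/\operatorname{polylog} n)$, so that even $O(\operatorname{polylog} n)$ bits per residual vertex costs only $O(n)$ bits overall.

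First I would set up the model carefully: Alice holds edge set $\mathbf{E}_A$ and Bob holds $\mathbf{E}_B$, both know $n$ and $\Delta$, and each vertex has a known ID. Define for each vertex $v$ its palette $\Psi(v) = \{1,\dots,\Delta+1\}$ initially, shrinking as neighbors get colored. Second, I would describe one \emph{level} of the protocol. Each still-uncolored vertex $v$ samples $\Theta(\log n)$ colors from $\Psi(v)$ uniformly at random; the two parties exchange these sampled sets (this is $O(\log n)$ bits per vertex, hence $O(n\log n)$ bits per level — too much if done for all $n$ vertices at every level, so the key is that the number of uncolored vertices drops doubly-exponentially). Using the sampled colors, each vertex picks a tentative color not conflicting with tentative colors of higher-priority neighbors, using the standard \randcolortrial-style argument: a vertex succeeds with constant probability, conditioned on its slack being comparable to its uncolored-degree. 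I would invoke the standard fact that after one such trial the uncolored vertices form an induced subgraph in which, with high probability, a constant fraction have been removed; iterating the sparsification driver (as in the Assadi–Chen–Khanna palette sparsification framework, adapted so that color sets rather than edges are communicated) reduces the uncolored count from $n$ to $n/2$ to $n/4^2$ to $n/4^4$, i.e., below $n/\operatorname{polylog} n$ after $O(\log\log n)$ levels. Each level costs $O(\log \Delta)$ rounds for the within-level coordination (binary search for a free sampled color, priority comparisons), giving $O(\log\log n \cdot \log\Delta)$ rounds total, and the communication per level is $O(n_i \log n)$ where $n_i$ is the current uncolored count; since $\sum_i n_i \log n = O(n)$ by the doubly-exponential decay, the total is $O(n)$ in expectation.

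Third, for the residual set $R$ with $|R| = O(n/\operatorname{polylog} n)$, I would run the communication-optimal sequential $(\Delta+1)$-coloring protocol of \citet{10.1145/3662158.3662796} restricted to $R$ (each vertex in $R$ still has a nonempty palette since at most $\Delta$ neighbors are colored); this costs $O(|R|\cdot\operatorname{polylog} n) = O(n)$ bits, and although it is round-inefficient its round count is $O(|R|) = O(n)$ — wait, that breaks the round bound, so instead I must ensure $|R|$ is itself polylogarithmic or run a round-efficient deterministic finisher like the $O(\log^4 n)$-bit, $O(\log\Delta\log\log\Delta)$-round protocol of \citet{assadi2023coloring} on $R$, which contributes only $O(\log\log n\cdot\log\Delta)$ additional rounds and $O(|R|\log^4 n) = O(n)$ bits. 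Fourth, I would assemble the pieces: both parties can reconstruct the full coloring since every color assignment (tentative and final) was communicated, and correctness follows because at each step a vertex only commits to a color absent from its palette-restricting neighbors.

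The main obstacle I anticipate is the \emph{slack / degree-drop coupling}: the one-round success analysis of a color trial requires that a vertex's available palette size stays proportional to its uncolored degree, and controlling this across $O(\log\log n)$ adaptive levels — where conditioning on survival can distort the distribution of remaining colors — is delicate. The standard fix is to track a potential function (e.g., the sum over uncolored vertices of $\log(\text{uncolored-degree})$ or to use the Chernoff-with-bounded-dependence machinery from the distributed $(\Delta+1)$-coloring literature) and to argue that with high probability the invariant "$|\Psi(v)| \geq (\text{uncolored deg of }v) + 1$" is maintained; I would need to port that argument into the communication setting, being careful that the sampled-color restriction (only $\Theta(\log n)$ colors are "visible") does not destroy enough slack to matter, which holds because a $\Theta(\log n)$ sample hits a free color with probability $1 - 1/\operatorname{poly}(n)$ whenever a constant fraction of the palette is free. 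A secondary subtlety is bounding the \emph{expected} (not worst-case) communication: I would let the protocol fall back to the trivial $O(n\operatorname{polylog} n)$-bit deterministic greedy protocol whenever the doubly-exponential decay fails, which happens with probability $1/\operatorname{poly}(n)$, contributing $o(1)$ to the expectation.
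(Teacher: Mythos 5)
Your high-level shape (random color trials for $O(\log\log n)$ levels, then a polylog-cost finisher on a residual set of size $O(n/\polylog n)$) is the same as the paper's, and your fallback to palette sparsification for the residual is also what the paper does. But there is a genuine gap in the middle that breaks the communication bound, and it sits exactly where the paper's main technical work lives.

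The gap: you propose that in each level, every still-uncolored vertex samples $\Theta(\log n)$ colors from its palette and \emph{exchanges these sampled sets across the cut}. Each sampled color needs $\Theta(\log\Delta)$ bits to name, so this costs $\Theta(\log n \cdot \log\Delta)$ bits per uncolored vertex per level. In level one, all $n$ vertices are uncolored, so level one alone already costs $\Theta(n\log n\log\Delta)$ bits — far above the $O(n)$ target, and no amount of decay in subsequent levels can repair that. (You flag this yourself in passing but then lean on a ``doubly-exponential decay'' to save the sum; even granting that decay, it doesn't help level one, and in any case the standard random-color-trial only gives single-exponential decay $\left(\tfrac{23}{24}\right)^i$ in the probability a vertex stays uncolored — that is all the paper proves, and it suffices for $|Z| = O(n/\log^4 n)$ after $O(\log\log n)$ iterations.) The paper sidesteps the whole issue by never sampling a batch of colors to exchange. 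Instead it uses a \colorsamp primitive (a small modification of Flin–Mittal's $k$-\textsc{Slack}-\textsc{Int} protocol) that samples a \emph{single} available color uniformly at random at an \emph{expected} cost of $O(\log^2(\Delta/k))$ bits when $k$ colors are free, which is $O(1)$ bits whenever the vertex has $\Omega(\Delta)$ slack.

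The second missing ingredient is the argument that the total expected Color-Sampling cost per vertex across all $O(\log\log n)$ iterations is $O(1)$. The paper splits vertices by initial degree: for $\deg(v)\le\Delta/2$ the slack $k_v$ is always $\ge\Delta/2$, so each call is $O(1)$ bits and the exponentially decaying survival probability gives $O(1)$ total. For $\deg(v)>\Delta/2$, it tracks the \emph{active-degree} (number of still-uncolored neighbors) and shows via Chernoff that with probability $1-O(1/\log^2\Delta)$ it stays $\ge\Delta/2^{2i}$ in round $i$ (because idle neighbors each iteration keep it from collapsing), making the per-call cost $O(i^2)$; the remaining $O(1/\log^2\Delta)$ bad event is charged the worst-case $O(\log^2\Delta)$ bits, which cancels. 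Your proposal gestures at a potential-function or ``slack/degree-drop coupling'' argument here, correctly identifying the delicacy, but does not supply the actual bound — and it is this bound, together with \colorsamp, that makes the communication $O(n)$ rather than $O(n\operatorname{polylog} n)$.
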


We compare \Cref{thm:vertex} with previous works. 
Some existing streaming algorithms already implicitly imply communication protocols in the two-party communication model.
The $(\Delta+1)$-vertex coloring problem admits a one-round randomized communication protocol with a communication complexity of $O(n \log^3 n)$ bits, due to the one-pass streaming algorithm of \citet{assadi2019sublinear},\footnote{{The transformation from a one-pass streaming algorithm to a \emph{one-round} communication protocol relies on the assumption that, in one round, Alice and Bob can each send a message to the other simultaneously.}} while its lower bound stands at $\Omega(n)$ bits~\cite{10.1145/3662158.3662796}. On the other hand, the deterministic $O(\log \Delta \cdot \log \log \Delta)$-pass streaming algorithm using $O(n \log^2 n)$ bits of space for $(\Delta+1)$-vertex coloring by \citet{assadi2023coloring} implies a deterministic communication protocol with $O(n \log^2 n \cdot \log \Delta \cdot \log \log \Delta)$ communication cost. %\hung{I think it should be \(n \log^4 n\) as stated in their paper. This equals to round times space in streaming model.}\gopinath{Should it be $O\left(n \log^2 n \cdot \log \Delta \cdot \log \log \Delta \right)$ or $O(n \log^4 n)$?}
%\footnote{Communication lower bounds have played a crucial role in deriving a broad spectrum of impossibility results across various domains, including distributed computing \cite{sarma2011distributed,bachrach2019hardness}, streaming \cite{indyk2003tight,miltersen1998data}, property testing \cite{DBLP:journals/toct/BlaisCG19,DBLP:conf/approx/EdenR18}, data structures \cite{miltersen1998data}, and circuit complexity \cite{karchmer1990monotone}. Therefore, it is anticipated that improved communication protocols with enhanced bit complexity can be achieved compared to space or communication costs in other models, particularly in the streaming context.} 
\citet{10.1145/3662158.3662796} closed the gap between nearly linear upper bounds and the exact linear lower bound by designing a communication protocol with $O(n)$ bits of expected communication. Similar gaps between nearly linear upper bounds and exact linear lower bounds have been previously observed in the communication complexity for many problems, such as  \emph{connectivity, minimum cut, and maximum matching}. See \Cref{sec:related} for details. Though the algorithm of Flin and Mittal is communication-efficient, it spends $O(n)$ rounds in expectation, implying that the protocol is not round-efficient. Our result on the $(\Delta+1)$-vertex coloring problem in \Cref{thm:vertex} significantly improves the round complexity as compared to Flin and Mittal, while maintaining the same communication cost. That is, our protocol is both communication-optimal and round-efficient.%\yijun{Also compare our work with \citet{assadi2023coloring}}

\paragraph{Edge coloring:} Our second result is a {deterministic} communication protocol to find a $(2\Delta-1)$-edge coloring, as stated in the following theorem:

\begin{restatable}{theorem}{mainedgeupper}\label{thm:edge}
    There exists a {deterministic} protocol that, given an $n$-vertex graph $G$ with maximum degree $\Delta$, finds a $(2\Delta - 1)$-edge coloring of $G$ using $O(n)$ bits of communication  and completes in {$O(1)$} rounds. %\hung{Do we need "in the worst case here?} \yijun{I think no need for deterministic protocols}\gopinath{Yes!}
    %\farrel{The worst-case round complexity is wrong?}\yijun{Should be correct?}\hung{Yes, I think it should be correct.}
    %\yijun{Can you do $O(\log \log n)$ rounds and then now the number of remaining uncolored vertices are just $O(n / \log n)$ in expectation so we can brute force in one round? Actually... I am thinking even $O(\log^\ast \Delta)$ is doable by sampling more colors in subsequent rounds}
\end{restatable}
To the best of our knowledge, no non-trivial $(2\Delta - 1)$-edge coloring protocol with a communication cost of \(o(n^2)\) was known before our work. However, for \(O(\Delta)\)-edge coloring, there is a one-round randomized  communication protocol with a communication cost of \(\widetilde{O}(n \sqrt{\Delta})\), which follows from the corresponding one-pass $W$-streaming algorithm \cite{DBLP:conf/icalp/SaneianB24}. \comments{In the $W$-streaming model, the output is also allowed to be reported in a streaming fashion, so the space complexity can be smaller than the output size. For the edge coloring problem, the $W$-streaming model is a natural model since the output size is as large as the input, making it impractical to store the entire output and return it all at once.}

\comments{In contrast to \Cref{thm:edge}, we observe that \((2\Delta)\)-edge coloring can be solved deterministically using no communication.}

\begin{theorem}\label{thm:2delta}
    There exists a deterministic protocol for \((2\Delta)\)-edge coloring without communication.
\end{theorem}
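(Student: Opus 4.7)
The plan is to have each party independently compute a proper edge coloring of its own subgraph via a deterministic algorithmic realization of Vizing's theorem (e.g., the Misra--Gries procedure), drawing colors from two pre-agreed palettes so that no bits are ever exchanged. As a fixed convention Alice uses $P_A = \{1, \ldots, \Delta+1\}$ and Bob uses $P_B = \{\Delta+2, \ldots, 2\Delta+2\}$; this partition is part of the protocol and requires no communication. Each party then runs Misra--Gries on its own subgraph (whose maximum degree is at most $\Delta$), producing a proper $(\Delta_i+1)$-edge coloring with colors from its palette.

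Correctness is immediate: at every vertex $v$, Alice's $d_A(v)$ edges get distinct colors from $P_A$ (by the Misra--Gries guarantee on $G_A$), Bob's $d_B(v)$ edges get distinct colors from $P_B$, and since $P_A \cap P_B = \varnothing$ the two local colorings cannot collide across parties. Hence the union is a proper edge coloring of $G = (V, E_A \cup E_B)$.

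The number of colors used by the direct construction is $|P_A| + |P_B| = 2(\Delta+1)$. To match the exact $2\Delta$ bound stated in the theorem, I would tighten the scheme by overlapping the two palettes in two slots, say $P_A = \{1, \ldots, \Delta+1\}$ and $P_B = \{\Delta, \ldots, 2\Delta\}$, so that $|P_A \cup P_B| = 2\Delta$. The degree-budget inequality $d_A(v) + d_B(v) \leq \Delta$ guarantees there is always enough room for the two parties' color sets at $v$ to be disjoint in a $2\Delta$-size universe. The principal obstacle is then to arrange each party's Misra--Gries execution so that the two shared colors $\{\Delta, \Delta+1\}$ are used by Alice only at edges whose endpoints have $d_B(v)=0$, and symmetrically for Bob. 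This can be enforced by biasing Misra--Gries toward low (resp. high) indices and appealing to the fact that Vizing's ``extra'' color need only appear at edges between saturated vertices of the party's own subgraph---where by the degree budget the opposing party has no incident edges, ruling out any overlap-slot conflict.
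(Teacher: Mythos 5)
Your first construction (disjoint palettes of size $\Delta+1$ each, Misra--Gries locally) is correct but uses $2\Delta+2$ colors, so it does not prove the theorem; everything therefore rests on the palette-overlap refinement, and that step has a genuine gap. You require that Alice use the shared colors $\{\Delta,\Delta+1\}$ only on edges both of whose endpoints satisfy $d_B(v)=0$. This is not always achievable. Take $\Delta=3$ and let Alice hold a star $\{u,v_1\},\{u,v_2\},\{u,v_3\}$ while Bob holds one edge $\{v_i,w_i\}$ at each leaf; then every Alice edge has an endpoint with $d_B>0$, so under your rule all three mutually adjacent star edges would have to be colored from $\{1,\dots,\Delta-1\}=\{1,2\}$, which is impossible. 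The underlying error is the appeal to ``saturated vertices of the party's own subgraph'': saturation in $G_A$ means $d_A(v)=\Delta_A$, the maximum degree \emph{of Alice's subgraph}, and only the stronger condition $d_A(v)=\Delta$ (the global maximum degree) lets you invoke the degree budget to conclude $d_B(v)=0$. When $\Delta_A<\Delta$, or when a full-degree vertex is joined to low-Alice-degree neighbors as in the star, the argument breaks, and Alice has no way to learn which other vertices have $d_B(v)=0$ without communication.

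For comparison, the paper reaches $2\Delta$ by a different decomposition: the $2\Delta$ colors are split into two \emph{disjoint} palettes of size exactly $\Delta$, and each party first peels off (sequentially, updating degrees) the edges joining two vertices of current degree $\Delta$ in its own subgraph. These deferred edges form a matching whose endpoints provably have no edges on the other side, so they can be colored with the \emph{other} party's palette without conflict; the remaining subgraph has its degree-$\Delta$ vertices forming an independent set and is $\Delta$-edge-colorable by \Cref{thm:independent} (Fournier's strengthening of Vizing). If you want to salvage your scheme, you need some mechanism of this kind---borrowing from the other party's palette only at vertices where the global degree bound certifies the other party is absent---rather than overlapping the palettes.
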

%\gopinath{May be having this result in the introduction when discussing our contribution would be nice.}\yijun{I agree}

\begin{proof}
We use an extension of Vizing's theorem by \citet{fournier1973colorations} which states that if the vertices of maximum degree \(\Delta\) form an independent set, then the number of colors needed for edge coloring is \(\Delta\). Alice and Bob partition the set of colors in half, with each party having a palette of \(\Delta\) colors. Because the edges are partitioned between the two parties; vertices of degree \(\Delta\) in one party have no edges in the other party. Each party sequentially identifies edges connecting two vertices of degree \(\Delta\) and colors them using any color from the other party's palette. Consequently, the remaining subgraph for each party satisfies the condition that no vertices of degree \(\Delta\) are connected, allowing them to $\Delta$-edge color their respective subgraph utilizing their respective palettes of size $\Delta$.
%Initially, each party finds a maximum matching of vertices of degree \(\Delta\) in their respective graphs and colors the edges in this matching using \emph{any} color from the other party's palette. The remaining subgraphs of both parties satisfy the condition that no pair of vertices of degree \(\Delta\) is connected. Therefore, by \Cref{thm:independent}, a proper edge coloring with \(\Delta\) colors exists, allowing them to use their respective sets of colors to color their remaining edges.
 %This approach enables us to establish a simple zero-communication protocol for the \(2\Delta\)-edge coloring
\end{proof}

%Hence, \Cref{thm:edge} is the first (non-trivial) result within the framework of two-party communication. 

Our protocol for the $(2\Delta - 1)$-edge coloring problem is both round-efficient and communication-optimal. To demonstrate the optimality of the communication cost of \Cref{thm:edge}, we prove the following theorem.

\begin{restatable}{theorem}{edgecoloringlower}\label{them:lower} 
   Any randomized protocol that finds a $(2\Delta - 1)$-edge coloring of the input graph with probability $1/2$ requires $\Omega(n)$ bits of communication in the worst case.
\end{restatable}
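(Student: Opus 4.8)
The plan is to prove the lower bound by a reduction from a well-understood two-party communication problem, most naturally \textsc{Set-Disjointness} or (even simpler) \textsc{Index}, whose randomized communication complexity is $\Omega(n)$. The intuition is that a $(2\Delta-1)$-edge coloring of a bipartite-style gadget graph forces the two players to "agree" on enough local structure that one of them can decode a linear-size input held by the other. Concretely, I would build a graph on $\Theta(n)$ vertices in which Alice's edges and Bob's edges share a common set of vertices whose incident colors are tightly constrained; a valid edge coloring then pins down, say, the parity or identity of one bit per vertex, and these bits encode the hard instance.

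The key steps, in order: (1) Fix the gadget. A clean choice is to take $\Delta$ small (even $\Delta = O(1)$ works if we take $\Theta(n)$ disjoint copies, since the lower bound is in $n$, not $\Delta$), say a graph that is a disjoint union of $\Theta(n)$ small components, where in each component Alice holds some edges and Bob holds an edge or two touching the same low-degree vertex. With a palette of size $2\Delta-1$, the color on Bob's edge at a shared vertex $v$ is forced to avoid the $\le \Delta-1$ colors Alice used at $v$ and the $\le \Delta - 1$ colors used at Bob's other endpoint, leaving essentially one "free" choice that is determined by a hidden bit. (2) Encode the hard instance: given an instance $(x,y)$ of \textsc{Index} (Alice holds $x \in \{0,1\}^{\Theta(n)}$, Bob holds an index $i$ and wants $x_i$), design the $i$-th component so that the color Bob must output on his edge reveals $x_i$ — e.g., Alice's private edges at the shared vertex are colored according to $x_i$ so that the unique legal color for Bob's edge is $c_0$ if $x_i=0$ and $c_1$ if $x_i=1$. (3) Transfer the bound: any protocol computing a valid $(2\Delta-1)$-edge coloring with success probability $\ge 1/2$ lets Bob read off $x_i$ from his own output (recall each party outputs the colors of its own edges), hence solves \textsc{Index} with probability $\ge 1/2$; since $\mathsf{R}(\textsc{Index}_m) = \Omega(m)$, the coloring protocol needs $\Omega(n)$ bits. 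One should double-check the $\Delta$ and vertex-count bookkeeping so that the gadget graph genuinely has maximum degree $\Delta$ and $n$ vertices and the encoded instance has length $\Omega(n)$.

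The main obstacle I anticipate is making the "forced color" argument genuinely rigid: with $2\Delta - 1$ colors there is quite a lot of slack in a generic edge coloring, so a naive gadget will let Alice (or Bob) wriggle out of encoding the bit by choosing a different-but-still-legal coloring of her private edges. The fix is to design the gadget so that \emph{regardless} of how Alice colors her edges, the set of colors she blocks at the shared vertex is forced (or the relevant bit is recoverable from Bob's output no matter what) — for instance by using rigid sub-gadgets (short paths/cycles or small bicliques) where the edge-coloring constraints essentially propagate deterministically, a standard trick in streaming/communication coloring lower bounds. An alternative, possibly cleaner route is to reduce from a coloring-flavored hard problem directly (in the spirit of the $\Omega(n)$ bound for $(\Delta+1)$-vertex coloring in \cite{10.1145/3662158.3662796}): partition a clique or near-clique's edges between the players so that outputting a proper edge coloring is tantamount to solving a linear-size disjointness instance. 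I would first try the \textsc{Index}-based gadget for its simplicity, and fall back on \textsc{Disjointness} with a clique gadget if the rigidity of the sub-gadgets proves awkward.
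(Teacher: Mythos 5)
There is a genuine gap here, and it is precisely the one the paper flags as the reason a reduction-based proof does not carry over from vertex coloring to edge coloring. Your plan hinges on building a gadget that forces Alice, via her \emph{own} coloring choices, to block a $x_i$-dependent set of colors at a shared vertex $v$, so that Bob can read $x_i$ off the color of his own edge at $v$. But the degree of $v$ in Alice's subgraph is some fixed $d_A$, and Alice is free to choose \emph{which} $d_A$ of the $2\Delta-1$ colors she uses at $v$. Since her gadget components are vertex-disjoint, and since her own subgraph has maximum degree $\le \Delta$ with a palette of $2\Delta-1$ colors, she can always properly color her private edges so that the set of colors blocked at $v$ is, say, $\{1,\ldots,d_A\}$ --- \emph{independently of $x_i$}. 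In the vertex coloring lower bound of Flin--Mittal the encoding is structural (the $C_4$ gadget) and Bob must output all vertex colors, so the 3-coloring itself betrays the edge pattern. In the edge coloring problem each party only outputs colors for its own edges, which severs exactly the information channel your reduction relies on. Your anticipated fix ("design the gadget so that regardless of how Alice colors her edges, the set of colors she blocks is forced") is therefore not achievable by rigidifying a small subgraph: color-permutation freedom at a vertex of degree $d_A < 2\Delta - 1$ can never be removed. Your fallback route (clique gadget / disjointness) has the same problem.

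The paper's actual argument is of a completely different flavor, and you would need to switch to something like it. Rather than a reduction, it (i) constructs a constant-sized two-party \emph{collision game} on $9$ vertices with $\Delta = 2$, where Alice's and Bob's inputs are each a random pair of edges into a shared set of $7$ middle vertices; a counting/pigeonhole argument over the possible labelings of the middle vertices shows that \emph{any} zero-communication (randomized) strategy fails with probability at least a fixed constant; (ii) applies Raz's parallel repetition theorem to $n$ independent copies, giving a $2^{-\Omega(n)}$ bound on the joint zero-communication success probability; and (iii) observes that an $o(n)$-bit protocol would yield, by having both parties independently guess the transcript, a zero-communication protocol succeeding with probability $2^{-o(n)}$, a contradiction. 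The crucial point is that this argument exploits the randomness of \emph{both} players' inputs and the need for their outputs to be mutually consistent at shared vertices; it does not require forcing a unique coloring of anyone's subgraph, which is what a reduction from \textsc{Index} would need and what the slack in $(2\Delta-1)$ colors precludes.
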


The number of colors in the lower bound of \Cref{them:lower} is optimal in the following sense: If we use more than $2\Delta-1$ colors, then an edge coloring can be computed without any communication, due to \Cref{thm:2delta}.  

The lower bound of \Cref{them:lower} applies not only to deterministic algorithms but also to randomized algorithms with constant success probability. Furthermore, by Markov's inequality, it extends to \emph{Las Vegas} algorithms as well.

\begin{corollary}\label{cor-lb-main}
    Any protocol that solves the $(2\Delta - 1)$-edge coloring problem requires $\Omega(n)$ bits of communication in expectation.%\yijun{Can move this to the intro, as this is just a corollary with a short proof}
\end{corollary}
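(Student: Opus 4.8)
The plan is to reduce the expected-communication statement to the worst-case, bounded-error lower bound of \Cref{them:lower} via a standard Markov-inequality truncation argument. Suppose $\Pi$ is any randomized protocol that solves the $(2\Delta-1)$-edge coloring problem, i.e.\ on every input graph $G$ it outputs a proper $(2\Delta-1)$-edge coloring with probability $1$, and let $T = T(n)$ denote the maximum over all $n$-vertex inputs of the expected number of bits communicated by $\Pi$. The goal is to show $T = \Omega(n)$.

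First I would fix an arbitrary input $G$ and apply Markov's inequality: since the expected number of communicated bits is at most $T$, the probability that $\Pi$ communicates more than $2T$ bits on $G$ is at most $1/2$. Next I would define a truncated protocol $\Pi'$ that runs $\Pi$ but halts as soon as $2T$ bits have been exchanged; upon such a premature halt, both parties simply output an arbitrary fixed coloring of their edges (e.g.\ color every edge with color $1$), and otherwise they output whatever $\Pi$ outputs. Because the length of the transcript so far is common knowledge to Alice and Bob at every point of the protocol, and $T$ is a fixed quantity depending only on $n$, this truncation is implementable in the two-party communication model without any additional communication, and $\Pi'$ communicates at most $2T$ bits on every input in the worst case.

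Then I would argue that $\Pi'$ still succeeds with probability at least $1/2$ on every input $G$: conditioned on the event that $\Pi$ exchanges at most $2T$ bits on $G$ — an event of probability at least $1/2$ by the Markov bound above — the protocol $\Pi'$ faithfully simulates $\Pi$ to completion and therefore outputs a proper coloring, since $\Pi$ is always correct. Hence $\Pi'$ is a protocol that finds a $(2\Delta-1)$-edge coloring with probability at least $1/2$ and has worst-case communication at most $2T$. Applying \Cref{them:lower} to $\Pi'$ gives $2T = \Omega(n)$, and therefore $T = \Omega(n)$, as claimed. (If a success probability bounded strictly away from $1/2$ is desired, truncating at $cT$ bits for a constant $c > 2$ yields success probability at least $1 - 1/c$ while only affecting the constant in the final bound.)

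The argument is entirely routine; the only point requiring a little care — and the closest thing to an obstacle — is checking that the truncation is legitimately realizable within the model, namely that both parties can detect the $2T$-bit threshold without extra communication (which holds because the transcript is shared), and that overriding the output on abort cannot decrease the success probability below the $1/2$ guaranteed on the no-abort event.
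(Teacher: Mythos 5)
Your proposal is correct and follows essentially the same route as the paper: truncate the expected-communication protocol via Markov's inequality to obtain a worst-case bounded-communication protocol succeeding with probability at least $1/2$, and then invoke \Cref{them:lower}. You spell out the implementability of the truncation in more detail than the paper's one-line proof, but the argument is identical in substance.
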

\begin{proof}
    If there exists a protocol that finds a $(2\Delta - 1)$-edge coloring of the input graph using $o(n)$ expected bits of communication, then we can construct another protocol that finds a $(2\Delta - 1)$-edge coloring using $o(n)$ bits of communication in the worst case, with a success probability of at least ${1}/{2}$ by Markov's inequality. This would contradict \Cref{them:lower}.
\end{proof}

The communication complexity lower bounds in the two-party communication model translate to a space lower bound in the standard streaming setting, where output size also counts toward space usage. Therefore, \Cref{cor-lb-main} implies an $\Omega(n)$-bit space lower bound for the $(2\Delta-1)$-edge coloring problem in the standard streaming setting. As already mentioned, in the context of edge coloring, the $W$-streaming model is more natural. However,  the reduction from the two-party communication model to the $W$-streaming model is not direct, specifically in the context of the $(2\Delta-1)$-edge coloring problem.

Let us consider a slight modification of the $(2\Delta-1)$-edge coloring problem in the two-party communication model. Instead of Alice and Bob being required to report the colors of their respective edges, consider a \emph{weaker} requirement that the color of each edge needs to be reported by  one of the players. The proof of \Cref{them:lower} can be modified so that we have $\Omega(n)$ bits of communication complexity for the modified version of  the $(2\Delta-1)$-edge coloring problem. The details will be discussed in \Cref{sec:lowerbound}. Moreover, one can show a reduction from the modified version of the $(2\Delta-1)$-edge coloring problem in the two-party communication model to the $(2\Delta-1)$-edge coloring problem in the $W$-streaming model. Hence, we have the following corollary:
%\gopinath{Is the above description fine here. Or shall we move them to \Cref{sec:lowerbound}?}

\begin{restatable}{corollary}{wstream}\label{coro:w}
    Any constant-pass {randomized} $W$-streaming algorithm that solves the $(2\Delta - 1)$-edge coloring problem requires $\Omega(n)$ bits of space in expectation.
\end{restatable}
There are a few works that deal with the upper bound for the edge coloring problem in the $W$-streaming model \cite{behnezhad2019streaming,charikar2021improved,ansari2022simple, DBLP:conf/icalp/SaneianB24}. As discussed earlier, the best-known upper bound in this context is an $O(\Delta)$-edge coloring randomized algorithm in the $W$-streaming model that uses space $\widetilde{O}(n\sqrt{\Delta})$~\cite{DBLP:conf/icalp/SaneianB24}. However, prior to our work, as stated in \Cref{coro:w}, no non-trivial space lower bound was known, even for the $(2\Delta-1)$-edge coloring problem in the $W$-streaming model.

\color{black}
%\gopinath{The above text is to take care of the second comment of the third reviewer.}

%\paragraph{Open question:} An interesting open question is to design protocols for $(\Delta + 1)$-vertex coloring and $(2\Delta - 1)$-edge coloring that utilize \(O(n)\) bits of communication cost while maintaining an asymptotically minimum number of rounds. Although our protocols are round-efficient, we do not know if they achieve the optimal number of rounds while maintaining \(O(n)\) bits of communication.\hung{Should I add another interesting open question is finding deterministic or high probability protocol?}\yijun{My original suggestion is to expand this paragraph into a section at the end of the paper, where we discuss about deterministic or high probability protocols, smaller number of colors, and maybe other open questions.}

\subsection{Additional related work}\label{sec:related}

%\paragraph{Graph problems in communication complexity:}
Graph problems have been a central topic in the exploration of communication complexity. Within this framework, numerous fundamental graph problems exhibit a range of upper and lower bounds, along with varying round complexities, all quantified in bits. For the {connectivity} problem, there exists an $O(n \log n)$-bit communication protocol where both players transmit the spanning forests of their respective subgraphs in a single round, resulting in an upper bound of $O(n \log n)$ bits. The lower bound for connectivity in this context is $\Omega(n)$ bits \cite{hajnal1988communication}, suggesting that the current upper bound is nearly optimal. \citet{assadi2021simple} demonstrated that the {minimum cut} problem can be addressed with an $O(n \log n)$-bit randomized communication protocol that operates in 2 rounds; however, the lower bound is notably stronger at $\Omega(n \log \log n)$ bits. This highlights a discrepancy between the upper and lower bounds regarding both communication complexity and round complexity. For the {maximum matching} problem in bipartite graphs, an established deterministic communication protocol requires $O(n \log^2 n)$ bits \cite{blikstad2022communication} and necessitates $n$ rounds. The lower bound for this problem stands at $\Omega(n)$ bits, as evidenced by the work of \cite{babai1986complexity,razborov1992distributional,ivanyos2012new}. Notably, all of the aforementioned problems have deterministic lower bounds of $\Omega(n \log n)$ bits \cite{hajnal1988communication,blikstad2022communication}. While the protocols discussed above exhibit near-linear randomized complexity and are efficient, bridging the gap between the nearly linear upper bounds and the linear randomized lower bounds, along with developing algorithms that are round-efficient, remains a significant open challenge in communication complexity. Recently, \citet{10.1145/3662158.3662796} revealed that $\Theta(n)$ bits of communication suffice for the $(\Delta+1)$-vertex coloring problem. However, the expected round complexity of their protocol is $O(n)$, which indicates that it is not round-efficient.

As noted previously, in one-pass streaming, the $(\Delta + 1)$-vertex coloring problem can be solved using $O(n \log^3 n)$ bits of space \cite{assadi2019sublinear}. Furthermore, a deterministic algorithm accomplishes the same coloring task with $O(n \log^2 n)$ bits of space and $O(\log \Delta \cdot \log \log \Delta)$ passes \cite{assadi2023coloring}, highlighting the balance between space efficiency and pass complexity. An algorithm for $(\Delta + 1)$-vertex coloring that performs $\widetilde{O}(n^{3/2})$ queries in the general graph property testing model has also been presented \cite{assadi2019sublinear}. The best-known algorithm for the $(\Delta + 1)$-vertex coloring (and its generalization, the $(\mathsf{degree}+1)$-list coloring) utilizes $O(\log^{5/3} \log n)$, $O(\log^3 \log n)$, and $O(1)$ rounds in the \LOCAL, \CONGEST, and \CCLIQUE{} models of distributed computing, respectively \cite{chang2020distributed,halldorsson2022near, GhaffariGrunau2024, DBLP:conf/stoc/HalldorssonKMT21,DBLP:conf/podc/HalldorssonNT22,chang2019complexity,czumaj2021deterministic,DBLP:conf/icalp/CoyCDM23}. In the context of \emph{massively parallel computation}, both the $(\Delta + 1)$-vertex coloring and $(\mathsf{degree}+1)$-list coloring can be solved in $O(\log \log \log n)$ rounds \cite{chang2019complexity,DBLP:conf/podc/CzumajDP21,DBLP:conf/ipps/CoyCDM24}. Additionally, there is a dynamic algorithm for the $(\Delta + 1)$-vertex coloring problem that achieves constant amortized update time \cite{DBLP:journals/talg/BhattacharyaGKL22,DBLP:journals/talg/HenzingerP22}.

As already mentioned, no efficient streaming algorithm exists for the $(2\Delta - 1)$-edge coloring problem, whereas a streaming algorithm for $O(\Delta)$-edge coloring utilizes $\widetilde{O}(n\sqrt{\Delta})$ space \cite{DBLP:conf/icalp/SaneianB24}. The $(2\Delta - 1)$-edge coloring problem has algorithms with round complexities of $O(\log^3 \log n)$ and $O(\log^4 \log n)$ in the \LOCAL{} and \CONGEST{} models of distributed computing, respectively \cite{DBLP:journals/siamcomp/Harris20,DBLP:journals/tcs/HalldorssonN23}. Additionally, a dynamic algorithm for the $(2\Delta - 1)$-edge coloring problem offers $O(\log \Delta)$ update time in the worst case \cite{bhattacharya2018dynamic}.

In the \CONGEST{} model of distributed computing, there is also a line of research studying algorithms that are both round-efficient and communication-efficient~\cite{MashreghiK17, HaeuplerHW18, GmyrP18, PanduranganRS17, DufoulonPPPPR24,DufoulonPRS24}.

%{While our primary focus is on obtaining protocols that are both round- and communication-efficient, a similar line pursued in the distributed setting (particularly in \CONGEST{}), where both tradeoff between round and message complexity are studied \cite{ MashreghiK17, HaeuplerHW18, GmyrP18, PanduranganRS17, DufoulonPPPPR24,DufoulonPRS24}. Our techniques, model, and problems differ, but this shared interest in understanding the interplay between multiple efficiency measures reflects a broader theme in distributed algorithm design.}

\subsection{Paper organization}
We provide a technical overview of our results in \Cref{sec:overview}. The preliminaries are presented in \Cref{sec:prelim}. Our protocols for the $(\Delta+1)$-vertex coloring problem and the $(2\Delta-1)$-edge coloring problem are presented in \Cref{sec:vertex,sec:edge-upper}, respectively. The lower bound on the communication complexity of the $(2\Delta-1)$-edge coloring problem is presented in \Cref{sec:lowerbound}. We conclude with open problems in \Cref{sect:conclusions}. Missing proofs from \Cref{sec:prelim} are provided in \Cref{app:col-samp,app:d1lc}.

%\gopinath{To do at the end.}
\section{Technical overview}\label{sec:overview}
%\yijun{Write something before the first paragraph. The sudden transition into palette sparsification is unnatural.}

We give an overview of the proofs of \Cref{thm:vertex,thm:edge,them:lower}.

\subsection{\texorpdfstring{$(\Delta+1)$}{(Delta+1)}-vertex coloring algorithm}
%\textcolor{blue}{In progress!}

First, we briefly discuss the approach of \citet{10.1145/3662158.3662796} for solving the $(\Delta+1)$-vertex coloring problem. Alice and Bob use public randomness to select a random ordering $\pi$ of the vertices. They then color the vertices sequentially according to this order. For a given vertex $v$, Alice and Bob must find an available color from $\{1, \ldots, \Delta+1\}$ that is not already assigned to any of $v$'s neighbors. This is nontrivial because the edges incident to $v$ are divided between Alice and Bob. However, if $v$ has $k_v$ available colors, an available color can be found with $O\left(\log^2 ((\Delta+1) / k_v)\right)$ expected bits of communication and $O\left(\log ((\Delta+1) / k_v)\right)$ rounds in expectation. A slight modification of this approach allows us to sample an available color uniformly at random with the same communication and round complexity. Therefore, note that, an available color for a vertex $v$ can be sampled with $O(\log^2 \Delta)$ bits of communication and $O(\log \Delta)$ rounds; in the worst case. Since the vertices are colored in random order, meaning $k_v$ is uniformly distributed over $\{\Delta+1, \ldots, \Delta+1-\deg(v)\}$, the expected communication cost is $O(1)$ bits and the expected number of rounds required is $O(1)$ to color a vertex $v$. This implies that the expected communication complexity of the protocol of Flin and Mittal is $O(n)$ bits and the expected round complexity is $O(n)$.

% The worst case communication comround complexity is $O(n \log \Delta)$ since we are coloring one vertex at a time and finding an available color for a vertex may take $O(\log \Delta)$ rounds in the worst case.

Our approach is based on the \emph{random color trial} technique, which has been applied in the context of $(\Delta+1)$-vertex coloring and related problems in \emph{distributed computing} \cite{JOHANSSON1999229,barenboim2016locality}. In each iteration of the random color trial, every uncolored vertex participates with probability \( {1}/{2} \) and attempts to color itself. Each participating vertex \( v \) selects a color uniformly at random from the set \( \{1, \dots, \Delta + 1\} \), ensuring that the color chosen is not assigned to any of its neighbors. If a vertex successfully colors itself—i.e., no neighbor has chosen the same color—it retains that color permanently. It is straightforward to observe that, given any partial coloring of a $(\Delta+1)$-vertex coloring instance, a vertex \( v \) will always have at least one more available color than the number of its uncolored neighbors. Putting things together, one can show that the probability that a vertex gets successfully colored in an iteration is a constant (see \Cref{lem:active}). Consequently, the probability that a vertex remains uncolored after the \( i \)th iteration decreases exponentially with \( i \). Thus, running the random color trial for \( O(\log n) \) iterations on a $(\Delta+1)$-vertex coloring instance guarantees a valid coloring with high probability. Specifically, the probability of obtaining a valid solution is at least \( 1 - 1/n^c \) for some large constant \( c \).

In our protocol (in the two-party communication model), we first run the random color trial for \( O(\log \log n) \) iterations. To implement each iteration, Alice and Bob must sample an available color uniformly at random for each uncolored vertex \( v \) in parallel. As discussed earlier, the expected communication cost and the expected number of rounds required for sampling an available color uniformly at random for a vertex \( v \) are \( O(\log^2((\Delta+1) / k_v)) \) and \( O(\log((\Delta+1) / k_v)) \), respectively, where \( k_v \) is the number of available colors for \( v \). On one hand, all vertices attempt to color themselves in parallel during each iteration. On the other hand, the values of \( k_v \) can differ across vertices, and in particular, \( k_v \) can be 1 for some vertices. This means that in the worst case, the number of rounds required to implement each iteration could be \( O(\log \Delta) \). Therefore, the total number of rounds needed to execute all \( O(\log \log n) \) iterations of the random color trial is \( O(\log \log n \cdot \log \Delta) \) in the worst case.

We show that the expected cost of attempting to color a vertex \( v \) over all \( O(\log \log n) \) iterations is \( O(1) \) bits. We divide the analysis into two cases:

\begin{description}
    \item[Case 1: \( \deg(v) \leq \Delta/2 \):]
    In this case, when \( v \) participates in an iteration, the number of available colors \( k_v \) for \( v \) is greater than \( \Delta/2 \). Thus, the communication cost to sample an available color for \( v \) is \( O\left(\log^2 \frac{\Delta + 1}{k_v}\right) = O(1) \) bits. Additionally, since the probability that a vertex remains uncolored in the \( i \)th iteration decreases exponentially with \( i \), we conclude that the expected communication cost due to attempting to color vertex \( v \) (with \( \deg(v) \leq \Delta/2 \)) is \( O(1) \) bits.

    \item[Case 2: \( \deg(v) \geq \Delta/2 \):]
    In this case, \( v \) has a large initial degree. Since an uncolored vertex participates in each iteration with probability \( {1}/{2} \), we expect that, for all \( i \in O(\log \log \Delta) \), the number of uncolored neighbors of \( v \) remains \emph{large} in the \( i \)th iteration. Specifically, the probability of finding an iteration \( i \) such that the number of uncolored neighbors is less than \( {\Delta}/{\alpha^i} \) is at most \( O\left({1}/{\log^2 \Delta}\right) \), where \( \alpha > 1 \) is a suitable constant. If the number of uncolored neighbors in the \( i \)th iteration is at least \( {\Delta}/{\alpha^i} \), then \( k_v \geq {\Delta}/{\alpha^i} \), and the expected communication cost to sample an available color is \( O\left(\log^2 \frac{\Delta + 1}{k_v}\right) = O(i^2) \) bits.

    Given that the probability of \( v \) remaining uncolored in the \( i \)th iteration decreases exponentially with \( i \), and that the worst-case communication cost to sample an available color is \( O(\log^2 \Delta) \), the total expected communication cost over \( O(\log \log \Delta) \) iterations is \( O(1) \) bits. Furthermore, the expected communication cost after \( O(\log \log \Delta) \) iterations remains \( O(1) \), due to the fact that the probability of \( v \) being uncolored after these iterations is \( O\left({1}/{\log^2 \Delta}\right) \), combined with the worst-case communication cost of \( O(\log^2 \Delta) \) bits.
\end{description}

After the random color trials, we obtain a valid partial coloring of the graph \( G \). Let \( Z \) denote the set of uncolored vertices. Using the fact that the probability a vertex remains uncolored after the \( i \)th iteration decreases exponentially with \( i \), we conclude that the expected size of \( Z \) is \( O\left( {n}/{\log^4 n} \right) \). Although the leftover coloring problem constitutes a valid input coloring instance, it may not necessarily correspond to a \( (\Delta + 1) \)-coloring instance. Instead, it represents a \( (\mathsf{degree} + 1) \)-list coloring (\DLC) instance on the vertex set \( Z \). In this setting, each vertex \( v \in Z \) has a list of available colors based on the colors already assigned to its neighbors during the random color trials. In the \DLC problem, for a graph \( G = (V, E) \), each vertex has a palette of acceptable colors that exceeds its degree by at least one. The objective is to find a proper coloring using these palettes.

To color the leftover instance, we utilize the \emph{palette sparsification theorem} for \DLC, as presented by \citet{halldorsson2022near}. Alice and Bob, for each vertex \( v \in Z \), reduce the set of acceptable colors for \( v \) by independently sampling \( O(\log^2 |Z|) \) colors from its palette. This step requires \( O\left( |Z| \cdot \log^2 |Z| \cdot \log^2 \Delta \right) = O\left( |Z| \cdot \log^4 n \right) \) bits of communication and can be completed in \( O(\log \Delta) \) rounds. Next, Alice and Bob remove any edge \( \{u, v\} \) where the lists of \( u \) and \( v \) are disjoint. According to the palette sparsification theorem, the sparsified instance will have \( O\left( |Z| \cdot \log^2 |Z| \right) \) edges and is colorable with a probability of at least \( 1 - {1}/{|Z|^c} \), for some large constant \( c \). Since the expected size of \( |Z| \) is \( O\left( {n}/{\log^4 n} \right) \), this implies that the protocol for coloring the leftover \DLC instance on the vertex set \( Z \) has an expected communication cost of \( O(n) \) bits and requires \( O(\log \Delta) \) rounds in the worst case.

% By leveraging this theorem, we can show that the remaining instance can be solved with an expected communication cost of $O(n)$ bits and within $O(\log \Delta)$ rounds. This approach completes the coloring of the uncolored vertices in $Z$, ensuring that the overall coloring process remains efficient in both communication and round complexity.

%\textcolor{blue}{Here, may be we talk briefly about the algorithm of \cite{10.1145/3662158.3662796}. Then also talk very briefly about one round $O(n\log^3 n)$ protocol using PST. Then discuss our idea using random color trial and PST for list coloring problem.}\gopinath{?}

\subsection{Upper bound for \texorpdfstring{$(2\Delta-1)$}{(2Delta-1)}-edge coloring}
%We briefly discuss two well-known existing results in edge coloring. 
According to Vizing's theorem \cite{vizing1964}, the number of colors needed to edge-color a simple graph is either its maximum degree \(\Delta\) or \(\Delta + 1\). An extension of Vizing's theorem by \citet{fournier1973colorations} states that if the vertices of maximum degree \(\Delta\) form an independent set, then the number of colors needed for edge coloring is \(\Delta\). These existential results are useful in designing protocols for edge coloring. As we have already discussed, in the protocol for the \((2\Delta)\)-edge coloring problem in \Cref{thm:2delta}, we partition the set of $2\Delta$ colors into two halves of equal size: Alice’s palette and Bob’s palette, and then both parties construct the coloring of \citet{fournier1973colorations} locally.

The $(2\Delta-1)$-edge coloring problem is much more challenging. We again partition the set of \(2\Delta - 1\) colors into three subsets: two sets of size \(\Delta - 1\) each (Alice's palette and Bob's palette) and one \textsf{special} color. To apply the aforementioned coloring of \citet{fournier1973colorations}, we must ensure that the maximum degree of the graphs for Alice and Bob is \(\Delta - 1\) and that the vertices of degree \(\Delta - 1\) form an independent set. Following the same approach as for the \((2\Delta)\)-edge coloring (\Cref{thm:2delta}), we \emph{sequentially} identify and defer the coloring of edges connecting high-degree vertices (those with degree at least \(\Delta - 1\)), coloring the deferred edges with colors from the other party's palette. We consider a subgraph consisting of all the deferred edges, which, by simple reasoning, has a maximum degree of \(2\). However, we are not finished yet; in one party's remaining subgraph, there may still be vertices of degree \(\Delta\)  that are originally ``independent''—not connected to any other high-degree vertex. Therefore, the maximum degree of the remaining subgraph may still be \(\Delta\), and we cannot \((\Delta - 1)\)-edge color it.

To address this issue, we find a \emph{matching} that covers all the ``independent'' vertices of degree \(\Delta\), coloring the edges in this matching with the \textsf{special} color. However, there may be cases where two edges in the respective matchings of both parties are incident to the same vertex, preventing us from coloring all matching edges with the \textsf{special} color. In such situations, we use colors from the other party's palette. 

Consider a vertex that is incident to these two edges in both parties's respective matchings; it must have a degree of at most \(\Delta / 2\) in either party. Therefore, the number of colors occupied by edges incident to that vertex in the other party's graph is at most \(\Delta / 2\), leaving nearly half of the palette available. There exists a deterministic protocol that allows one party to know an available color from another party's palette using $O(n)$ bits and $O(1)$ rounds. This allows us to color all edges in the matching using $O(n)$ bits and $O(1)$ rounds of communication.%\gopinath{This paragraph needs to be modified.}\hung{Done}

The remaining subgraph then satisfies the conditions of the existing theorem by \citet{fournier1973colorations}: the vertices of maximum degree \(\Delta - 1\) form an independent set, allowing each party to use their respective palettes to color their subgraphs. Regarding the deferred edges, we observe that each vertex in this subgraph can be incident to at most one edge in the other party's graph and at most one edge in the matching. Additionally, the deferred subgraph has a maximum degree of \(2\). Thus, using a constant number of colors from the other party's palette is sufficient to color this deferred subgraph, and communicating this constant number of colors for each vertex can be accomplished with \(O(n)\) bits in $O(1)$ rounds of communication.

%\gopinath{Though achieving $O(n)$ bits of communication is the main goal, a brief discuss about the round complexity would be great.}\hung{Yes}

\subsection{Lower bound for \texorpdfstring{$(2\Delta-1)$}{(2Delta-1)}-edge coloring}

We discuss the lower bound proof for the $(\Delta+1)$-vertex coloring problem from \citet{10.1145/3662158.3662796}, which uses a reduction from the \emph{learning} problem, known to have a lower bound of $\Omega(n)$ bits on communication complexity. In this context, Alice holds a string $\mathbf{x} = x_1 x_2 \ldots x_n \in \{0,1\}^n$, and Bob's goal is to learn this string. For each bit $x_i$, a gadget is constructed using four vertices $a_i, b_i, c_i$, and $d_i$. The edges $\{a_i, b_i\}$ and $\{c_i, d_i\}$ are always present, while additional edges depend on the value of $x_i$: if $x_i = 0$, the edges $\{a_i, c_i\}$ and $\{b_i, d_i\}$ are included; if $x_i = 1$, the edges $\{a_i, d_i\}$ and $\{b_i, c_i\}$ are included. Alice possesses all $4n$ edges, while Bob has none, and the maximum degree of the graph is $2$. After executing the $(\Delta+1)$-vertex coloring protocol, both parties obtain a  valid 3-vertex coloring. It is not difficult to show that a $3$-vertex coloring enables Bob the recovery of all edges and revealing each $x_i$. Consequently, Bob learns $\mathbf{x}$, establishing a lower bound of $\Omega(n)$ bits on the communication complexity of the $(\Delta+1)$-vertex coloring problem.

%This implies that, if the graph obtained from $\mathbf{x}$ is the 

%We first discuss the lower bound proof for the $(\Delta+1)$-vertex coloring problem by \citet{10.1145/3662158.3662796}, and explain why we cannot naively extend the idea to prove the lower bound of $(2\Delta - 1)$-edge coloring. The proof by \citet{10.1145/3662158.3662796} uses a reduction from vertex coloring to sending $n$-bits string.\gopinath{The problem from which they reduce is that the objective of Bob is to learn the string.} Given $n$-bits string $x=(x_1,x_2,\dots,x_n)$, the referee then constructs a \textsf{special} degree-2 graph with $4n$ vertices, and partitioned the edges to Alice and Bob, such that if Bob knows a 3-vertex coloring of the graph, then Bob can recover the actual value of $x$. Specifically, for each bit $x_i$, the referee constructs a 4-vertex graph with degree 2, sends a set of edges to Alice based on $x_i$ (If $x_i$=0, the referee sends $E_1$. Else, $E_2$ will be sent), whereas Bob will always get the same set of edges regardless of $x_i$. This is constructed such that for edges $E_1$ and $E_2$ given to Alice, the 3-coloring of the 4-vertices graph will always be disjoint. Therefore, if Bob knows the 3-coloring, he can recover the bit $x_i$. 
Note that the graph constructed in the above reduction consists of the union of $ n $ copies of $ C_4 $, which has a maximum degree of 2. It is important to observe that $ C_4 $ is edge-vertex dual to itself, and for $ \Delta = 2 $, we have $ 2\Delta - 1 = \Delta + 1 $. While it might initially seem that a similar graph could be employed to establish a lower bound for the $ (2\Delta - 1) $-edge coloring, this approach cannot be directly applied. There is a fundamental distinction between the $ (\Delta + 1) $-vertex coloring problem and the $ (2\Delta - 1) $-edge coloring problem within the context of two-party communication complexity. In the $ (\Delta + 1) $-vertex coloring problem, both Alice and Bob must be aware of the colors of all the vertices in the graph. In contrast, for the $ (2\Delta - 1) $-edge coloring problem, each party only needs to know the colors of their respective edges. In the construction for the lower bound of the $ (\Delta + 1) $-vertex coloring problem, Alice possesses all the edges while Bob has none. Consequently, the solution for the $ (2\Delta - 1) $-edge coloring problem is straightforward for Alice to obtain on her own, as she does not require any information about edge colors from Bob, and also Bob does not need to know the colors of the edges of Alice. Additionally, the $ (2\Delta - 1) $-edge coloring problem is inherently \emph{easier} to solve in the following sense: a party can construct a partial solution to the edge coloring problem based solely on their own input set of edges, thereby allowing the coloring to be extended to the entire graph. However, this is not necessarily the case in the $ (\Delta + 1) $-vertex coloring problem.

Our approach to prove the lower bound constructs a graph of constant size with $\Delta=2$ where any algorithm that finds a $(2\Delta - 1)$ edge coloring (ie, $3$ edge coloring) with zero communication protocol has a constant success probability smaller than $1$. Considering $n$ independent instances of the above constant-size graph, we have an $O(n)$-vertex graph with $\Delta=2$, where the probability of any zero-communication protocol producing a valid $(2\Delta - 1)$-edge coloring is at most $2^{-\Omega(n)}$. This follows from the \emph{parallel repetition theorem} of Raz~\cite{raz1998parallel, Holenstein_2009}. Conversely, if we assume the existence of a constant-error randomized algorithm $\mathcal{P}$ that solves the $(2\Delta - 1)$-edge coloring problem with $o(n)$ bits in the worst case, a zero-communication protocol could be derived by having both parties independently guess the communication pattern of $\mathcal{P}$. This would yield a success probability of $2^{-o(n)}$, contradicting the previous upper bound of $2^{-\Omega(n)}$ on the success probability and thus establishing the lower bound. %{The approach of first considering a constant-size hard instance and then combining $\Omega(n)$ many such independent instances to get the hard instance to get the desired lower bound for the problem of consideration have been recently considered by \citet{konrad2021robust} as well as \citet{flin2024decentralized}.}

{The technique of constructing a constant-size hard instance and then combining $\Omega(n)$ independent copies to construct a larger hard instance has recently been used by \citet{konrad2021robust} and \citet{flin2024decentralized} to derive the desired lower bound for the problem of interest. Moreover, the construction of the lower bound of \citet{10.1145/3662158.3662796} for the $(\Delta+1)$-vertex coloring problem also follows the approach of combining $n$ independent hard instances of constant size. Thus, the main challenge in this approach lies in constructing a hard instance of constant size.}

\paragraph{The constant-size hard instance:} Consider the set of vertices $ V = \{v_A, v_B, v_1, \ldots, v_7\} $. Alice selects two edges from the possible set of seven edges $ \{\{v_A, v_i\} : 1 \leq i \leq 7\} $, while Bob similarly selects two edges from $ \{\{v_i, v_B\} : 1 \leq i \leq 7\} $. Notably, the maximum degree of the graph is 2. The objective for both Alice and Bob is to find a 3-coloring of the edges using colors $ \{c_1, c_2, c_3\} $. In the context of any deterministic algorithm performing this 3-edge coloring, we label each vertex $ v_i \in \{v_1, \ldots, v_7\} $ with a subset from $ \{c_1, c_2, c_3\} $ based on the algorithm's behavior. It is important to note that the execution of the algorithm from Alice's perspective is independent of that from Bob's, as there is no communication between them. Let $ \labela(v_i) $ and $ \labelb(v_i) $ represent the labels of $ v_i $ with respect to Alice and Bob, respectively. For each vertex $ v_i $, we add color $ c_j \in \{c_1, c_2, c_3\} $ to $ \labela(v_i) $ if there exists an input set of two edges that Alice could have such that the algorithm subsequently colors the edge $ \{v_A, v_i\} $ with color $ c_j $. Similarly, we include color $ c_j $ in $ \labelb(v_i) $ if there is a potential input set of two edges for Bob that results in the algorithm coloring the edge $ \{v_i, v_B\} $ with color $ c_j $. If the labels of two vertices $ x $ and $ y $ with respect to one party (say Alice) are the same singleton set, then for the input set of edges $ \{v_A, x\} $ and $ \{v_A, y\} $, the algorithm does not produce a valid coloring. Since there are three colors, there can be at most three vertices with singleton sets as their labels with respect to one party. Therefore, there must exist a vertex $ v_i $ such that $ \labela(v_i) $ and $ \labelb(v_i) $ contain at least two colors and share a common color $ c_j $. This implies there exists an input for Alice where $ \{v_A, v_i\} $ is one edge, and Alice colors $ \{v_A, v_i\} $ with color $ c_j $. Furthermore, there exists an input for Bob where $ \{v_i, v_B\} $ is one edge, and Bob colors $ \{v_i, v_B\} $ with the same color $ c_j $. Since Alice's and Bob's executions are independent, there is a joint input for both such that both $ \{v_A, v_i\} $ and $ \{v_i, v_B\} $ are edges in the input graph, and they are both colored with $ c_j $ by the algorithm. This results in a 3-edge coloring that is not proper. Although the preceding discussion addresses the deterministic lower bound, the argument extends to the randomized case as follows: both Alice and Bob are provided with two edges from the sets $ \{\{v_A, v_i\} : 1 \leq i \leq 7\} $ and $ \{\{v_i, v_B\} : 1 \leq i \leq 7\} $, respectively, chosen uniformly at random. Moreover, we add a color $ c_j $ to $ \labela(v_i) $ (or $ \labelb(v_i) $) if there exists an input set of edges for Alice (or Bob) such that the algorithm colors $ v_i $ with color $ c_j $ with a probability at least $ 1/5 $. The details of this argument are deferred to \Cref{sec:lowerbound}.

\section{Preliminaries}\label{sec:prelim}
% \paragraph{Notation.} Throughout this paper, we will work with input graphs \( G = (V, E) \) on the vertex set \( V = [n] := \{1, 2, \ldots, n\} \), with maximum degree \( \Delta \). Let \( N(v) \) denote the neighborhood of \( v \) in \( G \), and \( d_v = |N(v)| \) its degree. For any integer \( q > 1 \), a \( q \)-coloring of \( G \) is a vector \( C \in [q]^n \). We say the coloring is proper if for all edges \( \{u, v\} \in E \) we have \( C(u) \neq C(v) \).

\subsection{Notations and model} For \( t \in \mathbb{N} \), we define \( [t] := \{1, \ldots, t\} \). An undirected simple graph is denoted by \( G = (V, E) \), where \( V \) is the vertex set and \( E \) is the edge set. If there is an edge between vertices \( u \) and \( v \), it is denoted by \( \{u, v\} \). For a vertex \( v \in V \), \( N(v) \) and \( \deg(v) \) denote the set of neighbors and the degree of \( v \) in \( G \), respectively. The maximum degree of any vertex in \( G \) is denoted by \( \Delta \). In the two-party communication model, Alice and Bob receive the parameters \( n \) and \( \Delta \). The vertices of the graph $G$ are known to both parties. The edges of graph \( G \) are partitioned adversarially between the two parties, denoting the edges assigned to Alice as \( E_A \) and those assigned to Bob as \( E_B \). The neighborhoods of a vertex \( v \) in Alice's and Bob's respective graphs are defined as \( N_A(v) = \{u \in V : \{u, v\} \in E_A\} \) and \( N_B(v) = \{u \in V : \{u, v\} \in E_B\} \), respectively, such that \( N(v) = N_A(v) \sqcup N_B(v) \). In the \( (\Delta + 1) \)-vertex coloring problem, both Alice and Bob are required to output the coloring of all the vertices. However, in the \( (2\Delta - 1) \)-edge coloring problem, the objective for Alice and Bob is to output the coloring of only their respective edges. Although both parties have access to public/shared randomness, this requirement can be relaxed to the use of private randomness with only an additional \( O(\log n + \log(1/\delta)) \) bits in communication cost, where \( \delta \) represents the upper bound on failure probability \cite{newman1991private}.%\gopinath{May be the last part will goto a different preliminary.}

% \subsection{Concentration Inequalities}
% %\cite{dubhashi22concentration}

% \begin{proposition}[Chernoff Bound] Let \( X_1, X_2, \ldots, X_n \) be independent random variables in \([0, 1]\) and \( X = \sum_{i=1}^n X_i \). Then, for any \(\epsilon > 0\),
% \[
% \Pr\left[X < (1 + \epsilon) \mathbb{E}[X]\right], \ 
% \Pr\left[X > (1 + \epsilon) \mathbb{E}[X]\right] \leq \exp\left( -\frac{\epsilon^2 \mathbb{E}[X]^2}{2 + \epsilon} \right).
% \]\yijun{Check if this is the version of Chernoff bound we want. I think even our first use of Chernoff bound is not compatible with the version stated above.}\gopinath{As this is the only concentration we are using, may be we state while using for the first time}
% \end{proposition}

% \yijun{Not enough paraphrasing for the writing above in this section.}
% \begin{proposition} Basic inequality:
% $1 - x \geq e^{-\frac{x}{1-x}} \quad \text{if} \quad x < 1$
% \end{proposition}

\subsection{Sampling an available color uniformly}\label{sec:col-samp}

While solving the $(\Delta+1)$-vertex coloring problem, let us consider that we have a partial proper vertex coloring of the graph $G$. Note that both Alice and Bob know the colors assigned to the subset of already colored vertices. For an (uncolored) vertex $v$, let $A$ and $B$ denote the set of colors assigned to some vertex in $N_A(v)$ or $N_B(v)$. We refer to each color in the set $[\Delta + 1] \setminus (A \cup B)$ as an \emph{available} color for $v$. In our vertex-coloring algorithm, an important step for Alice and Bob is to \emph{sample} an {available} color uniformly at random (such that the sampled color is known to both parties without further communication). %In this section, we establish that this task can be achieved efficiently (see \Cref{lm:ksi}).

In \citet{10.1145/3662158.3662796}, the authors address the problem of \emph{finding} an arbitrary available color for a vertex. We adapt their method to instead sample an available color \emph{uniformly at random}. The key observation is that their procedure does not inherently favor any particular available color. To achieve uniform sampling, Alice and Bob first apply a random permutation to the set of colors $[\Delta + 1]$ using public randomness. They then run the algorithm from Flin and Mittal to find an available color. Since the permutation ensures that all available colors are equally likely, the resulting color is chosen uniformly at random from the set of available colors. This leads to the following lemma.

\begin{lemma}[\textbf{Sampling an available color uniformly}]
    \label{lm:ksi}
Consider the setting where we have a partial proper vertex coloring of $G$ such that both Alice and Bob know the colors assigned to already colored vertices. For a vertex $v$, let $k$ with $1 \leq k \leq \Delta+1$ denote the number of available colors. There exists a randomized protocol \colorsamp for both Alice and Bob to {sample an available color uniformly at random} such that the following conditions are satisfied.
\begin{enumerate}
    \item[(i)] The sampled color is known to both the parties.
    \item[(ii)] In expectation, it requires \( O\left(\log^2 \frac{\Delta +1 }{k}\right) \) bits of communication, In the worst case, it requires $O(\log^2 (\Delta+1))=O(\log^2 \Delta)$ bits of communication.
    \item[(iii)] In expectation, it runs in \( O\left(\log  \frac{\Delta + 1}{k}\right) \) rounds. In the worst case, it requires $O(\log (\Delta+1))=O(\log \Delta)$ number of rounds.
\end{enumerate}
\end{lemma}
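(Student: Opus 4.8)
The plan is to reduce the task to the \emph{find-an-available-color} routine of \citet{10.1145/3662158.3662796} (call it \colorfind) and to force uniformity by randomly relabeling the palette beforehand. The only facts I need from \cite{10.1145/3662158.3662796} are the following: for a vertex $v$ whose neighbors in Alice's (resp.\ Bob's) graph occupy the color set $A$ (resp.\ $B$), with $k=\Delta+1-|A\cup B|\ge 1$ available colors, there is a protocol \colorfind returning a single available color known to both parties, using, in expectation, $O(\log^2(\Delta/k))$ bits and $O(\log(\Delta/k))$ rounds, and, in the worst case, $O(\log^2\Delta)$ bits and $O(\log\Delta)$ rounds; moreover \colorfind accesses the colors only through the two forbidden-color sets, so it may legitimately be run on any pair $(A',B')$ of subsets of $[\Delta+1]$, with guarantees depending only on $\Delta+1-|A'\cup B'|$.

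The protocol \colorsamp is then: using public randomness Alice and Bob draw a uniformly random permutation $\sigma$ of $[\Delta+1]$; Alice forms $A'=\sigma(A)$ and Bob forms $B'=\sigma(B)$; they run \colorfind on $(A',B')$ to obtain $c'\in[\Delta+1]\setminus(A'\cup B')$ known to both; they output $c=\sigma^{-1}(c')$. Since $|A'\cup B'|=|A\cup B|$, the instance handed to \colorfind has exactly $k$ available colors, so bounds (ii) and (iii) follow verbatim from the \colorfind guarantees---the permutation and the final relabeling are local operations computed from shared randomness and cost nothing. As $[\Delta+1]\setminus(A'\cup B')=\sigma\big([\Delta+1]\setminus(A\cup B)\big)$, the output $c=\sigma^{-1}(c')$ is an available color for $v$; and both parties know $c$ because both know $\sigma$ (public) and $c'$ (a guarantee of \colorfind), which is (i).

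It remains to show that $c$ is uniform on the available set $S:=[\Delta+1]\setminus(A\cup B)$, and this is the step that needs care. Fix $A,B$ (hence $S$, with $|S|=k$) and fix an arbitrary $c_0\in S$; I will show $\Pr[c=c_0]=1/k$. Condition on the ``block type'' of $\sigma$---equivalently on the pair $(A',B')=(\sigma(A),\sigma(B))$, which also determines $\sigma(A\cap B)=A'\cap B'$ and $\sigma(S)=S':=[\Delta+1]\setminus(A'\cup B')$---and on the internal randomness $R$ of \colorfind, which is independent of $\sigma$. Conditioned on all of this, $c'$ is a fixed element of $S'$, whereas $\sigma$ restricted to $S$ is still a uniformly random bijection of $S$ onto $S'$: conditioning on the images of $A$ and $B$ pins $\sigma$ down only on $A\cup B$ and leaves its restriction to $S$ uniform, independently of $R$. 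Hence $\sigma(c_0)$ is uniform on $S'$ regardless of which $c_0\in S$ was chosen, so $\Pr[c=c_0\mid (A',B'),R]=\Pr[\sigma(c_0)=c'\mid(A',B'),R]=1/|S'|=1/k$; averaging over the conditioning preserves this, giving $\Pr[c=c_0]=1/k$ for every $c_0\in S$.

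The main obstacle is stating this last argument cleanly rather than any computation: one has to accommodate a possibly nonempty overlap $A\cap B$ (so the ``block type'' is the induced partition of $[\Delta+1]$ into $A\cap B$, $A\setminus B$, $B\setminus A$, $S$ together with the images of these blocks under $\sigma$), and one has to keep \colorfind's randomness independent of the palette permutation, so that after conditioning $\sigma(c_0)$ is uniform over a set that does not depend on $c_0$. Correctness, the ``known to both'' property, and both pairs of complexity bounds are then immediate from the \colorfind guarantees.
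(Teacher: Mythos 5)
Your protocol is exactly the one the paper uses: publicly permute the palette, run the Flin--Mittal find-an-available-color routine on the permuted sets, and unpermute the answer. The only difference is that the paper asserts uniformity informally (``neither the protocol nor the deterministic subroutine favor any specific element''), whereas you actually prove it by conditioning on the images $(A',B')$ and on \colorfind's internal randomness and observing that $\sigma$ restricted to the available set is still a uniform bijection onto $S'$; this conditioning argument is correct, handles the $A\cap B\neq\emptyset$ case cleanly, and is a genuine tightening of the paper's sketch.
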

For completeness, we discuss the procedure of Flin and Mittal to find an arbitrary color and establish the above lemma in  \Cref{app:col-samp}.

\subsection{\texorpdfstring{$(\mathsf{degree}+1)$}{(degree+1)}-list coloring in the two party communication model}\label{sec:dlc}

%In our $(\Delta+1)$-vertex coloring algorithm, after applying the randomized color trial, we have a partial vertex coloring. Most importantly, the leftover instance is also colorable and contains $O(n/\log^ 4n)$ vertices in expectation. However, it is not necessarily a $(\Delta+1)$-vertex coloring instance. Rather, it is a \textsf{(degree}+1)-list coloring (\DLC) instance (to be described below).% Here, we discuss a randomized protocol for $\DLC$ in the two-party communication model using the \emph{Palette Sparsification Theorem} for \DLC \cite{halldorsson2022near}.

The $\mathsf{(degree}+1)$-list coloring problem (\DLC)   is a natural generalization of the $(\Delta+1)$-coloring problem. Here, for a given graph $G = (V,E)$, each vertex $v$ of degree $\deg(v)$ in $G$ has  an input palette $\Psi(v)$ of acceptable colors such that $|\Psi(v)|\geq \deg(v)+1$, and the objective is to find a   vertex coloring such that vertex $v$ is assigned a color from $\Psi(v)$. %This says that, staring from an instance of $\DLC$ for graph $G$, we can generate a sparsified instance for general list coloring on  graph $H$ (on the same same set of vertices as $G$ but containing possibly \emph{few} edges) such that the instance is colorable. 

\paragraph{The set up for \DLC in the two-party communication model:} The edges of the input graph $G$ are partitioned among Alice and Bob.  For each vertex $v$, Alice has a list $\Psi_A(v)\subset [\Delta+1]$ of colors  and Bob has a list $\Psi_B(v) \subseteq [\Delta+1]$ of colors such that the palette of available colors for $v$ is $\Psi(v)= \Psi_A(v)\cap \Psi_B(v)$. Moreover, $\left|\Psi(v)\right|\geq \deg(v)+1$ and the objective is to find a vertex coloring such that each vertex $v$ is assigned a color from $\Psi(v)$.

 Now, we state the palette sparsification theorem for the \DLC problem~\cite{halldorsson2022near}, which extends the palette sparsification theorem for the $(\Delta+1)$-vertex coloring by \citet{assadi2019sublinear}.

\begin{proposition}[\textbf{Palette Sparsification 
 Theorem} \cite{halldorsson2022near}]\label{pro:pal}
Consider the \DLC problem on a graph $G$ on $n$ vertices as defined above. Suppose that for every vertex \( v \in V \), we independently sample a set \( L(v) \) of size \( l = \Theta(\log^2 n) \) uniformly at random from the colors in \( \Psi(v) \). Now consider a subgraph $H$ of $G$ on the vertex set and remove all edges $\{u,v\}$ such that $L(u)$ and $L(v)$ are disjoint. Then, the graph $H$ is vertex colorable such that each vertex $v$ is assigned a color from $L(v)$ and $H$ contains $O(n \log^2 n)$ edges, with probability $1-1/n^c$, where $c$ is a large constant.

\end{proposition}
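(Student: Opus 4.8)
The statement to prove is the Palette Sparsification Theorem for \DLC (\Cref{pro:pal}), attributed to \citet{halldorsson2022near}. Since this is a cited proposition rather than an original contribution of the paper, let me sketch how I would prove it.

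\textbf{Proof plan for \Cref{pro:pal}.}

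The plan is to follow the sparse--dense decomposition strategy that underlies the original palette sparsification theorem of \citet{assadi2019sublinear} for $(\Delta+1)$-coloring, adapted to the list setting. First I would fix the random color lists $L(v) \subseteq \Psi(v)$ of size $l = \Theta(\log^2 n)$ and argue about the sparsified graph $H$. The edge count bound is the easy part: for a fixed edge $\{u,v\}$, the probability that $L(u)$ and $L(v)$ intersect is at most $l^2 / |\Psi(v)| \cdot$ (a constant), but this is only $O(\log^2 n)$ per vertex when $|\Psi(v)|$ is large; for vertices with small palettes (hence small degree), the degree itself is small, so in all cases the expected surviving degree at each vertex is $O(\log^2 n)$, giving $O(n\log^2 n)$ edges in expectation, and a high-probability bound follows by a concentration/union-bound argument over vertices.

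The colorability claim is the substantive part, and the main obstacle. The approach is to apply the \emph{almost-clique decomposition} (sparse--dense decomposition) to $G$: partition $V$ into a set of ``sparse'' vertices $V_{\text{sparse}}$, each of which has $\Omega(\Delta^2)$ non-edges in its neighborhood (equivalently, constant-fraction local sparsity), and a collection of ``almost-cliques'' $K_1, \dots, K_t$, each of size $(1\pm\epsilon)\Delta$ with small external degree and small anti-degree inside. Then I would color in two phases. For sparse vertices, I would use the fact that local sparsity creates ``slack'': when we sample $\Theta(\log^2 n)$ colors from a palette of size $\ge \deg(v)+1$, a sparse vertex sees, with high probability, a color in its sampled list that is also sampled-and-used by two of its neighbors (who collide with each other), so effectively its palette shrinks slower than its uncolored degree; a careful coupling / Lovász-Local-Lemma-free random greedy argument on the sampled lists shows the sparse part is colorable using only the sampled colors. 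For the almost-cliques, I would exploit that within an almost-clique the vertices have nearly identical palettes and near-full degree, so coloring reduces to finding an almost-perfect matching between vertices and colors in a suitable bipartite ``palette graph'' restricted to sampled colors; since each vertex samples $\Theta(\log^2 n)$ colors and an almost-clique has $O(\Delta)$ vertices, a Hall-type / random bipartite matching argument (the sampled lists induce an expander-like bipartite graph between the $\approx\Delta$ vertices and $\approx\Delta$ colors) yields a proper assignment with probability $1 - 1/\poly(n)$ per clique. A union bound over the $O(n/\Delta)$ cliques and over sparse vertices gives the overall $1 - 1/n^c$ guarantee.

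The key difference from the $(\Delta+1)$ case — and where I expect the real work to be — is handling heterogeneous palettes $\Psi(v)$ of differing sizes $\ge \deg(v)+1$ rather than a common palette $[\Delta+1]$. This forces the almost-clique decomposition and the matching argument to be carried out relative to each vertex's own degree and palette, and one must verify that the ``slack from sparsity'' and the ``Hall condition in almost-cliques'' still hold when sampling from $\Psi(v)$ uniformly. I would cite \citet{halldorsson2022near} for the detailed execution of these two lemmas, since reproducing the full local-lemma-based analysis is beyond the scope here; the structural skeleton above is what I would present, with the two phase-lemmas invoked as black boxes from that work.
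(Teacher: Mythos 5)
The paper does not prove this proposition; it is imported verbatim (as a black box) from \citet{halldorsson2022near}, so there is no internal proof to compare against. Your sketch does correctly identify the high-level strategy used in that reference and in the earlier work of \citet{assadi2019sublinear}: a sparse--dense (almost-clique) decomposition, slack-from-sparsity arguments for locally sparse vertices, and a Hall-type matching argument inside each almost-clique on the bipartite ``sampled palette graph.'' You are also right that the genuine new difficulty in the \DLC{} setting is the heterogeneity of the palettes $\Psi(v)$, and that you need to redo the decomposition and the matching argument relative to each vertex's own degree and palette. So as a proof \emph{plan} this is aligned with the cited source, and you appropriately flag that you are deferring the two phase lemmas to it.

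One concrete issue in the part you do argue explicitly: the edge-count bound. With $l = \Theta(\log^2 n)$ and $|\Psi(v)| \ge \deg(v)+1$, the union bound over color pairs gives
\[
\Pr\bigl[L(u)\cap L(v)\ne\emptyset\bigr] \;\le\; \frac{l^2}{\max\{|\Psi(u)|,|\Psi(v)|\}} \;\le\; \frac{l^2}{\deg(v)+1},
\]
and summing over the $\deg(v)$ neighbors yields an expected surviving degree of $O(l^2) = O(\log^4 n)$, not the $O(\log^2 n)$ you claim. Your text asserts ``the expected surviving degree at each vertex is $O(\log^2 n)$,'' but the naive calculation you outline only delivers $O(\log^4 n)$; matching the $O(n\log^2 n)$ edge count in the statement (which the paper relies on in the communication accounting of \Cref{lem:d1lc}) requires either a sharper argument from the cited reference or an acknowledgment that the constant in the exponent differs. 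You should either tighten that calculation or explicitly note that the precise polylog exponent is inherited from \citet{halldorsson2022near} rather than derived here.
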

% Then, with high probability, there exists a   coloring of \( G \) using the colors from the lists \( L(v) \) for \( v \in V \).

%     After sampling $\Theta(\log^2 n)$ colors at each vertex, we remove edges between vertices whose sampled colors sets are disjoint. This can be seen as a ``sparsification'' since the number of edges whose sampled color palette intersects is only $O(n \cdot \log^2 n)$ with  probability $1-1/n^c$, where $c$ is a sufficiently large constant.
% \end{theorem}

Using \Cref{pro:pal} and \Cref{lm:ksi}, we can design a randomized protocol for \DLC as stated in the following lemma.

\begin{lemma}[\textbf{Protocol for \DLC}] \label{lem:d1lc}
    There exists a randomized protocol that solves \DLC on a graph $G$, such that:
\begin{enumerate}
    \item[(i)] The protocol incurs $O\left(n \log^2 n \log^2 \Delta+ n\log^3 n\right)$ bits of communication in expectation.
    \item[(ii)] The protocol runs for $O(\log \Delta)$ rounds in the worst case.
\end{enumerate}

\end{lemma}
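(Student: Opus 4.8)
## Proof Plan for Lemma~\ref{lem:d1lc}

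The plan is to combine the palette sparsification theorem (\Cref{pro:pal}) with the uniform color sampling primitive (\Cref{lm:ksi}) in the natural way: first sparsify the palettes, then have the two parties agree on the sparsified subgraph $H$, and finally solve the coloring on $H$ locally. The key point is that once the sparsified instance is available to both parties, no further communication is needed to produce the final coloring, since by \Cref{pro:pal} the instance $H$ is colorable within the sampled lists with high probability.

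The steps, in order, are as follows. First, for each vertex $v$, Alice and Bob use \Cref{lm:ksi} (or rather, its palette-analogue: sample uniformly from $\Psi(v) = \Psi_A(v) \cap \Psi_B(v)$ rather than from $[\Delta+1]\setminus(A\cup B)$, which is the same kind of ``available color'' structure) to sample a set $L(v)$ of $l = \Theta(\log^2 n)$ colors independently and uniformly from $\Psi(v)$, with the sampled set known to both parties. By \Cref{lm:ksi}(ii), each single sample costs $O(\log^2 \Delta)$ bits in the worst case; since we need $l = \Theta(\log^2 n)$ samples per vertex and there are $n$ vertices, this is $O(n \log^2 n \log^2 \Delta)$ bits. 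By \Cref{lm:ksi}(iii), each sample costs $O(\log \Delta)$ rounds in the worst case; running the $\Theta(\log^2 n)$ samples for a fixed vertex in parallel across all vertices, and noting the samples for one vertex can themselves be run in parallel (they are independent), this is $O(\log \Delta)$ rounds total. Second, both parties now know every $L(v)$, hence can locally determine which edges $\{u,v\}$ of $G$ have $L(u) \cap L(v) = \emptyset$ --- but wait: an edge is known to only one party, so instead each party announces, for each of its edges $\{u,v\}$, whether it is ``kept'' (i.e.\ $L(u)\cap L(v)\neq\emptyset$); actually both parties already know all the $L(\cdot)$ sets, so the party holding $\{u,v\}$ simply sends the edge itself if it is kept. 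By \Cref{pro:pal}, the number of kept edges is $O(n\log^2 n)$ with probability $1 - 1/n^c$, so transmitting them costs $O(n\log^2 n \log n) = O(n\log^3 n)$ bits and one round. Third, both parties now hold the identical subgraph $H$ together with all lists $L(v)$; by \Cref{pro:pal}, $H$ admits a proper coloring with each $v$ colored from $L(v)$, so both parties compute the lexicographically-first such coloring locally, with zero additional communication, and output it.

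For the failure analysis: \Cref{pro:pal} fails (either $H$ is not colorable or has too many edges) with probability at most $1/n^c$; in that event we fall back to any deterministic protocol, e.g.\ simulating greedy list-coloring via binary search on palettes, which costs $O(n^2 \log n)$ bits in the worst case, contributing only $n^{-c}\cdot O(n^2\log n) = o(1)$ to the expectation. The expected communication is therefore dominated by the two main terms $O(n\log^2 n\log^2\Delta)$ from sampling and $O(n\log^3 n)$ from sending $H$, giving part~(i); the round count is $O(\log\Delta)$ from the sampling phase plus $O(1)$ for edge transmission and local computation, giving part~(ii).

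The main obstacle is making precise the bookkeeping in the first step: \Cref{lm:ksi} is stated for sampling from $[\Delta+1]\setminus(A\cup B)$ in the $(\Delta+1)$-coloring setting, and one must check that the same protocol \colorsamp works verbatim when the ground set is $\Psi_A(v)\cap\Psi_B(v)$ with $|\Psi_A(v)\cap\Psi_B(v)| \ge \deg(v)+1$ --- this is exactly the $\mathsf{D1LC}$ generalization, and the worst-case bounds $O(\log^2\Delta)$ bits and $O(\log\Delta)$ rounds still apply since $|\Psi_A(v)|,|\Psi_B(v)| \le \Delta+1$. A secondary subtlety is that the $\Theta(\log^2 n)$ samples for a single vertex should be generated so their union $L(v)$ has the distribution required by \Cref{pro:pal} (i.i.d.\ uniform from $\Psi(v)$, with repetitions simply collapsed), which is immediate from independence of the invocations of \colorsamp; and that parallelizing over all $n$ vertices does not inflate the round complexity, only the communication, which is accounted for above.
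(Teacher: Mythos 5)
Your proposal matches the paper's proof essentially step for step: sample $\Theta(\log^2 n)$ colors per vertex via \colorsamp, discard edges with disjoint sampled lists, exchange the sparsified graph $H$ (costing $O(n\log^3 n)$ bits), color $H$ locally, and fall back to a brute-force gather when \Cref{pro:pal} fails (which happens with probability $\le 1/n^c$ and thus contributes negligibly to the expectation). The only cosmetic difference is that you have both parties deterministically compute the same coloring of $H$, whereas the paper has Alice compute and transmit it; the complexity bounds are unaffected.
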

We prove the above lemma in \Cref{app:d1lc}.

\subsection{Classical existential results for edge coloring}
\label{sec:edge-existent}

In graph theory, Vizing's theorem asserts that any simple undirected graph can be edge colored with \emph{at most one color more} than its maximum degree $\Delta$. At least $\Delta$ colors are always required, allowing undirected graphs to be divided into two categories: ``class one'' graphs, which can be colored with $\Delta$ colors, and ``class two'' graphs, which need $\Delta + 1$ colors. While several authors have proposed additional criteria to determine if some graphs belong to class one or class two, a complete classification has not been achieved. \citet{fournier1973colorations} demonstrated that if the vertices with maximum degree $\Delta$ in a graph $G$ form an independent set, then $G$ must be classified as class one.

\begin{proposition}[\citet{vizing1964}]
Every simple undirected graph may be edge colored using a number of colors that is at most one larger than the maximum degree \( \Delta \) of the graph.
\end{proposition}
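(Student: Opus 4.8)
The plan is to give the classical \emph{fan-and-Kempe-chain} argument, by induction on $|E(G)|$. The base case is the empty graph. For the inductive step I would remove one edge $e_0=\{x,y_0\}$ and, by the inductive hypothesis, fix a proper edge coloring $c$ of $G-e_0$ with colors from $[\Delta+1]$; the goal is then to modify $c$ so that $e_0$ also receives a color. Call a color \emph{missing} at a vertex $v$ if no edge currently incident to $v$ carries it; since $\deg(v)\le\Delta$ and there are $\Delta+1$ colors, every vertex has at least one missing color, and the endpoint of the uncolored edge has at least two.

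Next I would set up the two tools. A \emph{fan} at $x$ is a maximal sequence $(y_0,y_1,\dots,y_k)$ of distinct neighbors of $x$ with $c(\{x,y_i\})$ missing at $y_{i-1}$ for every $i\ge 1$; on any such fan, a \emph{rotation up to $y_j$} — setting $c(\{x,y_i\})\leftarrow c(\{x,y_{i+1}\})$ for $i<j$ and uncoloring $\{x,y_j\}$ — keeps the coloring proper and merely moves the uncolored edge to $\{x,y_j\}$. A \emph{Kempe chain} is a maximal path alternating between two colors $\alpha,\beta$; swapping $\alpha$ and $\beta$ along it keeps the coloring proper, and a vertex missing $\beta$ can only be an endpoint (never an interior vertex) of an $\alpha/\beta$-chain. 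Now take a maximal fan $(y_0,\dots,y_k)$, let $\beta$ be missing at $y_k$ and $\alpha$ missing at $x$. If $\beta$ is missing at $x$ as well (in particular if $\alpha=\beta$), rotate the whole fan and color $\{x,y_k\}$ with $\beta$; done.

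Otherwise $\beta$ is used at $x$. By maximality of the fan, the $\beta$-edge at $x$ must be some $\{x,y_j\}$ with $1\le j\le k-1$ (else its far endpoint would extend the fan), and then the fan property gives that $\beta$ is also missing at $y_{j-1}$; so $\beta$ is missing at the two distinct fan vertices $y_{j-1}$ and $y_k$. Consider the $\alpha/\beta$-Kempe chain $P$ starting at $x$: it begins with the edge $\{x,y_j\}$ and ends at some vertex $u\neq x$, and since neither $y_{j-1}$ nor $y_k$ can lie in the interior of $P$ while $u$ is a single vertex, at least one of $y_{j-1},y_k$ avoids $P$ entirely. Swapping $\alpha,\beta$ along $P$ frees $\beta$ at $x$ while keeping $\beta$ missing at whichever of $y_{j-1},y_k$ was off $P$. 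The delicate point is that one must check the appropriate sequence is \emph{still} a valid fan after the swap: if $y_{j-1}$ is off $P$, use the prefix $(y_0,\dots,y_{j-1})$; in the remaining case $P$ ends at $y_{j-1}$, so $y_k$ is off $P$ and we use the whole sequence $(y_0,\dots,y_k)$, noting that the only fan edge whose color changed is $\{x,y_j\}$ (now $\alpha$) and that $\alpha$ is now missing at $u=y_{j-1}$. Rotating that fan up to its endpoint and coloring the freed edge with $\beta$ completes the recoloring, hence the induction.

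The step I expect to be the main obstacle is exactly this last bookkeeping: verifying that the Kempe swap does not silently destroy the fan structure below the chosen endpoint, that $\beta$ really is simultaneously missing at $x$ and at the endpoint after \emph{both} the swap and the rotation, and that the small degenerate configurations ($\alpha$ or $\beta$ missing at $y_{j-1}$ or $y_k$ from the outset) are disposed of first so they do not interfere. All of this becomes routine once two invariants are isolated and stated cleanly: colors outside $\{\alpha,\beta\}$ are untouched by the swap, and rotation preserves properness on any fan.
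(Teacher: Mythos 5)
The paper does not prove this statement: Vizing's theorem appears only as a cited classical fact (a Proposition attributed to Vizing, 1964), so there is no in-paper argument to compare against. Your fan-and-Kempe-chain sketch is the standard textbook proof of the theorem and is correct; in particular the two-way split on whether $y_{j-1}$ or $y_k$ lies off the $\alpha/\beta$-chain, combined with the observation that a vertex missing $\beta$ can only be a chain endpoint, is exactly what makes the recoloring go through, and your note that in the second subcase $\alpha$ becomes missing at $u=y_{j-1}$ after the swap is accurate (that vertex's unique chain edge was $\alpha$-colored and flips to $\beta$). Two bookkeeping facts worth stating explicitly in a full write-up, both of which you implicitly rely on: for $i<j$ the color $c(\{x,y_i\})$ lies outside $\{\alpha,\beta\}$ (since $\alpha$ is missing at $x$ and $\{x,y_j\}$ is the unique $\beta$-edge at $x$), which is precisely why the Kempe swap cannot disturb the lower fan edges or their missing-color certificates; and the assumption that $\beta$ is used at $x$ together with maximality of the fan forces $k\ge 2$, so the index $j\in\{1,\dots,k-1\}$ actually exists.
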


\begin{proposition}[\citet{fournier1973colorations}]
\label{thm:independent}
For any graph $G$, if the vertices of the maximum degree $\Delta$ form an independent set, there exists a proper edge-coloring with $\Delta$ colors.%\yijun{Find a paper to cite (the wiki page for Vizing theorem mentioned this version) }
\end{proposition}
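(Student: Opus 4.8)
\emph{Proof proposal.} Since a proper edge-coloring always needs at least $\Delta$ colors, it suffices to prove $\chi'(G)\le\Delta$. The plan is a minimal-counterexample argument powered by the classical Vizing fan together with Kempe-chain (alternating-path) recolorings — the same machinery behind Vizing's theorem \cite{vizing1964} — where the independence hypothesis supplies exactly the slack needed to use $\Delta$ colors instead of $\Delta+1$.

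I would assume the statement fails and take $G$ to be a counterexample with the fewest edges: the degree-$\Delta$ vertices of $G$ form an independent set, yet $\chi'(G)=\Delta+1$. By minimality, $G$ is \emph{critical}: $\chi'(G-e)=\Delta$ for every edge $e$. (The hypothesis passes to subgraphs — any subgraph of $G$ has maximum degree at most $\Delta$, and each of its degree-$\Delta$ vertices already has degree $\Delta$ in $G$, hence they remain pairwise non-adjacent — so a minimal counterexample exists whenever a counterexample does.) Since $G$ has an edge, it has a vertex $y$ with $\deg(y)=\Delta$; let $x$ be a neighbor of $y$, so that $\deg(x)\le\Delta-1$ by the independence hypothesis. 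Fix a proper $\Delta$-edge-coloring $\varphi$ of $G-xy$, and for a vertex $v$ let $M(v)\subseteq[\Delta]$ be the set of colors unused by $\varphi$ on the edges at $v$; then $|M(y)|\ge1$ and, crucially, $|M(x)|\ge2$.

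If $M(x)\cap M(y)\ne\emptyset$, then coloring $xy$ with a common missing color extends $\varphi$ to a proper $\Delta$-edge-coloring of $G$, contradicting criticality. Otherwise $M(x)\cap M(y)=\emptyset$, and I would run the standard fan/Kempe recoloring centered at $x$: build a maximal Vizing fan, i.e., a sequence of distinct neighbors $y=v_0,v_1,\dots,v_k$ of $x$ with $\varphi(xv_i)\in M(v_{i-1})$ for every $i\ge1$; then, using that $|M(x)|\ge2$, shift the colors along the fan and, when the fan closes up, swap colors along a maximal two-colored alternating path (one color missing at $x$, the other the offending repeated fan color) so as to reach a partial coloring in which $x$ shares a missing color with the current uncolored endpoint. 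Assigning that color yields a proper $\Delta$-edge-coloring of $G$ — the contradiction. (Equivalently, one may invoke Vizing's Adjacency Lemma for critical graphs, which would force $y$ to have at least $\Delta-\deg(x)+1\ge2$ neighbors of degree $\Delta$; this is impossible, since $\deg(y)=\Delta$ and $y$ has no degree-$\Delta$ neighbor by the independence hypothesis.)

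The step I expect to be the main obstacle is making that recoloring rigorous: one must handle the termination analysis of the fan — a direct finish where the last fan vertex already misses a color of $M(x)$; a genuine fan closure, which forces the Kempe-chain step; and, within the latter, the split according to whether the alternating path ends inside the fan — and verify that no case ever requires a color outside $[\Delta]$. The only place the hypothesis is used is to guarantee $\deg(x)\le\Delta-1$, equivalently that $|M(x)|\ge2$ and that a maximum-degree vertex has no degree-$\Delta$ neighbor; this single extra unit of slack is precisely what upgrades Vizing's bound of $\Delta+1$ to $\Delta$ \cite{fournier1973colorations}.
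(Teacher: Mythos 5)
The paper does not prove this proposition; it is imported verbatim as a citation to Fournier, so there is no internal proof to compare against. Your task was therefore to supply a proof, and the route you take — minimal counterexample, criticality, Vizing fan with Kempe recoloring, with Vizing's Adjacency Lemma as a shortcut — is the standard one, and it is correct in substance.

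One step in your setup is stated more strongly than it is true and deserves a small repair. You claim the independence hypothesis ``passes to subgraphs,'' arguing that a vertex of degree $\Delta$ in a subgraph already had degree $\Delta$ in $G$. That is true, but it only handles subgraphs whose maximum degree is still $\Delta$. If $\Delta(G-e) = \Delta - 1$, the hypothesis for $G-e$ concerns vertices of degree $\Delta-1$, about which you know nothing. Fortunately you do not need the hypothesis to pass in that case: Vizing's theorem alone gives $\chi'(G-e)\le \Delta(G-e)+1 = \Delta$. So the correct justification for criticality is a two-case argument — if $\Delta(G-e)=\Delta$, use minimality plus the inherited independence; if $\Delta(G-e)<\Delta$, use Vizing directly — after which $\chi'(G-e)=\Delta$ follows as you state. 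This is a fix to the stated reason, not to the conclusion.

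With criticality in hand, your main argument and the parenthetical VAL argument are both sound. The VAL version is the cleaner one to write down: in a critical graph, a degree-$\Delta$ vertex $y$ adjacent to a vertex $x$ of degree $\le \Delta-1$ must have at least $\Delta - \deg(x) + 1 \ge 2$ degree-$\Delta$ neighbors, which the independence hypothesis forbids since $y$ can have none. If you want a self-contained proof you do need to carry out the fan/Kempe-chain analysis you sketch (or re-prove VAL), and you are right that the termination cases of the fan are where the real work is, but the slack $|M(x)|\ge 2$ you identify is exactly what the argument needs, so nothing is missing in principle.
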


\section{Protocol for \texorpdfstring{$(\Delta+1)$}{(Delta+1)}-vertex coloring} \label{sec:vertex}
In this section, we present a protocol for the $(\Delta+1)$-vertex coloring problem that uses $O(n)$ bits of communication and requires $O(\log \log n \cdot \log \Delta)$ rounds, thus proving \Cref{thm:vertex}. In \Cref{sec:col-trial}, we discuss the algorithm \randcolortrial, which colors a significant number of vertices, leaving only $O\left({n}/{\log^4 n}\right)$ uncolored in expectation. Furthermore, \randcolortrial communicates $O(n)$ bits in expectation and operates within $O(\log \log n \cdot \log \Delta)$ rounds in the worst case. In Sections \ref{sec:active} and \ref{sec:comm}, we analyze \randcolortrial. Lastly, in \Cref{sec:final-vertex-col}, we describe our full algorithm for the $(\Delta+1)$-vertex coloring, which combines \randcolortrial (from \Cref{sec:col-trial}) and the protocol for \DLC (as discussed in \Cref{lem:d1lc}).

\subsection{Random color trials in the two party communication model}\label{sec:col-trial}

%\gopinath{May be we make random color trial a subsection. Palette sparsification theorem another subsection. In another subsection, we do the final proof of the main theorem combining everything. The title of the section can be Protocol for $(\Delta+1)$-vertex coloring.}

% We assume that each vertex has a degree of \( \Delta \). By adding dummy edges to the graph, the algorithm will still provide a valid coloring of the original graph.

% \hung{Now, I don't know if we can assume this, I will think about it.} 
% \yijun{I think we cannot assume this. And why do we need this anyway?}

In this section, we discuss \randcolortrial (\Cref{alg:RandomColorTrial}) that is completed in $O(\log \log n)$ iterations such that each iteration spends $O(\log \Delta)$ rounds in the worst case. In the first iteration all vertices are \textsf{active}.
In each iteration, each (\textsf{active}) vertex has a probability of \( 1/2 \) to be \textsf{idle} and a probability of \( 1/2 \) to be \textsf{awake}. Upon \textsf{awake}, the vertex selects a color uniformly at random from its set of \emph{available} colors using \colorsamp as discussed in \Cref{lm:ksi}. If none of its neighbors choose the same color, the vertex permanently adopts this color and is marked as \textsf{done}. If a vertex is not marked as \textsf{done} at the end of the round, we classify it as \textsf{active} for the subsequent iteration.

%In each iteration, each vertex has a probability of \( 1/2 \) to \textsf{idle} and a probability of \( 1/2 \) to \textsf{awake}. Upon \textsf{awake}, the vertex selects a color uniformly at random from its set of \emph{available} colors using the \Cref{alg:k-Slack-Int} discussed above. If none of its neighbors choose the same color, the vertex permanently adopts this color and is marked as \textsf{done}. If a vertex is not marked as \textsf{done} at the end of the round, we classify it as \textsf{active}.

\begin{algorithm}[H]
\caption{\randcolortrial}
%\caption{An Expected $O(n)$ Communication Protocol for Random Color Trial}
\label{alg:RandomColorTrial}
\KwIn{A graph \( G \) with edges partitioned adversarially between two parties.}
\KwOut{A partial vertex coloring of the vertices in $G$ with expected number of uncolored vertices is at most \( \comments{{n}/{\log^4 n}} \).}
Mark each vertex as \textsf{active}.

\For{iteration \( i = 1, 2, \ldots, \comments{\lceil 1+4\log_{24/23} \log n \rceil} \)}{
    \For{each \textsf{active} vertex \( v \in G \) with \( k \) available colors}{
        \tcp{\textsf{in parallel}}
        Draw \( r \) using public randomness such that $\Pr[r=1]=\Pr[r=0]=1/2$;
        
        \tcp{Both parties have the same \( r \)}  
        \If{$r = 1$}{ \tcp{\textsf{awake}} 
            Select a color \( c \) uniformly at random from the set of available colors for vertex \( v \) using algorithm \colorsamp (as stated in \Cref{lm:ksi}) \tcp{Both parties have the same \( c \)}
            \If{none of \( v \)'s neighbors have chosen \( c \)}{
                Send a one-bit confirmation to the other party.\;
            }
            \If{none of \( v \)'s neighbors in the other party have chosen \( c \)}{
                Mark \( v \) as \textsf{done} and adopt color \( c \).\;
            }
        }
    }
}
\end{algorithm}
%In the following lemma, we formally have the guarantee from \randcolortrial.
\begin{lemma}[\textbf{Guarantee from \randcolortrial}]\label{lm:main}
    At the conclusion of the \randcolortrial procedure, a partial vertex coloring of the graph $G$ is produced. Furthermore, the following holds:
    \begin{enumerate}
        \item[(i)] The expected number of \textsf{active} (uncolored) vertices at the end of the procedure is \( O\left({n}/{\log^4 n}\right) \).
        
        \item[(ii)] The expected communication complexity is \( O(n) \).
        
        \item[(iii)] The worst-case round complexity is \( O(\log \log n \cdot \log \Delta) \).
    \end{enumerate}
\end{lemma}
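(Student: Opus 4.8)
The plan is to analyze the three parts separately, with the common workhorse being the fact that the number of available colors of an active vertex always strictly exceeds its number of active (uncolored) neighbors, hence each awake vertex succeeds with probability at least a constant (say $\ge 1/4$, accounting for the $1/2$ probability of being awake together with the standard random-color-trial success bound). From this, the probability that a fixed vertex $v$ is still active after iteration $i$ is at most $\beta^i$ for some constant $\beta<1$.

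For part (i), running $O(\log\log n)$ iterations drives this probability down to $\beta^{\Theta(\log\log n)} = O(1/\log^4 n)$ by choosing the hidden constant in the iteration count large enough; summing over the $n$ vertices and using linearity of expectation gives $\E[|Z|] = O(n/\log^4 n)$. The slight subtlety is that the "strictly more colors than uncolored neighbors" invariant must be maintained across iterations — this is immediate since each newly-colored neighbor removes at most one color from $v$'s available set while also removing itself from $v$'s active-neighbor count, so the gap never shrinks; I would state this as a one-line invariant.

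For part (iii), each iteration consists of: one public-coin draw (no communication), one invocation of \colorsamp per active vertex, the one-bit confirmations, and the final marking. By \Cref{lm:ksi}(iii), \colorsamp runs in $O(\log\Delta)$ rounds in the worst case, and since in a given iteration all active vertices run \colorsamp in parallel, the round cost of an iteration is $O(\log\Delta)$; the confirmation bits add $O(1)$ rounds. Multiplying by the $O(\log\log n)$ iterations yields the claimed $O(\log\log n\cdot\log\Delta)$ worst-case round bound.

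Part (ii) is the main obstacle and requires the case analysis sketched in the technical overview. For a vertex $v$, let $X_v$ be the total bits of communication attributable to $v$'s calls to \colorsamp over all iterations; I want $\E[X_v]=O(1)$, so that $\E[\sum_v X_v]=O(n)$ covers both the \colorsamp cost and the $O(1)$-bit-per-iteration confirmation overhead (the latter being $O(\log\log n)$ per vertex in the worst case but $O(1)$ in expectation once we account for $v$ becoming \textsf{done} quickly — actually here one must be slightly careful: the confirmation bits are spent every iteration $v$ is awake, so their expected total is $O(1)$ since $v$ is active for $O(1)$ iterations in expectation). For the \colorsamp cost: in Case 1 ($\deg(v)\le\Delta/2$) the available-color count $k_v$ always exceeds $\Delta/2$, so each call costs $O(\log^2(\Delta/k_v))=O(1)$ bits in expectation by \Cref{lm:ksi}(ii), and $v$ participates in $O(1)$ iterations in expectation, giving $\E[X_v]=O(1)$. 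In Case 2 ($\deg(v)>\Delta/2$) I would argue that the bad event "some iteration $i\le c_0\log\log\Delta$ has fewer than $\Delta/\alpha^i$ active neighbors of $v$" has probability $O(1/\log^2\Delta)$ — via a Chernoff bound on the number of neighbors that go idle, since each active neighbor is idle independently with probability $1/2$ and we only need the count of survivors to stay above a geometrically-shrinking threshold — so that conditioned on the complement, the iteration-$i$ call costs $O(\log^2(\alpha^i))=O(i^2)$ bits, and then $\sum_i \Pr[v\text{ active at iter }i]\cdot O(i^2) = \sum_i \beta^i\cdot O(i^2)=O(1)$; the contribution of the bad event and of iterations beyond $c_0\log\log\Delta$ is bounded by $O(1/\log^2\Delta)\cdot O(\log^2\Delta)=O(1)$ using the worst-case $O(\log^2\Delta)$ bound from \Cref{lm:ksi}(ii). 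Combining both cases, $\E[X_v]=O(1)$ for every $v$, and summing gives the $O(n)$ expected communication complexity. The delicate points to get right are the independence/Chernoff setup for the "active neighbors stay large" claim and making sure the $O(i^2)$-versus-$\beta^i$ sum and the tail term are handled cleanly; everything else is bookkeeping.
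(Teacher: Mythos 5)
Your proposal is correct and follows essentially the same approach as the paper: part (i) via the constant per-iteration success probability for an active vertex and the resulting geometric decay of $\Pr[A_i = 1]$; part (iii) via the worst-case $O(\log\Delta)$ round cost of \colorsamp per parallel iteration from \Cref{lm:ksi}; and part (ii) via the same decomposition into Color-Confirmation and Color-Sampling costs, the dichotomy on initial degree (above/below $\Delta/2$), the Chernoff-bound control of the number of surviving active neighbors (the paper's \textsf{high-degree}/\textsf{low-degree} classification with threshold $\Delta/2^{2i}$), and the $j = \Theta(\log\log\Delta)$ cutoff paired with the worst-case $O(\log^2\Delta)$ bound for the tail.
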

Note that whenever we assign a color $c$ to vertex $v$ in $\mathsf{Random\mbox{-}Color\mbox{-}Trial}$, $c$ is used neither by any vertex in $N_A(v)$ nor in $N_B(v)$. Therefore, $\mathsf{Random\mbox{-}Color\mbox{-}Trial}$ produces a valid coloring at the end of the procedure.
We prove the claims (i)--(iii) in the subsequent subsections.

% We aim to show that when the number of \textsf{active} vertices is \( O\left(\frac{n}{\log^4 n}\right) \), the following results hold:

% \begin{enumerate}[label=\textbf{\arabic*.}, left=30pt, itemsep=1em]
%     \item The number of iterations is \( O(\log \log n) \) with high probability.
%     \item The expected bit complexity of the procedure is \( O(n) \).
%     \item The complexity is \( O(\log \log n \cdot \log \Delta) \) with high probability.
% \end{enumerate}

% We aim to demonstrate that after \( O(\log \log n) \) iterations, the following results hold:

% \begin{enumerate}[label=\textbf{\arabic*.}, left=30pt, itemsep=1em]
%     \item The expected number of \textsf{active} vertices is \( O\left(\frac{n}{\log^4 n}\right) \).
%     \item The expected bit complexity of the procedure is \( O(n) \).
%     \item The worst-case round complexity is \( O(\log \log n \cdot \log \Delta) \).
% \end{enumerate}

% Thus, from the above idea, sampling a set of \emph{available} palette is important. Since the edges are partitioned to two parties, the neighbors set of vertex $v$ in two parties, $N_A(v)$ and $N_B(v)$ , are disjoint. Since this is $\left(\Delta + 1\right)$-coloring, we have an important observation which is for a vertex $v$, there are $x$ number of its neighbor \textsf{idle} in round $i$, then the size of the \emph{available} palette of vertex $v$ in that round is at least $x$.
\subsection{Number of uncolored vertices after random color trial}\label{sec:active}

%We show that the probability of a vertex $v$ being \textsf{active} in iteration $i$ decreases exponentially. Then using the fact that \randcolortrial executes for $O(\log \log n)$ iterations and applying linearity of expectation, we get \Cref

To compute the expected number of uncolored vertices, we first focus on computing the probability that  a vertex \( v \) is active in round $i$. Let \( A_{i} \) denote the indicator random variable such that \( A_{i} = 1 \) if the vertex \( v \) is \textsf{active} in the \( i \)th iteration, where \( i \in \mathbb{N} \).
First, we show that the probability that an \textsf{active} vertex in the $i$th iteration is not \textsf{active} in the next iteration is at least a constant.

\begin{lemma}\label{lem:active}
    \(\Pr[A_{i+1} = 0 \mid A_{i} = 1] \geq \frac{1}{24}\), where $i \in \mathbb{N}$.
\end{lemma}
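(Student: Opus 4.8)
The plan is to analyze a single iteration $i$ of \randcolortrial conditioned on $v$ being active at its start, and show that $v$ becomes \textsf{done} with probability at least $\tfrac{1}{24}$. First I would condition on $v$ being \textsf{awake}, which happens with probability $\tfrac12$; it then remains to show that, conditioned on being awake, $v$ successfully keeps its chosen color with probability at least $\tfrac{1}{12}$. The key structural fact, already noted in the overview, is that in any valid partial coloring of a $(\Delta+1)$-coloring instance a vertex $v$ has $k_v \ge (\text{number of uncolored neighbors of } v) + 1$; write $d$ for the number of active (uncolored) neighbors of $v$, so $k_v \ge d+1$.

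Next I would bound the failure probability. A color $c$ drawn uniformly from $v$'s $k_v$ available colors is ``killed'' only if some active neighbor $u$ is awake \emph{and} picks the same color $c$. For a fixed active neighbor $u$, the probability $u$ is awake is $\tfrac12$, and conditioned on that, $u$ picks its color uniformly from its own available set of size $k_u \ge 1$; the subtlety is that $u$'s available set need not contain $c$ and need not have size related to $k_v$. The clean way around this is to first fix $v$'s color $c$ (uniform over its $k_v \ge d+1$ options) and then, for each active neighbor $u$, bound $\Pr[u \text{ awake and chooses } c] \le \tfrac12 \cdot \Pr[u \text{ chooses } c \mid \text{awake}]$. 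Summing the naive bound $\Pr[u \text{ chooses } c\mid\text{awake}] \le 1$ over $d$ neighbors is too weak, so instead I would sum over $c$: by a union bound over the $d$ active neighbors and then averaging over the uniform choice of $c \in \{$available colors of $v\}$,
\[
\Pr[v \text{ fails} \mid v \text{ awake}] \;\le\; \sum_{u \text{ active nbr}} \tfrac12 \Pr[\text{color}(u) = c \mid u \text{ awake}] \;\le\; \tfrac12 \cdot \frac{d}{k_v} \;\le\; \tfrac12,
\]
where the middle inequality uses that for each fixed $u$, averaging $\Pr[\text{color}(u)=c]$ over the $k_v$ equally likely values of $c$ gives at most $1/k_v$ (since the probabilities over all colors sum to $1$). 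Hence $\Pr[v \text{ succeeds} \mid v \text{ awake}] \ge \tfrac12$, giving $\Pr[v \text{ done in iteration } i] \ge \tfrac14$, which is already stronger than $\tfrac{1}{24}$; the weaker constant in the statement presumably leaves slack for the event that an awake neighbor's sampled color collides in a more adversarial accounting, or for conditioning technicalities. I would then translate ``$v$ done'' into $A_{i+1}=0$ and conclude $\Pr[A_{i+1}=0 \mid A_i = 1] \ge \tfrac14 \ge \tfrac{1}{24}$.

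The main obstacle I anticipate is handling the conditioning carefully: the colors chosen by $v$ and by its neighbors in iteration $i$ are drawn via \colorsamp using public randomness, and I must ensure these choices are independent across vertices (or at least that the union-bound step is valid), and that ``available colors'' for each vertex are well-defined relative to the partial coloring \emph{at the start} of iteration $i$ rather than mid-iteration. I would also need to be slightly careful that $v$'s own awake/idle coin and color choice are independent of each neighbor's, which follows from the algorithm drawing fresh public randomness per vertex per iteration. Once independence across the relevant vertices is pinned down, the union bound plus the averaging-over-$c$ trick is routine, and the constant $\tfrac{1}{24}$ follows with room to spare.
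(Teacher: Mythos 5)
Your proof is correct, and it takes a genuinely different route from the paper's. The paper first conditions on the event that fewer than $\tfrac{3}{4}d_v$ of $v$'s active neighbors are awake — this has probability at least $\tfrac{1}{3}$ by Markov's inequality on the number of awake neighbors — and then observes that, under this event, at most $\tfrac{3}{4}d_v$ of $v$'s $k_v\ge d_v+1$ available colors can be blocked, so $v$'s uniformly random color is safe with probability at least $\frac{\frac14 d_v+1}{d_v+1}\ge\frac14$; multiplying $\frac12\cdot\frac13\cdot\frac14$ gives exactly the $\frac{1}{24}$ in the statement. You instead skip the Markov step entirely: you fix $v$'s awake coin, take a union bound over the $d$ active neighbors for the event that a neighbor is awake and collides with $v$'s color, and use the averaging observation that $\mathbb{E}_{c\sim\mathrm{Unif}(\Psi(v))}\Pr[\mathrm{color}(u)=c]\le 1/k_v$ to bound each term by $\tfrac{1}{2k_v}$, giving failure probability at most $\tfrac{d}{2k_v}\le\tfrac12$ and hence success probability at least $\tfrac14$. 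Your approach is cleaner, requires no second layer of conditioning, and yields the stronger constant $\tfrac14\ge\tfrac{1}{24}$; the paper's Markov detour is what costs it the extra factor of $\tfrac13$ (and its $\tfrac14$ is a deterministic worst-case fraction of safe colors rather than your expectation-over-$c$ bound, though they coincide numerically here). Both arguments correctly treat $d$ (your notation) or $d_v$ (the paper's) as fixed after conditioning on the history through iteration $i$, and both rely on the fresh per-vertex per-iteration public randomness for the awake coins and the \colorsamp draws to be mutually independent; your explicit flag of this independence concern is appropriate and is indeed the only conditioning subtlety that needs to hold.
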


\begin{proof}
    Let \( v \) be an \textsf{active} vertex with degree \( d_v \) in the \emph{induced subgraph of \textsf{active} vertices} in the \( i \)th iteration. Define the random variable \( X \) as the number of awake neighbors of vertex \( v \) in one round. We have

    \[
    \E[X] = \frac{d_v}{2}.
    \]
    
    By applying Markov's inequality, we obtain
    
    \[
    \Pr\left[X \geq \frac{3}{4} d_v\right] \leq \frac{\E[X]}{\frac{3}{4} d_v} = \frac{2}{3}.
    \]
    
    This implies that
    
    \[
    \Pr\left[X < \frac{3}{4} d_v\right] \geq \frac{1}{3}.
    \]
    
    Given that \( X < \frac{3}{4} d_v \), the number of \textsf{idle} neighbors of vertex \( v \) is at least \( \frac{1}{4} d_v \). With at least \( \frac{1}{4} d_v \) neighbors \textsf{idle}, there are at least \( \frac{1}{4} d_v + 1 \) colors still available for vertex \( v \) to choose from, allowing it to successfully mark itself as \textsf{done} in that iteration. The probability of success for vertex \( v \) can be expressed as follows:
    
    \begin{align*}
        \Pr\left[\text{$v$ is \textsf{done}}\right] & = \Pr\left[\text{$v$ is \textsf{awake}}\right] \cdot \Pr\left[\text{sampled an available color}\right] \\
        & \geq \Pr\left[\text{$v$ is \textsf{awake}}\right] \cdot \Pr\left[\text{sampled an available color} \mid X < \frac{3}{4} d_v\right] \cdot \Pr\left[X < \frac{3}{4} d_v\right] \\
        & \geq \frac{1}{2} \cdot \frac{\frac{1}{4} d_v + 1}{d_v + 1} \cdot \frac{1}{3} \\
        & \geq \frac{1}{24} .
    \end{align*}

    Thus, the success probability that vertex \( v \) is marked as \textsf{done} (i.e., not \textsf{active} in the next iteration) can be lower bounded as
    
    \[
    \Pr[A_{i+1} = 0 \mid A_{i} = 1] \geq \frac{1}{24} .\qedhere
    \]  
\end{proof}

Next, we will show that the upper bound of the probability that a vertex \( v \) is \textsf{active} in the \( i \)th iteration decreases exponentially with \( i \).

\begin{lemma}
    \label{lm:active2}
    \( \Pr[A_i = 1] \leq \left(\frac{23}{24}\right)^{i-1} \), where $i \in \mathbb{N}$.
\end{lemma}

\begin{proof}
The base case holds since each vertex is active in the first iteration, i.e., $\Pr[A_1=1]=1$.
    By the inductive hypothesis, for each $i \geq 2$, we have \( \Pr[A_{i-1} = 1] \leq \left(\frac{23}{24}\right)^{i-2} \). Moreover, by \Cref{lem:active}, it follows that 
    \[
    \Pr[A_{i} = 1 \mid A_{i-1} = 1] \leq \frac{23}{24} \quad \text{and} \quad \Pr[A_{i} = 1 \mid A_{i-1} = 0] = 0.
    \]
    Therefore, for \( i \geq 2 \):

    \begin{align*}
        \Pr[A_{i} = 1] & = \Pr[A_{i-1} = 1] \cdot \Pr[A_{i} = 1 \mid A_{i-1} = 1] + \Pr[A_{i-1} = 0] \cdot \Pr[A_{i} = 1 \mid A_{i-1} = 0] \\
        & \leq \left( \frac{23}{24} \right)^{i-2} \cdot \frac{23}{24} + 0 \\
        & = \left(\frac{23}{24}\right)^{i-1}.\qedhere
    \end{align*}

\end{proof}

\begin{lemma}
    \label{lm:active3}
    The expected number of \textsf{active} vertices in the \( i \)th iteration is at most \( n \cdot \left( \frac{23}{24} \right)^{i-1} \).
\end{lemma}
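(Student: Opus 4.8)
The plan is to derive this directly from \Cref{lm:active2} via linearity of expectation, so the proof will be short. For each vertex $v \in V$, introduce the indicator random variable $A_i^{(v)}$ that equals $1$ precisely when $v$ is \textsf{active} in the $i$th iteration; this is the same random variable analyzed in \Cref{lm:active2}, just with the vertex made explicit in the notation. Then the number of \textsf{active} vertices in iteration $i$ is exactly $\sum_{v \in V} A_i^{(v)}$.

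Next I would apply linearity of expectation to write
\[
\E\!\left[\sum_{v \in V} A_i^{(v)}\right] = \sum_{v \in V} \E\!\left[A_i^{(v)}\right] = \sum_{v \in V} \Pr\!\left[A_i^{(v)} = 1\right].
\]
By \Cref{lm:active2}, each term satisfies $\Pr[A_i^{(v)} = 1] \le (23/24)^{i-1}$, and since $|V| = n$ the sum is at most $n \cdot (23/24)^{i-1}$, which is the claimed bound.

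I do not expect any genuine obstacle here: linearity of expectation requires no independence assumptions, so the fact that the events $\{A_i^{(v)} = 1\}$ across different vertices are highly correlated (through shared randomness and through the structure of the induced subgraph of \textsf{active} vertices) is irrelevant. The only thing worth a sentence of care is to note that \Cref{lm:active2} was stated for a fixed but arbitrary vertex, so its bound applies uniformly to every $v \in V$, justifying the termwise estimate above.
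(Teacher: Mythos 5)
Your proof is correct and follows the same approach as the paper, which simply invokes \Cref{lm:active2} together with linearity of expectation. You spell out the per-vertex indicator notation and the uniformity of the bound across vertices, but the substance is identical.
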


\begin{proof}
    The proof follows from applying \Cref{lm:active2} and using the linearity of expectation.
\end{proof}
Now, we prove the first claim in \Cref{lm:main}.

\begin{proof}[Proof of \Cref{lm:main}(i)]
   Note that the number of iterations in \randcolortrial is \comments{$\lceil 1+4\log_{24/23} \log n \rceil$}. Hence, by \Cref{lm:active3}, the expected number of \textsf{active} vertices is ${n}/{\log^4 n}$ at the end of the procedure.
\end{proof}
 %We are done with the proof of the first claim in \Cref{lm:main}.

\subsection{Communication and round complexity of random color trial}\label{sec:comm}
To prove the expected communication complexity of \randcolortrial, we prove that the expected communication cost of the procedure is upper bounded by a constant \emph{for each vertex} $v$, and then we apply linearity of expectation.

For a vertex $v$ \textsf{active} in the $i$th %\yijun{I would prefer $i$th over $i$-th, but I am ok with both, as long as we do this consistently everywhere} 
iteration of $\mathsf{Random\mbox{-}Color\mbox{-}Trial}$, the communication is due to executing $\mathsf{Color\mbox{-}Samp}$ to sample a color $c$ from the set of available colors, and then due to confirming whether $c$ is used neither by any vertex in $N_A(v)$ nor in $N_B(v)$, so that $v$ can be colored with color $c$. We refer to the communication cost incurred due to the call of $\mathsf{Color\mbox{-}Samp}$ (over all iterations of $\mathsf{Random\mbox{-}Color\mbox{-}Trial}$) as \emph{Color-Sampling complexity}. Moreover, the communication cost incurred (over all iterations of $\mathsf{Random\mbox{-}Color\mbox{-}Trial}$) to verify whether the sampled color using $\mathsf{Color\mbox{-}Samp}$ is valid is called \emph{Color-Confirmation complexity}. We separately show that the expected value of both of them is $O(1)$.

%Analyzing bit complexity is one of the most challenging aspects of vertex coloring in parallel. In the previous work by \citet{10.1145/3662158.3662796}, it was shown that the expected communication is \( O(n) \) bits in the sequential algorithm. To minimize the round complexity, we use the Random Color Trial technique to parallelize the algorithm, which complicates the analysis. O

\paragraph{Color-confirmation complexity:}  In \Cref{alg:RandomColorTrial}, if vertex $v$ is \textsf{active} in the $i$th iteration, the two parties exchange a one-bit confirmation to verify whether the sampled color is used by any vertex in $N_A(v)$ or in $N_B(v)$. Therefore, the expected  Color-Confirmation complexity for vertex  \( v \) is the expected number of iterations in which \( v \) remains \textsf{active}. %The following lemma argues that the number of iterations a vertex is \textsf{active} is $O(1)$ in expectation. 

Recall from \Cref{sec:active} that $A_i$ is the indicator random variable such that \( A_{i} = 1 \) if the vertex \( v \) is \textsf{active} in the \( i \)th iteration.

\begin{lemma}
    \label{lm:active_iteration}
    The expected number of iterations that a vertex \( v \) is \textsf{active} is constant. That is, the Color-Confirmation cost of a vertex $v$ is $O(1)$ in expectation.
\end{lemma}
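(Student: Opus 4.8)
The plan is to bound the expected number of iterations in which a fixed vertex $v$ is \textsf{active} by summing the probabilities $\Pr[A_i = 1]$ over all iterations $i$, which is exactly the expected value of the random variable counting \textsf{active} iterations for $v$. Concretely, let $T_v := \sum_{i \geq 1} A_i$ be the number of iterations in which $v$ is \textsf{active}; by linearity of expectation, $\E[T_v] = \sum_{i \geq 1} \Pr[A_i = 1]$.

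Next I would invoke \Cref{lm:active2}, which gives $\Pr[A_i = 1] \leq (23/24)^{i-1}$ for every $i \in \mathbb{N}$. Substituting this into the sum yields a geometric series:
\[
\E[T_v] \;=\; \sum_{i \geq 1} \Pr[A_i = 1] \;\leq\; \sum_{i \geq 1} \left(\frac{23}{24}\right)^{i-1} \;=\; 24 \;=\; O(1).
\]
Since \randcolortrial only runs for $O(\log\log n)$ iterations, the actual sum is even a truncated version of this series, so the bound $\E[T_v] \leq 24$ certainly holds. Finally, since the Color-Confirmation cost for $v$ is at most one bit per \textsf{active} iteration (the one-bit confirmation exchanged in \Cref{alg:RandomColorTrial}), the expected Color-Confirmation cost for $v$ is at most $\E[T_v] = O(1)$, which is the claim.

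This proof is essentially immediate given \Cref{lm:active2}, so there is no real obstacle — the only thing to be slightly careful about is the identification of the expected number of \textsf{active} iterations with $\sum_i \Pr[A_i=1]$ (a standard tail-sum / indicator-sum argument) and the observation that the geometric series converges to a constant independent of $n$ and $\Delta$. The one-bit-per-active-iteration accounting for Color-Confirmation is direct from the pseudocode, where the only message in an \textsf{active}-but-not-yet-\textsf{done} iteration is the single confirmation bit.
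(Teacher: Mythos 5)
Your proof is correct and is essentially the same argument as the paper's: both define the count of \textsf{active} iterations as $\sum_i A_i$, apply linearity of expectation together with \Cref{lm:active2}, and bound by the geometric series $\sum_{i\geq 1}(23/24)^{i-1}=24$. Your final observation tying this to the one-bit confirmation cost matches the paper's intent exactly.
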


\begin{proof}
Let \( A \) denote the number of iterations where vertex \( v \) is \textsf{active}. Therefore, $A=\sum_{i=1}^{\comments{\lceil 1+4\log_{24/23} \log n \rceil}}A_i.$
 By \Cref{lm:active2}, $\E[A_i]=\Pr[A_i=1]\leq (23/24)^{i-1}$. By linearity of expectation,   we have
    \[
    \E[A] =  \sum_{i=1}^{\comments{\lceil 1+4\log_{24/23} \log n \rceil}} \E[A_i] ~~ \leq~~ \sum_{i=1}^{\infty} \left(\frac{23}{24}\right)^{i-1} =~~ O(1).\qedhere
    \]
\end{proof}

Hence, from the above lemma, the Color-Confirmation complexity for a vertex $v$ is $O(1)$ in expectation.

%By \Cref{lm:active_iteration}, we can imply that the expected number of bits of communication for each vertex in this confirmation process is \( O(1) \).

\paragraph{Color-sampling complexity:}
Let \( C_i \) be the random variable representing the communication cost for sampling a color using \colorsamp for vertex \( v \) in the \( i \)th iteration.  We can upper bound the expected value of $C_i$ as follows.

\begin{lemma}\label{lem:comm-i}
    $\E[C_i]\leq \left(\frac{23}{24}\right)^{i-1} \cdot \E[C_i \mid A_i = 1],$ where $i \in \mathbb{N}$.
\end{lemma}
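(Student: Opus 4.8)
The plan is to condition on whether vertex $v$ is \textsf{active} in the $i$th iteration. If $A_i = 0$, then $v$ does not participate in iteration $i$ at all, so no call to \colorsamp is made on its behalf and the contribution to the communication cost is zero; hence $C_i = 0$ deterministically on the event $\{A_i = 0\}$. If $A_i = 1$, then the cost is exactly what \colorsamp charges for $v$ in that iteration. So by the law of total expectation,
\[
\E[C_i] = \Pr[A_i = 1]\cdot \E[C_i \mid A_i = 1] + \Pr[A_i = 0]\cdot \E[C_i \mid A_i = 0] = \Pr[A_i = 1]\cdot \E[C_i \mid A_i = 1].
\]
Then I would invoke \Cref{lm:active2}, which gives $\Pr[A_i = 1] \leq \left(\tfrac{23}{24}\right)^{i-1}$, and substitute to obtain $\E[C_i] \leq \left(\tfrac{23}{24}\right)^{i-1}\cdot \E[C_i \mid A_i = 1]$, as claimed.

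The one point that needs a word of care is the independence structure: $C_i \mid A_i = 1$ must be a well-defined quantity, i.e., the expected \colorsamp cost conditioned on $v$ being active should not secretly depend on the history in a way that breaks the bound. Here it is clean because, conditioned on $A_i = 1$, the quantity $\E[C_i \mid A_i = 1]$ is simply the expectation (over the public randomness used inside \colorsamp in iteration $i$, and over the random awake/idle choices of $v$'s neighbors that determine $k$) of the communication cost of one invocation of \colorsamp, and by \Cref{lm:ksi}(ii) this is at most $O(\log^2(\Delta/k))$ for the realized number of available colors $k$, in particular at most the worst-case bound $O(\log^2\Delta)$; no further control is needed at this stage since the lemma statement keeps $\E[C_i \mid A_i = 1]$ as an opaque term to be bounded later.

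I do not anticipate any real obstacle in this lemma — it is a one-line conditioning argument combined with the already-established tail bound on $\Pr[A_i = 1]$. The genuinely substantive work (bounding $\E[C_i \mid A_i = 1]$ via the two-case degree analysis, and then summing $\sum_i \E[C_i]$ over all $O(\log\log n)$ iterations to get $O(1)$ per vertex) comes in the lemmas that follow; this lemma is just the bookkeeping step that isolates the $\left(\tfrac{23}{24}\right)^{i-1}$ decay factor so that even a crude worst-case bound of $O(\log^2\Delta)$ on $\E[C_i \mid A_i = 1]$ will later telescope to a constant.
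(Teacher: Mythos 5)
Your proof is correct and matches the paper's argument exactly: condition on $A_i$, observe that $C_i = 0$ on $\{A_i = 0\}$, apply the law of total expectation, and substitute the bound $\Pr[A_i = 1] \leq (23/24)^{i-1}$ from \Cref{lm:active2}. The additional remark about $\E[C_i \mid A_i = 1]$ being a well-defined opaque term is a sensible sanity check, but it does not change the argument.
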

\begin{proof} Note that $C_i=0$ if vertex $i$ not \textsf{active} in the $i$th iteration, i.e., there is no communication in the \( i \)th iteration if vertex \( v \) is \textsf{done} in some previous iteration. By using \Cref{lm:active2}, the probability that vertex $v$ is active in the $i$th iteration is $(23/24)^{i-1}$. Hence,  
    \begin{align*}
    \E[C_i] &= \Pr[A_i = 1] \cdot \E[C_i \mid A_i = 1] + \Pr[A_i = 0] \cdot \E[C_i \mid A_i = 0] \\
    &= \Pr[A_i = 1] \cdot \E[C_i \mid A_i = 1] \\
    &\leq \left(\frac{23}{24}\right)^{i-1} \cdot \E[C_i \mid A_i = 1]. \qedhere
\end{align*}
\end{proof}

Let \( C \) be the total Color-Sampling complexity for vertex $v$.  Therefore, $C=\sum_{i=1}^{\lceil 1+4\log_{24/23} \log n \rceil} C_i.$
 We divide the analysis to bound the expected value of $C$ into expected costs in two phases: before and after the \( j \)th iteration, where \( \comments{j = \lceil 1+2\log_{24/23} \log \Delta \rceil} \).  That is,

\begin{equation}\label{eqn:color-samp}
\E[C] = \sum_{i = 1}^j \E[C_i] + \sum_{i = j + 1}^{\comments{\lceil 1+4\log_{24/23} \log n \rceil}} \E[C_i].
\end{equation} 

The choice of $j$ is based on the fact that the probability of vertex \( v \) being \textsf{active} after the $j$th iteration is at most \( {1}/{\log^2 \Delta} \) by \Cref{lm:active2} and  the worst-case communication complexity of \colorsamp is \( O(\log^2 \Delta) \) by \Cref{lm:ksi}.
In the next lemma, we argue that the total Color-Sampling complexity for vertex $v$ after $j$th iteration, i.e, $\sum_{i = j + 1}^{\comments{\lceil 1+4\log_{24/23} \log n \rceil}} \E[C_i]$ is $O(1)$ in expectation. In the subsequent lemma, we prove a similar guarantee about $\sum_{i = 1}^j \E[C_i] $, i.e.,  the total Color-Sampling complexity for vertex $v$ before the $j$th iteration.

%Consider the expected communication cost after the \( j \)th iteration for vertex \( v \). Note that the expected communication cost in each iteration is upper bounded by \( O(\log^2 \Delta) \).

\begin{lemma}\label{lm:after}
For $j=\comments{\lceil 1+2\log_{24/23} \log \Delta \rceil}$, $\sum\limits_{i = j + 1}^{\comments{\lceil 1+4\log_{24/23} \log n \rceil}} \E[C_i]=O(1)$.
\end{lemma}
\begin{proof}
By \Cref{lem:comm-i}, $\E[C_i]\leq \left(\frac{23}{24}\right)^{i-1} \cdot \E[C_i \mid A_i = 1]$ for any $i \in \mathbb{N}$. Since the worst case communication complexity of \colorsamp is $O(\log^2 \Delta)$ by \Cref{lm:ksi}, we have $\E[C_i \mid A_i = 1]=O(\log^2 \Delta)$.  Hence, 

\begin{align*}
    \sum_{i = j + 1}^{\comments{\lceil 1+4\log_{24/23} \log n \rceil}} \E[C_i] &= \sum_{i = j + 1}^{\comments{\lceil 1+4\log_{24/23} \log n \rceil}} \left(\frac{23}{24}\right)^{i-1} \cdot \E[C_i \mid A_i = 1] \\
    &\leq \sum_{i = j + 1}^{\infty} \left(\frac{23}{24}\right)^{i-1} \cdot O(\log^2 \Delta) \\
    &\leq O(\log^2 \Delta) \cdot \sum_{i = j + 1}^{\infty} \left(\frac{23}{24}\right)^{i-1} \\
    &\leq O(\log^2 \Delta) \cdot 23 \cdot \left(\frac{23}{24}\right)^{j-1} \\
    &\leq O\left(\log^2 \Delta\right) \cdot \frac{1}{\log^2 \Delta} = O(1).\qedhere
\end{align*}
\end{proof}

From now, we mainly focus on proving the following lemma, hence bounding expected Color-Sampling cost  prior to the \( j \)th iteration.

\begin{lemma}
    \label{lm:bit}
    For \( \comments{j = \lceil 1+2\log_{24/23} \log \Delta \rceil }\), it follows that
    $
    \sum\limits_{i = 1}^j \E[C_i] = O(1).
    $
\end{lemma}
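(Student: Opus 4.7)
The plan is to split the analysis by the initial degree of $v$ and, in each case, relate the expected cost per iteration to the number of uncolored neighbors of $v$ still remaining. Throughout, let $k_v^{(i)}$ denote the number of available colors and $u_v^{(i)}$ the number of uncolored neighbors of $v$ at the start of iteration $i$. Since the colors used by $v$'s neighbors are at most its colored neighbors $\deg(v) - u_v^{(i)}$, we have the basic relation $k_v^{(i)} \geq (\Delta + 1 - \deg(v)) + u_v^{(i)}$. Combined with \Cref{lm:ksi} and \Cref{lem:comm-i}, this gives
\[
\E[C_i] \;\leq\; \left(\tfrac{23}{24}\right)^{i-1} \cdot O\!\left(\log^2\!\tfrac{\Delta}{k_v^{(i)}}\right).
\]

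If $\deg(v) \leq \Delta/2$, then $k_v^{(i)} \geq \Delta/2 + 1$ unconditionally, giving $\E[C_i] \leq (23/24)^{i-1} \cdot O(1)$, and a geometric sum immediately yields $\sum_{i=1}^j \E[C_i] = O(1)$. In the complementary case $\deg(v) > \Delta/2$, I plan to show via concentration that $u_v^{(i)} \geq \deg(v)/2^i$ holds simultaneously for every $i \leq j = O(\log \log \Delta)$ with probability at least $1 - o(1/\log^2 \Delta)$. For each $u \in N(v)$, let $I_u^{(i)}$ indicate the event that $u$'s coin flips in \randcolortrial mark it \textsf{idle} in every iteration $1, \ldots, i-1$; this event alone forces $u$ to remain active and hence uncolored at the start of iteration $i$, so $u_v^{(i)} \geq \sum_{u \in N(v)} I_u^{(i)}$. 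Because awake/idle flips are independent across vertices and across iterations, the $\{I_u^{(i)}\}_{u \in N(v)}$ are mutually independent Bernoullis with parameter $(1/2)^{i-1}$, whose sum has mean $\deg(v) \cdot (1/2)^{i-1} = \Omega(\Delta / \log^2 \Delta)$ for $i \leq j$. A Chernoff lower-tail bound then gives $\Pr\bigl[u_v^{(i)} < \deg(v)/2^i\bigr] \leq \exp(-\Omega(\Delta/\log^2 \Delta))$, and a union bound over $i \in [1,j]$ yields the bad event $B$ probability $\Pr[B] \leq j \cdot \exp(-\Omega(\Delta/\log^2 \Delta))$.

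On $\overline B$ we have $k_v^{(i)} \geq \Delta/2^{i+1}$, hence $\log^2(\Delta/k_v^{(i)}) = O(i^2)$. Splitting $\E[C_i] \leq \Pr[A_i = 1] \cdot O(i^2) + \Pr[B] \cdot O(\log^2 \Delta)$ (using \Cref{lm:ksi} for the worst-case bound under $B$) and summing over $i \leq j$,
\[
\sum_{i=1}^j \E[C_i] \;\leq\; \sum_{i=1}^\infty \left(\tfrac{23}{24}\right)^{i-1} O(i^2) \;+\; j \cdot \exp\!\left(-\Omega\!\left(\tfrac{\Delta}{\log^2 \Delta}\right)\right) \cdot O(\log^2 \Delta) \;=\; O(1) + o(1) \;=\; O(1),
\]
where the first sum converges because $(23/24)^{i-1} i^2$ is summable.

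The main technical obstacle is the concentration step in the high-degree case: the true uncolored indicators for different neighbors of $v$ are coupled through shared common neighborhoods, so they are not directly amenable to Chernoff. The crucial trick is to pass to the lazy indicators $I_u^{(i)}$, which are pointwise dominated by the uncolored indicators---so they yield a valid lower bound on $u_v^{(i)}$---yet are genuinely mutually independent, so the Chernoff machinery applies cleanly. A minor corner case is when $\Delta$ is bounded by a constant (so the Chernoff tail is not exponentially small); but in that regime $\log^2 \Delta = O(1)$ and the lemma follows directly from \Cref{lem:comm-i} and the trivial worst-case bound of \Cref{lm:ksi}.
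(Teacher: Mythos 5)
Your proof is correct, and it follows the paper's overall decomposition (geometric decay of $\Pr[A_i=1]$ times a per-iteration cost of $O(i^2)$, plus a rare bad event charged at the worst-case $O(\log^2\Delta)$, with the same case split on whether $\deg(v)\le\Delta/2$), but the key concentration step in the high-degree case is handled by a genuinely different argument. The paper classifies $v$ as \textsf{high-degree} or \textsf{low-degree} iteration by iteration, applies a Chernoff bound in each iteration to the number of \emph{idle active} neighbors (idle neighbors stay active, so at least a quarter of them survive with overwhelming probability), and then union-bounds over iterations to show the vertex stays \textsf{high-degree} with at least $\Delta/2^{2i}$ active neighbors. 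You instead use a one-shot coupling: the ``lazy'' indicators $I_u^{(i)}$ (neighbor $u$ idle in all of iterations $1,\dots,i-1$) are pointwise dominated by the true uncolored indicators yet are mutually independent across $u$, so a single Chernoff lower tail per iteration plus a union bound gives $u_v^{(i)}\ge \deg(v)/2^i$ simultaneously for all $i\le j$. This sidesteps the recursive high/low-degree bookkeeping entirely and even yields a slightly stronger degree lower bound ($\Delta/2^{i+1}$ versus $\Delta/2^{2i}$); the paper's version, on the other hand, tracks the actual active neighborhood rather than a lazier subset. Two cosmetic points: the mean of $\sum_u I_u^{(i)}$ is $\Omega(\Delta/\operatorname{polylog}\Delta)$ rather than $\Omega(\Delta/\log^2\Delta)$, since $2^{j}=\log^{c}\Delta$ for whatever constant $c$ hides in $j=O(\log\log\Delta)$; and summing $\Pr[B]\cdot O(\log^2\Delta)$ over $i\le j$ picks up an extra factor of $j$. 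Neither affects the conclusion, as the tail remains superpolynomially small in $\log\Delta$ and your constant-$\Delta$ fallback covers the remaining regime.
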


Consider vertices with \emph{initial} low degree in the original graph. If a vertex \( v \) has a sum of degrees in both Alice's and Bob's graphs that is less than \( \Delta/2 \) at the outset, we can demonstrate that its expected bit complexity remains constant. This follows from the fact that the number of available colors in $v$'s palette is at least $\Delta/2$, i.e., \( k_v \geq \Delta/2 \), since its neighbors can occupy at most half of the colors. Consequently, during each iteration in which vertex \( v \) is \textsf{active} and \textsf{awake}, it utilizes \( O(\log^2((\Delta+1/k_v)) = O(1) \) bits in expectation. Applying the same analysis as above, we obtain

\begin{align*}
\sum_{i = 1}^j \E[C_i] & \leq \sum_{i = 1}^j \left(\frac{23}{24}\right)^{i-1} \cdot \E[C_i \mid A_i = 1] \\
& \leq \sum_{i = 1}^j \left(\frac{23}{24}\right)^{i-1} \cdot O(1) \\
&\leq O(1) \cdot \sum_{i = 1}^\infty \left(\frac{23}{24}\right)^{i-1} \\
& = O(1) \cdot 24 = O(1).
\end{align*}

\begin{remark}
    It is important to note that Alice and Bob do not have knowledge of the classification of the initial low-degree vertices and therefore do not treat these vertices differently.
\end{remark}

We now turn our attention to vertices with \emph{initial} high degree in the original graph, where the combined degree of both parties is at least \( \Delta/2 \). A vertex is classified as \textsf{high-degree} if, in the \(i\)th iteration, it is \textsf{active} and has at least \( {\Delta}/{2^{2i}} \) \textsf{active} neighbors. If it does not meet this criterion, the \textsf{active} vertex is classified as \textsf{low-degree}. In the first iteration, all considered vertices are \textsf{high-degree}. This classification is significant because it is generally more challenging to assign an available color to \textsf{low-degree} vertices, as many of their neighbors may have already occupied colors.

The probability of a vertex being \textsf{low-degree} is quite low. During an iteration, for a vertex to be marked as \textsf{low-degree}, more than three-fourths of its \textsf{active} neighbors must be \textsf{awake} and obtained a color. We can quantify this low probability using the Chernoff bound.

\begin{proposition}[Chernoff Bound] Let \( X_1, X_2, \ldots, X_n \) be independent random variables in \([0, 1]\). Let \( X = \sum_{i=1}^n X_i \) and $\mu = \E[X]$. Then, for any \(0 < \delta < 1\),
\[
\Pr[X \leq (1 - \delta)\mu] \leq e^{-\frac{\delta^2 \mu}{2}}.
\]
\end{proposition}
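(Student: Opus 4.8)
The plan is to run the standard exponential‑moment (Markov) argument, specialized to summands bounded in $[0,1]$. First I would fix a parameter $t>0$ and, using that $x\mapsto e^{-tx}$ is decreasing, rewrite the target event as $\{e^{-tX}\ge e^{-t(1-\delta)\mu}\}$; Markov's inequality then gives
\[
\Pr[X\le(1-\delta)\mu]\ \le\ e^{t(1-\delta)\mu}\cdot\E\!\left[e^{-tX}\right].
\]

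The next step is to control the moment generating function. By independence, $\E[e^{-tX}]=\prod_{i=1}^{n}\E[e^{-tX_i}]$, and for each $i$ convexity of $x\mapsto e^{-tx}$ on $[0,1]$ gives the pointwise chord bound $e^{-tX_i}\le 1+(e^{-t}-1)X_i$; taking expectations (with $\mu_i:=\E[X_i]$) and using $1+z\le e^{z}$ yields $\E[e^{-tX_i}]\le e^{(e^{-t}-1)\mu_i}$. Multiplying over $i$ gives $\E[e^{-tX}]\le e^{(e^{-t}-1)\mu}$, hence
\[
\Pr[X\le(1-\delta)\mu]\ \le\ \exp\!\big(\mu\,[(e^{-t}-1)+t(1-\delta)]\big).
\]

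I would then minimize the exponent over $t>0$: its derivative vanishes at $t=-\ln(1-\delta)$, which is positive precisely because $0<\delta<1$, and plugging this in collapses the bound to $\exp\!\big(-\mu\,[\delta+(1-\delta)\ln(1-\delta)]\big)$. The last ingredient is the elementary scalar inequality $\delta+(1-\delta)\ln(1-\delta)\ge \delta^2/2$ for $0<\delta<1$, which I would verify via $g(\delta):=\delta+(1-\delta)\ln(1-\delta)-\delta^2/2$: here $g(0)=0$ and $g'(\delta)=-\ln(1-\delta)-\delta\ge 0$ since $\ln(1-\delta)\le-\delta$, so $g$ is nondecreasing and hence nonnegative on $(0,1)$. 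Substituting gives $\Pr[X\le(1-\delta)\mu]\le e^{-\delta^2\mu/2}$.

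There is no genuine obstacle: this is the textbook multiplicative Chernoff template, and the only places a hasty argument could slip are the sign of the optimal $t$ and the final scalar estimate, both of which rely on $0<\delta<1$. If one wished to skip the optimization, choosing $t$ to be a fixed constant still gives an exponential tail but with a worse constant in the exponent, so I would carry out the optimization to land exactly at $\delta^2\mu/2$.
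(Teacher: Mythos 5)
The paper does not supply a proof of this Chernoff bound; it is stated as a standard, textbook proposition and used as a black box. Your proof is the standard exponential-moment (Markov/Cram\'er) argument for the lower-tail multiplicative Chernoff bound — Markov applied to $e^{-tX}$, the convexity chord bound $e^{-tx}\le 1+(e^{-t}-1)x$ on $[0,1]$ together with $1+z\le e^z$, optimization at $t=-\ln(1-\delta)$, and the scalar estimate $\delta+(1-\delta)\ln(1-\delta)\ge\delta^2/2$ — and all steps check out, including the sign of the optimal $t$ and the monotonicity argument for $g(\delta)$. It is correct and complete, and consistent with what the paper is implicitly invoking.
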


\begin{lemma}
    In the $i$th iteration, the probability that a \textsf{high-degree} vertex becomes \textsf{low-degree} is at most $e^{-\frac{\Delta}{2^{2i+4}}}$.
\end{lemma}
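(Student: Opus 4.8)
The plan is to bound the probability via a direct application of the Chernoff bound to the number of \textsf{awake} \textsf{active} neighbors of the vertex in question. Let $v$ be a vertex that is \textsf{high-degree} at the start of the $i$th iteration, so by definition $v$ is \textsf{active} and has some number $d \geq \Delta/2^{2i}$ of \textsf{active} neighbors in that iteration. For $v$ to become \textsf{low-degree} in the $i$th iteration, it must be \textsf{active} in iteration $i+1$ with fewer than $\Delta/2^{2(i+1)} = \Delta/2^{2i+2}$ \textsf{active} neighbors; in particular, at least $d - \Delta/2^{2i+2}$ of its current \textsf{active} neighbors must have become \textsf{done} in iteration $i$, which requires at least that many of them to have been \textsf{awake}. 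So it suffices to upper bound the probability that the number of \textsf{awake} \textsf{active} neighbors of $v$ is at least $d - \Delta/2^{2i+2}$.

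The key steps, in order: First, I would define $X = \sum_{u} X_u$ where $X_u$ is the indicator that \textsf{active} neighbor $u$ of $v$ is \textsf{awake} in iteration $i$; the $X_u$ are independent $\{0,1\}$ random variables each with mean $1/2$, so $\mu = \E[X] = d/2$. Second, since becoming \textsf{low-degree} forces at least $d - \Delta/2^{2i+2}$ neighbors to be \textsf{awake} (a necessary condition, as only \textsf{awake} neighbors can get colored and leave the \textsf{active} set), it is enough to bound $\Pr[X \geq d - \Delta/2^{2i+2}]$. Using $d \geq \Delta/2^{2i}$, we have $\Delta/2^{2i+2} = d'/4$ for $d' := \Delta/2^{2i} \leq d$, and one checks $d - \Delta/2^{2i+2} \geq \frac{3}{4} d \geq \frac{3}{4}\mu \cdot 2 = \frac{3}{2}\mu$, so this is an upper-tail event. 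Hmm — but the Chernoff bound quoted in the excerpt is a \emph{lower}-tail bound, $\Pr[X \le (1-\delta)\mu]$. So the cleaner route is to bound instead the number of \textsf{idle} \textsf{active} neighbors $Y = d - X$ from below: $v$ staying \textsf{high-degree}-ish is related to enough neighbors staying \textsf{active}, but actually what we want is the complementary event. Let me reframe: the event "$v$ becomes \textsf{low-degree}" is contained in the event "at most $\Delta/2^{2i+2}$ \textsf{active} neighbors of $v$ remain \textsf{active} after iteration $i$", which is contained in "at most $\Delta/2^{2i+2}$ \textsf{active} neighbors are \textsf{idle} in iteration $i$" (an \textsf{idle} neighbor cannot get colored, hence stays \textsf{active}), i.e. "$Y \leq \Delta/2^{2i+2}$". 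Now $Y = \sum_u (1-X_u)$ is a sum of independent $[0,1]$ variables with $\mu_Y = \E[Y] = d/2 \geq \Delta/2^{2i+1}$. Apply the quoted Chernoff bound with $1-\delta$ chosen so that $(1-\delta)\mu_Y = \Delta/2^{2i+2}$; since $\mu_Y \geq \Delta/2^{2i+1} = 2\cdot(\Delta/2^{2i+2})$, we can take $\delta \geq 1/2$. Then $\Pr[Y \leq \Delta/2^{2i+2}] \leq \Pr[Y \leq (1-\delta)\mu_Y] \leq e^{-\delta^2 \mu_Y/2} \leq e^{-\frac{1}{4}\cdot \frac{1}{2}\cdot \frac{\Delta}{2^{2i+1}}} = e^{-\frac{\Delta}{2^{2i+4}}}$, which is exactly the claimed bound.

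The main obstacle I anticipate is the bookkeeping around which event implies which — in particular being careful that an \textsf{idle} \textsf{active} neighbor is guaranteed to remain \textsf{active} for the next iteration (it makes no color attempt, so it cannot be marked \textsf{done}), so that a lower bound on the number of \textsf{idle} neighbors translates into a lower bound on the number of surviving \textsf{active} neighbors, and hence into the \textsf{high-degree} threshold not being crossed. Getting the constants to line up so the exponent comes out to exactly $\Delta/2^{2i+4}$ rather than some other power of two is the only quantitative subtlety, and it works precisely because $\mu_Y = d/2 \geq \Delta/2^{2i+1}$ gives two doublings of slack above the target threshold $\Delta/2^{2i+2}$, licensing $\delta = 1/2$ and thus the factor $\delta^2/2 = 1/8$ against $\Delta/2^{2i+1}$.
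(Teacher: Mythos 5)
Your proposal is correct and takes essentially the same route as the paper: bound the number of \textsf{idle} \textsf{active} neighbors (your $Y$, the paper's $X$) from below via the lower-tail Chernoff bound, observing that \textsf{idle} neighbors stay \textsf{active} so at least that many neighbors survive the iteration. The only cosmetic difference is that you choose $\delta$ so that $(1-\delta)\mu_Y$ hits the threshold $\Delta/2^{2i+2}$ exactly, whereas the paper fixes $\delta = 1/2$ against the lower bound $\mu \ge \Delta/2^{2i+1}$; both yield the same exponent $\Delta/2^{2i+4}$.
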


\begin{proof}
    In the $i$th iteration, we bound the probability that fewer than a quarter of a vertex's \textsf{active} neighbors are \textsf{idle} using Chernoff bounds. A \textsf{high-degree} vertex has at least $\frac{\Delta}{2^{2i}}$ \textsf{active} neighbors, and each \textsf{idle} with probability $\frac{1}{2}$ independently. Let $X$ be the number of \textsf{active} neighbors \textsf{idle} in the next round. We have $\E[X] \geq \frac{\Delta}{2^{2i+1}}$. By Chernoff bounds with $\delta = 1/2$, the probability that fewer than a quarter of the neighbors \textsf{idle} is:
    \[
    \Pr\left[X \leq \left(1 - \frac{1}{2}\right) \frac{\Delta}{2^{2i + 1}}\right] \leq \Pr[X \leq (1 - \delta)\mu] \leq e^{-\frac{\delta^2 \mu}{2}} \leq e^{-\frac{\Delta}{2^{2i+4}}}.
    \]
    If more than a quarter of the \textsf{active} neighbors are \textsf{idle}, the vertex cannot become \textsf{low-degree}. Hence, the probability that a \textsf{high-degree} vertex becomes \textsf{low-degree} in the $i$th iteration is at most $e^{-\frac{\Delta}{2^{2i+4}}}$.
\end{proof}

% \begin{lemma}
%     For each vertex, the probability that it is a \textsf{low-degree} vertex in the $i$th iteration is at most $e^{-\frac{\Delta}{2^{2i+4}}}$.
% \end{lemma}

\comments{Let us upper bound \( e^{-\frac{\Delta}{2^{2i+4}}} \). Note that the function \( e^{-\frac{\Delta}{2^{2i'+4}}} \) is strictly increasing with respect to \( i \). Thus, we have \( e^{-\frac{\Delta}{2^{2i'+4}}} > e^{-\frac{\Delta}{2^{2i+4}}} \) for all \( i' > i \). This is why it is important to bound \( j \) by \( O(\log \log \Delta) \). Specifically, $e^{-\frac{\Delta}{2^{2j+4}}}=  e^{-\frac{\Delta}{2^{O(\log \log \Delta)}}} \leq {1}/{{\log^4 \Delta}} $ for sufficiently large \( \Delta \). Hence, for all $i \leq j$, $e^{-\frac{\Delta}{2^{2i+4}}}\leq {1}/{{\log^4 \Delta}}$.} In the preceding claim, we assume that \( \Delta \) exceeds a sufficiently large constant. This assumption is valid because, if \( \Delta \) is a constant, we can simply use the worst-case analysis, which has an expected bit complexity of \( O(\log^2 \Delta) = O(1) \).

Next, we seek to upper bound the probability that a vertex becomes \textsf{low-degree} in the \( i \)th iteration.

\begin{lemma}
    \label{lm:low}
Consider a \textsf{high-degree} node in iteration $(i-1)$th, The probability that a vertex becomes \textsf{low-degree} in the \( i \)th iteration for some \( i \leq j \) is at most \( {1}/{{\log^3 \Delta}} \), where \( j = \lceil 1 + 2\log_{24/23} \log \Delta \rceil \).

\end{lemma}

\begin{proof}
    By the \emph{union bound}, the probability that a vertex becomes \textsf{low-degree} can be upper bounded as follows:%\yijun{Note: should use QEDHERE to place the qed-symbol in the right place if needed, like below.}
    \[
    \sum_{i=1}^{j} e^{-\frac{\Delta}{2^{2t+4}}} \leq \sum_{i=1}^{j} \frac{1}{\log^3 \Delta} \leq (\lceil 1+2\log_{24/23} \log \Delta \rceil) \cdot {\frac{1}{\log^4 \Delta}}  \leq \frac{1}{{\log^3 \Delta}}. \qedhere
    \]
    Note that we have assumed that $\Delta$ is sufficiently large.
\end{proof}
\color{black}
 Now, we prove \Cref{lm:bit}.

\begin{proof}[Proof of \Cref{lm:bit}]
When vertex $v$ has \emph{initial} high degree in the original graph, i.e.,  the combined degree of both parties is at least \( \Delta/2 \). That is, we show that $\sum_{i = 1}^j \E[C_i]=O(1)$.

%\begin{proof}[Proof of \Cref{lm:bit}]% By \Cref{lem:comm-i},
Again using the fact that $C_i=0$ if vertex $i$ not \textsf{active} in the $i$th iteration,
    \[
        \sum_{i = 1}^j \E[C_i] = \sum_{i = 1}^j \Pr[A_i = 1] \cdot \E[C_i \mid A_i = 1].
    \]
    
    Let \( H_i \) and \( L_i \) denote the events that vertex \( v \) is \textsf{high-degree} and \textsf{low-degree}, respectively, in the \( i \)th iteration.
    
    \[
        \sum_{i = 1}^j \E[C_i] = \sum_{i = 1}^j \Pr[H_i = 1] \cdot \E[C_i \mid H_i = 1] + \Pr[L_i = 1] \cdot \E[C_i \mid L_i = 1].
    \]
    
    Recall from \Cref{lm:ksi} that if the palette size is \( k \), then \Cref{alg:k-Slack-Int} samples a color using \( O\left(\log^2 ((\Delta+1)/k)\right) \) bits of communication in expectation. Thus, the expected cost for sampling a color for a \textsf{high-degree} vertex in the \( i \)th iteration is $\E[C_i \mid H_i = 1] = O\left(\log^2\left((\Delta+1)/\frac{\Delta}{2^{2i}}\right)\right) = O(i^2).$ For \textsf{low-degree} vertices, the worst-case expected bit complexity is $\E[C_i \mid L_i = 1] \leq O(\log^2 \Delta).$
    
    Therefore, we have
    
    \begin{align*}
        \sum_{i = 1}^j \E[C_i] & = \sum_{i = 1}^j \Pr[H_i = 1] \cdot \E[C_i \mid H_i = 1] + \Pr[L_i = 1] \cdot \E[C_i \mid L_i = 1] \\
        & \leq \sum_{i = 1}^j \Pr[H_i = 1] \cdot O(i^2) + \sum_{i = 1}^j \Pr[L_i = 1] \cdot O(\log^2 \Delta).
    \end{align*}
    
    In the \( i \)th iteration, a vertex is classified as \textsf{high-degree} only if it is \textsf{active}. According to \Cref{lem:active}, we have $\Pr[H_i = 1] \leq \Pr[A_i = 1] \leq \left(23/24\right)^{i-1}.$ \comments{The probability that a vertex becomes \textsf{low-degree} in iteration \( i \) for some \( i \leq j \) is at most \( {1}/{{\log^3 \Delta}} \), as shown in \Cref{lm:low}.} {Thus, $\Pr[L_i=1]\leq {1}/{\log^3{\Delta}}$ for each $i \leq j$.}

    Thus, we conclude that
    
    \[
       { \sum_{i = 1}^j \E[C_i] \leq \sum_{i = 1}^j \left(\frac{23}{24}\right)^{i-1} \cdot O(i^2) +  \lceil 1 + 2\log_{24/23} \Delta \rceil \cdot \frac{1}{\log^3 \Delta} \cdot O(\log^2 \Delta) = O(1).}    \]
    
%\end{proof}

Hence, we are done with the proof of \Cref{lm:bit} when vertex $v$ has initial high degree. %\yijun{Similarly, can we put the proof of \Cref{lm:bit} into proof environment? The structure of the lemmas/proofs is a bit messy without putting proofs into proof environment properly} 
Earlier, we showed the same when vertex $v$ has initial low degree.     
\end{proof}

By \Cref{lm:bit} and \Cref{lm:after} along with \Cref{eqn:color-samp}, the Color-Sampling cost in $\mathsf{Random\mbox{-}Color\mbox{-}Trial}$ for a vertex $v$ is $O(1)$ in expectation.

\begin{lemma}\label{lem:color-samp}
    For a vertex $v$, the total Color-Sampling cost in \randcolortrial  is $O(1)$ in expectation.
\end{lemma}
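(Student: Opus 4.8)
The plan is to assemble this bound directly from the two cases already isolated, namely \Cref{lm:bit} and \Cref{lm:after}, via the decomposition in \Cref{eqn:color-samp}. Recall that the total Color-Sampling cost for a fixed vertex $v$ is $C = \sum_{i=1}^{O(\log\log n)} C_i$, where $C_i$ is the communication spent by \colorsamp on $v$ in iteration $i$ (and $C_i = 0$ whenever $v$ is no longer \textsf{active} in iteration $i$). First I would fix the threshold $j = O(\log\log\Delta)$ — the same $j$ used throughout \Cref{sec:comm}, chosen so that $\Pr[A_{j+1}=1] \le 1/\log^2\Delta$ by \Cref{lm:active2} while the worst-case cost of \colorsamp is $O(\log^2\Delta)$ by \Cref{lm:ksi} — and split $\E[C]$ as $\sum_{i=1}^{j}\E[C_i] + \sum_{i=j+1}^{O(\log\log n)}\E[C_i]$, exactly as in \Cref{eqn:color-samp}.

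For the tail $\sum_{i=j+1}^{O(\log\log n)}\E[C_i]$, I would invoke \Cref{lm:after} verbatim: the geometric decay $\Pr[A_i=1] \le (23/24)^{i-1}$ beyond iteration $j$ multiplied against the worst-case per-iteration cost $O(\log^2\Delta)$ telescopes to $O(\log^2\Delta)\cdot(23/24)^{j-1} = O(\log^2\Delta)\cdot O(1/\log^2\Delta) = O(1)$. For the head $\sum_{i=1}^{j}\E[C_i]$, I would invoke \Cref{lm:bit}, which already handles both sub-cases: if $v$ has initial combined degree $< \Delta/2$, then its available palette always has size $> \Delta/2$, so each \textsf{active}--\textsf{awake} step costs $O(\log^2(\Delta/k_v)) = O(1)$ in expectation and the geometric sum over $i \le j$ is $O(1)$; if $v$ has initial combined degree $\ge \Delta/2$, the \textsf{high-degree}/\textsf{low-degree} dichotomy together with the Chernoff estimate from \Cref{lm:low} gives $\sum_{i=1}^j \big((23/24)^{i-1}\cdot O(i^2) + O(1/\log^2\Delta)\cdot O(\log^2\Delta)\big) = O(1)$. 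Adding the two $O(1)$ contributions yields $\E[C] = O(1)$, which is the claim.

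The ``hard part'' has in fact already been discharged in the preceding lemmas: the genuine work was the high-degree classification and the Chernoff bound feeding \Cref{lm:bit}, and the tail truncation feeding \Cref{lm:after}. The only thing left here is to confirm that the index ranges in the two lemmas exactly partition $\{1,\dots,O(\log\log n)\}$ at $j$, so no iteration is double-counted or omitted; this is immediate from \Cref{eqn:color-samp}. Thus the present lemma is a one-line consequence, and I would state it as such rather than repeating the calculations.
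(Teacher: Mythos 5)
Your proposal matches the paper's proof exactly: the paper also obtains this lemma as an immediate consequence of combining \Cref{lm:bit} and \Cref{lm:after} through the decomposition in \Cref{eqn:color-samp}. The recap of the internals of those two lemmas is harmless but unnecessary; the one-line combination is all that is needed.
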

Now, we are ready to prove the second claim of \Cref{lm:main} which is about the total communication complexity of \randcolortrial.
\begin{proof}[Proof of \Cref{lm:main} (ii)]
Recall that the total Color-Confirmation cost for a vertex $v$ is $O(1)$ in expectation due to \Cref{lm:active_iteration}. Hence, combining \Cref{lm:active_iteration} with \Cref{lem:color-samp}, we have that the expected communication cost of $\mathsf{Random\mbox{-}Color\mbox{-}Trial}$ \emph{for each vertex} is upper bounded by a constant. By linearity of expectation, we are done with the proof of \Cref{lm:main} (ii), i.e., the expected communication complexity of $\mathsf{Random\mbox{-}Color\mbox{-}Trial}$ is $O(n)$ bits.
\end{proof}

Now, we analyze the round complexity of \randcolortrial and prove \Cref{lm:main} (iii).

\begin{proof}[Proof of \Cref{lm:main} (iii)]
    
In each iteration, colors are sampled for all \textsf{active} vertices in parallel. According to \Cref{lm:ksi}, the protocol requires \( O(\log \Delta) \) rounds to sample a color from the available palette in the worst case. Since this process is repeated for \( O(\log \log n) \) iterations, the overall round complexity of \randcolortrial is \( O(\log \log n \cdot \log \Delta) \) in the worst case.%\yijun{I think it would be better if there is a ``proof of \Cref{lm:main}'' so that the reader can easily locate the proof and see the dependencies between the lemmas. Same for other lemmas where the proof is not in the proof environment.}
\end{proof}

\subsection{Final algorithm and the proof of \Cref{thm:vertex}}\label{sec:final-vertex-col}

\mainvertex*
\begin{proof}

The protocol for the $(\Delta+1)$-vertex coloring problem  proceeds as follows:

\begin{center}
    \fbox{
        \begin{minipage}{0.9\textwidth}
            \begin{enumerate}
                \item \textbf{Run \randcolortrial}: Run \randcolortrial (\Cref{alg:RandomColorTrial}) on $G = (V, E)$. Let $Z$ be the set of uncolored vertices remaining at the end of the procedure.

                \item \textbf{Formulate \DLC instance}: Consider the subgraph $G_Z$ induced by the vertices in $Z$. For each vertex $v \in Z$, let $A_v$ and $B_v$ be the sets of colors assigned to already colored vertices in $N_A(v)$ and $N_B(v)$, respectively. Define the sets of available colors for $v$ as $\Psi_A(v) = [\Delta+1] \setminus A_v$ and $\Psi_B(v) = [\Delta+1] \setminus B_v$. Note that the graph $G_Z$ along with the sets $\Psi_A(v)$ and $\Psi_B(v)$ for $v \in Z$ forms a valid \DLC instance in the two-party communication model.

                \item \textbf{Solve \DLC}: Run the protocol for \DLC, as described in \Cref{lem:d1lc}, to color the vertices in $Z$.
            \end{enumerate}
        \end{minipage}
    }
\end{center}

Now, we analyze the above protocol.
\begin{description}
    \item[Correctness of the protocol:] Due to \Cref{lm:main}, \randcolortrial produces a valid partial coloring for all vertices in $V \setminus Z$. Furthermore, by \Cref{lem:d1lc}, the protocol for \DLC ensures a valid coloring for all the vertices in $Z$. Thus, the combined protocol results in a valid vertex coloring of the entire graph $G$.

    \item[Communication complexity of the protocol:] Let $C_1$ and $C_2$ denote the total communication costs incurred by \randcolortrial and the \DLC protocol, respectively. According to \Cref{lm:main} (ii), the expected communication complexity of \randcolortrial is $\mathbb{E}[C_1] = O(n)$ bits. To calculate $\mathbb{E}[C_2]$, note that $C_2$ depends on the random variable $Z$, which represents the number of uncolored vertices remaining after \randcolortrial. By \Cref{lem:d1lc} (i), we have $\mathbb{E}[C_2 | Z] = O(|Z| \cdot \log^2 |Z| \cdot \log^2 \Delta + |Z| \cdot \log^3 |Z|)$ bits. Since $|Z| \leq n$ and $\Delta \leq n$, it follows that $\mathbb{E}[C_2 | Z] = O(|Z| \cdot \log^4 n)$ bits. Furthermore, by \Cref{lm:main} (i), the expected number of vertices in $Z$ is $O(n / \log^4 n)$.
 Therefore,
    \[
    \mathbb{E}[C_2] = \mathbb{E}[\mathbb{E}[C_2 \mid Z]] = O\left( \mathbb{E}[|Z| \cdot \log^4 n] \right) = O(n).
    \]
    Hence, the expected communication complexity of the protocol  is $\mathbb{E}[C_1] + \mathbb{E}[C_2] = O(n)$ bits.

    \item[Round complexity of the protocol:] The round complexity of \randcolortrial is $O(\log \log n \cdot \log \Delta)$, due to \Cref{lm:main} (iii), and the round complexity of the \DLC protocol is $O(\log \Delta)$, due to \Cref{lem:d1lc}. Hence, the total round complexity of the protocol  is $O(\log \log n \cdot \log \Delta)$ in the worst case.\qedhere
\end{description}
\end{proof}

% Thus, putting things together, we conclude the proof of \Cref{thm:vertex}.
%\yijun{Maybe put the above discussion into a proof environment?}

\section{Protocol for \texorpdfstring{$(2\Delta-1)$}{(2Delta-1)}-edge coloring}\label{sec:edge-upper}
\label{sec:edge-protocol}
%\yijun{Separate the upper and lower bounds into two sections?}
% \gopinath{May be we need some introduction to the section.}
% \hung{Yes, I will add soon.}

In this section, we present a deterministic protocol for the $(2\Delta-1)$-edge coloring problem that uses $O(n)$ bits of communication and requires $O(1)$ rounds, thus proving \Cref{thm:edge}. In \Cref{sec:edge-existent}, we discussed two classical results in edge coloring that are used in our protocol.

\paragraph{Bounded degree graph:}
First, if \(\Delta\) is a constant, finding a \(2\Delta - 1\)-edge coloring is straightforward. We will demonstrate that there exists a simple deterministic protocol requiring \(O(n)\) bits of communication.

\begin{lemma}\label{lem:constant-degree-edge-col}
If \(\Delta\) is a constant, there exists a deterministic protocol that requires \(O(n)\) bits and one round of communication to achieve a \((2\Delta - 1)\)-edge coloring.
\end{lemma}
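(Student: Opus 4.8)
The plan is to let Alice color her own subgraph locally, have her send Bob one $O(1)$-bit summary per vertex (total $O(n)$ bits, one round, Alice $\to$ Bob), and then let Bob finish by solving a purely local list edge-coloring on his subgraph whose lists encode the constraint ``avoid the colors Alice already placed at a shared endpoint''. Only Alice transmits; this asymmetry is exactly what makes a \emph{single} round suffice, since if both parties colored and spoke simultaneously their colorings could collide at shared vertices.

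\emph{Alice's step.} Since $G_A = (V, E_A)$ has maximum degree at most $\Delta$, the greedy edge coloring of $G_A$ uses at most $2\Delta - 1$ colors; Alice computes such a proper coloring $\phi_A \colon E_A \to [2\Delta-1]$ with no communication. For each vertex $v$ set $\Phi_A(v) = \{\phi_A(e) : e \in E_A,\ v \in e\} \subseteq [2\Delta-1]$. As $\Delta = O(1)$, each $\Phi_A(v)$ is encodable as a $(2\Delta-1)$-bit mask, i.e.\ $O(1)$ bits, so Alice sends $(\Phi_A(v))_{v\in V}$ to Bob using $O(n)$ bits in one round, and outputs $\phi_A$.

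\emph{Bob's step.} For each $e = \{u,v\} \in E_B$ Bob forms the list $L(e) = [2\Delta-1] \setminus (\Phi_A(u) \cup \Phi_A(v))$. The one genuine computation is the size bound: since $\deg_A(w) + \deg_B(w) = \deg(w) \le \Delta$ for every $w$, we have $|\Phi_A(w)| = \deg_A(w) \le \Delta - \deg_B(w)$, so $|L(e)| \ge (2\Delta-1) - (\Delta - \deg_B(u)) - (\Delta - \deg_B(v)) = \deg_B(u) + \deg_B(v) - 1$, which is exactly one more than the degree of $e$ in the line graph of $G_B$. Because every graph is $(\mathsf{degree}+1)$-list-colorable (standard; provable by a one-line vertex-removal induction, or by greedy along a degeneracy order), Bob locally computes a proper coloring $\phi_B$ of $E_B$ with $\phi_B(e) \in L(e)$ for all $e \in E_B$, and outputs it.

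\emph{Correctness and cost.} Two $E_A$-edges sharing a vertex differ by properness of $\phi_A$; two $E_B$-edges sharing a vertex are adjacent in the line graph of $G_B$, hence differ by construction of $\phi_B$; and if $e_A \in E_A$, $e_B \in E_B$ share a vertex $v$ then $\phi_A(e_A) \in \Phi_A(v)$ while $\phi_B(e_B) \notin \Phi_A(v)$, so they differ. Thus $\phi_A \cup \phi_B$ is a proper $(2\Delta-1)$-edge coloring; the total communication is $O(n)$ bits in one round. The degenerate case $\Delta = 1$ is handled separately (the graph is a matching split between the parties, no vertex is incident to edges of both, so everyone uses color $1$). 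I do not expect any real obstacle here: the only content is the list-size inequality above, and the role of $\Delta$ being constant is simply to make each per-vertex message $O(1)$ bits (hence $O(n)$ overall).
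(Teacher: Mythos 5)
Your proof is correct and takes essentially the same route as the paper's: Alice greedily $(2\Delta-1)$-edge-colors $G_A$, transmits per-vertex occupancy information in $O(n)$ bits and one round, and Bob finishes greedily. The only difference is that you make the final greedy step explicit as a $(\mathsf{degree}+1)$-list coloring on the line graph of $G_B$ together with the list-size inequality, whereas the paper states the ``at least one available color'' claim without spelling it out — your version is a tighter writeup of the same argument, not a different one.
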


\begin{proof}
The case where \(\Delta = 1\) is trivial, as the two parties can simply use the only available color to color their respective edges.

For \(\Delta \geq 2\), Alice colors her edges using \(2\Delta - 1\) colors through a greedy coloring algorithm. Since each edge is adjacent to at most \(2\Delta - 2\) other edges, it is always possible to find a color for each edge. After coloring, she transmits the remaining available colors for each vertex to Bob. Because the number of colors is constant, this information can be communicated in \(O(n)\) bits within a single round. Bob can then apply a greedy coloring algorithm on his graph, as there will always be at least one available color for each edge in Bob's graph.
\end{proof}

% From this point onward, we assume that $\Delta$ is at least four.\yijun{At some appropriate place, make it clear where we use this assumption} Within the \(2\Delta - 1\) colors, Alice will use \(\Delta - 1\) colors, Bob will use another set of \(\Delta - 1\) colors, and the only remaining color is called the \textsf{special} color. Through \(O(n)\) bits of communication, Alice and Bob can exchange the classification of vertices in their respective graphs by sending an array of \(n\) bits. For instance, they can indicate the vertices with degree \(\Delta - 1\) by marking the corresponding entries in the array with bit \(1\). From this point onward, we assume that Alice and Bob are aware of the vertices with degree \(\Delta\) and the vertices with degree \(\Delta - 1\) in each other's graphs. We will now describe the protocol concerning Alice's graph, noting that when we refer to a vertex of degree \(\Delta\), we are indicating its degree in Alice's graph unless otherwise specified explicitly.\yijun{The above paragraph can be made more precise.}

\paragraph{Unbounded degree graph:}
Alice and Bob agree on a \emph{partition} of \(2\Delta - 1\) colors into three sets: two sets of size \(\Delta - 1\) each, known as Alice's palette and Bob's palette, and one \textsf{special} color. We now describe the protocol \emph{concerning Alice's graph}, noting that when we refer to the degrees of vertices, we specifically mean their degree in Alice's graph unless stated otherwise. The protocol follows the same structure on Bob's side. See \Cref{alg:EdgeColoringProtocol}.

\begin{algorithm}[ht!]%<- this is to avoid large white space
\caption{Deterministic \(2 \Delta - 1\) Edge Coloring Protocol with $\Delta \geq 8$}
\label{alg:EdgeColoringProtocol}
\KwIn{Given a graph \(G = (V, E)\) with maximum degree \(\Delta\), where \(E\) is partitioned adversarially into \(E_A\) and \(E_B\). Alice receives \(G_A = (V, E_A)\) and Bob receives \(G_B = (V, E_B)\).}
\KwOut{A proper edge coloring of the graphs using \(2\Delta - 1\) colors.}

\tcp{Alice's protocol (the same protocol applies to Bob)}
$\textsc{DG} \gets (V, \emptyset)$ \tcp{Deferred subgraph}

$\textsc{RG} \gets (V, E_A)$ \tcp{Remaining subgraph}

\While{there exists an edge \(e \in \textsc{RG}\) connecting vertices of degree at least \(\Delta - 1\)}{
    Add \(e\) to \textsc{DG} and remove \(e\) from \textsc{RG}.
}

Find a \(\Delta\)-perfect matching \(M\) by \Cref{lm:matching}.

\For{edge \(e \in M\)}{
    Remove \(e\) from \textsc{RG};
}

Find a \((\Delta - 1)\)-edge coloring for the remaining subgraph \textsc{RG} using Alice's palette by \Cref{thm:independent}.

Alice and Bob send two arrays of \(n\) bits to indicate vertices of degree larger than \(\Delta / 2\), and vertices covered by their respective \(\Delta\)-perfect matching.

\For{edge \(e \in M\) that not incident to any vertex covered by Bob's \(\Delta\)-perfect matching}{
    Color $e$ with the \textsf{special} color.
}

\For{edge \(e \in M\) that incident to vertex $v$ covered by Bob's \(\Delta\)-perfect matchings}{
    \If{degree of \(v\) in Bob's graph is greater than \(\Delta / 2\)}{
        Color $e$ with the \textsf{special} color.
    } \Else{
        Find a color from Bob’s available palette of vertex $v$ using \Cref{lm:edge-sample} to color the edge $e$. \label{step:edge-sample}
    }        
}

Find a proper coloring of the deferred subgraph \textsc{DG} using the first seven colors from Bob's palette by \Cref{lm:defer}.
\end{algorithm}

%\gopinath{May be we can highlight (in the pseudocode) that the protocol is for $\Delta \geq 4.$}\hung{Yes}
%\yijun{In the pseudocode why do we need 4 cases for match vertices and not just 2 cases (either use special color or find a replacement color)}\hung{I have done with the simplification}

% They exchange an array of \(n\) bits, where each entry with value \(1\) indicates that the corresponding vertex has degree \(\Delta\) in their respective graphs. Additionally, they exchange another array of \(n\) bits to indicate which vertices have degree \(\Delta - 1\). 

\paragraph{Deferring edges:}
Recall \Cref{thm:independent}: if the vertices of maximum degree \( \Delta - 1 \) form an independent set, then there exists a proper edge-coloring with \( \Delta - 1 \) colors. To achieve this, we defer the coloring of several edges such that the remaining subgraph satisfies the conditions of the theorem. We denote the deferred subgraph as \( \textsc{DG} = (V, \emptyset) \), which initially contains no edges, and the remaining subgraph as \( \textsc{RG} = (V, E_A) \), where \( E_A \) represents the edges initially present in Alice's part. Alice then sequentially identifies edges in the subgraph \( \textsc{RG} \) that connect vertices with the following pairs of degrees: 
\[
\{ \{\Delta, \Delta\}, \{\Delta, \Delta - 1\}, \{\Delta - 1, \Delta - 1\} \}.
\]
For each such edge, she adds it to the deferred subgraph \( \textsc{DG} \) and removes it from the remaining subgraph \( \textsc{RG} \). Alice continues this process until the remaining subgraph \( \textsc{RG} \) contains no edges that connect vertices of degree at least \( \Delta - 1 \).

%\gopinath{It seems that we want to add one edge at a time to DG. Then see a potential edge to be added to \textsc{DG} in graph $G$ after removing already added edges to DG?} \hung{Yes, how should I fix?}\gopinath{This looks fine now.}

\begin{lemma}
\label{lm:edge-defer}
The maximum degree of \textsc{DG} is \(2\).
\end{lemma}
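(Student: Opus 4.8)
The plan is to bound the degree in $\textsc{DG}$ of an arbitrary vertex $v$ by analyzing the degree $\deg(v)$ that $v$ has in Alice's graph $G_A$, since $\textsc{DG}$ is built by moving edges out of $\textsc{RG}$ (initialized to $G_A$). First I would observe the key selection rule: an edge $e$ is added to $\textsc{DG}$ only if, at the moment it is processed, both of its endpoints have degree at least $\Delta-1$ \emph{in the current $\textsc{RG}$}. Since edges are only ever removed from $\textsc{RG}$, the degree of any vertex in $\textsc{RG}$ is non-increasing over the course of the \textbf{while} loop; in particular, if $v$ ever has $\textsc{RG}$-degree strictly less than $\Delta-1$, it can never again be an endpoint of a deferred edge. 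So every edge of $\textsc{DG}$ incident to $v$ must have been deferred while $v$ still had $\textsc{RG}$-degree $\geq \Delta-1$.

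Next I would count. Suppose $v$ has $d$ edges in $\textsc{DG}$, and order them $e_1, e_2, \ldots, e_d$ by the time they were deferred. Just before $e_j$ is deferred, the edges $e_1,\ldots,e_{j-1}$ have already been removed from $\textsc{RG}$, so $v$'s degree in the current $\textsc{RG}$ is at most $\deg(v) - (j-1)$. For $e_j$ to be eligible for deferral we need this quantity to be at least $\Delta-1$, i.e. $\deg(v) - (j-1) \geq \Delta - 1$. Taking $j = d$ (the last deferred edge) gives $\deg(v) - (d-1) \geq \Delta - 1$, hence $d \leq \deg(v) - \Delta + 2 \leq \Delta - \Delta + 2 = 2$, using $\deg(v) \leq \Delta$. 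Since $v$ was arbitrary, the maximum degree of $\textsc{DG}$ is at most $2$.

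The only subtlety — and the one place to be careful — is making precise the monotonicity claim used above: that a vertex's $\textsc{RG}$-degree can only decrease, so that an edge incident to $v$ is deferrable only when $v$ still has at least $\Delta-1$ incident edges remaining, and that these "remaining" edges include all of $v$'s not-yet-deferred $\textsc{DG}$-edges. This is immediate from the fact that the loop performs only deletions, but it is worth stating explicitly that at the time $e_j$ is considered, the edges $e_{j+1},\ldots,e_d$ (and $e_j$ itself) are all still in $\textsc{RG}$ and incident to $v$, so $v$'s current $\textsc{RG}$-degree is at least $d - j + 1$; combined with the requirement that it also be $\geq \Delta-1$ this still only yields $d \le 2$ through the counting above. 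I do not anticipate any real obstacle here; the argument is a short bookkeeping argument once the monotonicity of $\textsc{RG}$-degrees is pinned down.
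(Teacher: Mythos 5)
Your proof is correct and follows essentially the same approach as the paper's: both arguments observe that a vertex's $\textsc{RG}$-degree only decreases, that an edge can be deferred only while both endpoints still have $\textsc{RG}$-degree at least $\Delta-1$, and that since initial degree is at most $\Delta$, at most two deferrals can occur at any vertex. Your version just spells out the bookkeeping (ordering the deferred edges and taking $j = d$) that the paper states in one line.
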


\begin{proof}
Each vertex has a maximum degree of \(\Delta\), and an edge \emph{cannot} be added if it is incident to a vertex with a degree of less than or equal to \(\Delta - 2\). Consequently, each vertex can have at most two edges added to the deferred subgraph, implying that the maximum degree in the deferred graph \textsc{DG} is \(2\).
\end{proof}

\paragraph{\(\Delta\)-perfect matching:} The goal is for Alice to use her set of \(\Delta - 1\) colors to color the remaining subgraph \textsc{RG}. To achieve this, we must ensure that the subgraph has a maximum degree of \(\Delta - 1\) and that no vertices of degree \(\Delta - 1\) are connected, as stated in \Cref{thm:independent}. It is important to note that we apply \Cref{thm:independent} with respect to \((\Delta - 1)\)-edge coloring. Consider the current subgraph \textsc{RG}, where the vertices of degrees \(\Delta\) and \(\Delta - 1\) form an independent set. To address the vertices of degree \(\Delta\), we will remove one edge incident to each of these vertices from \textsc{RG} to satisfy the required conditions. To facilitate this, we will first find a matching—a set of disjoint edges—that covers every vertex of degree \(\Delta\). This is referred to as a \(\Delta\)-perfect matching.%, which exists due to Hall's marriage theorem.

%\yijun{emphasize that your goal is to apply this tool with max degree $\Delta-1$ and not $\Delta$ so that Alice can use her set of colors.}\hung{Done}  %\farrel{Add Hall's Marriage to overview or preliminary?}\yijun{Can do this, or can refer to it in the proof below.}

% \begin{proposition}[\citet{hall1987representatives}]
% Consider a bipartite graph $G = (X, Y, E)$ with bipartite sets \( X \) and \( Y \) and edge set \( E \). For a subset \( W \) of \( X \), let \( N_G(W) \) denote the neighborhood of \( W \) in \( G \), the set of all vertices in \( Y \) that are \emph{adjacent} to at least one element of \( W \). An $X$-perfect matching is a matching, a set of disjoint edges, which covers every vertex in $X$. There is an \( X \)-perfect matching \emph{if and only if} for every subset \( W \) of \( X \):
% \[
% |W| \leq |N_G(W)|
% \]
% \end{proposition}

\begin{lemma}
\label{lm:matching}
    In a graph with maximum degree $\Delta$ where vertices of degree $\Delta$ form an independent set, there exists a $\Delta$-perfect matching.
\end{lemma}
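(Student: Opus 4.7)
The plan is to apply Hall's marriage theorem to an auxiliary bipartite graph. Let $X \subseteq V$ denote the set of vertices of degree $\Delta$ in $G$. By hypothesis, $X$ is independent, so every edge of $G$ incident to a vertex of $X$ has its other endpoint in $Y := V \setminus X$. Define the bipartite graph $H = (X, Y, E_H)$, where $E_H$ is exactly the set of edges of $G$ with one endpoint in $X$ and one endpoint in $Y$. An $X$-perfect matching in $H$ is precisely a matching in $G$ covering every vertex of degree $\Delta$, i.e., a $\Delta$-perfect matching as desired.

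The main step is to verify Hall's condition for $H$: for every $W \subseteq X$, $|W| \leq |N_H(W)|$. Here I will use a standard double-counting argument on the edges of $H$ incident to $W$. On one hand, since $W \subseteq X$ and $X$ is independent, no edge of $G$ has both endpoints in $W$; hence every edge of $G$ incident to a vertex of $W$ lies in $E_H$ and contributes to the edge count from $W$ to $N_H(W)$. Since each vertex of $W$ has degree exactly $\Delta$ in $G$, the number of such edges is exactly $\Delta \cdot |W|$. On the other hand, every vertex in $N_H(W) \subseteq V$ has degree at most $\Delta$ in $G$, so the number of edges from $N_H(W)$ into $W$ is at most $\Delta \cdot |N_H(W)|$. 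Combining these two bounds yields $\Delta \cdot |W| \leq \Delta \cdot |N_H(W)|$, and dividing by $\Delta$ (note $\Delta \geq 1$ is trivial to handle separately if needed) gives $|W| \leq |N_H(W)|$.

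With Hall's condition verified, Hall's marriage theorem produces an $X$-perfect matching in $H$, which we lift back to a $\Delta$-perfect matching of $G$. I do not anticipate a serious obstacle here; the only subtlety to be careful about is ensuring that edges with both endpoints in $X$ are not silently lost in the counting — but the independence of $X$ rules out such edges entirely, which is precisely where the hypothesis is used.
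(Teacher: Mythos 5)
Your proposal is correct and takes essentially the same route as the paper: both build the auxiliary bipartite graph between the degree-$\Delta$ vertices and the remaining vertices, verify Hall's condition by double-counting the edges incident to a subset $W$ (exactly $\Delta|W|$ from the $W$-side, since independence forces all such edges across), and invoke Hall's marriage theorem. The only cosmetic difference is that the paper also uses independence to bound the $Y$-side degrees by $\Delta-1$ and phrases the count as a contradiction, whereas you use the weaker bound $\Delta$, which already suffices.
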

%\farrel{This doesn't work for every graph with max degree $\Delta$. Need to mention in the lemma that the graph also requires the vertices with degree $\Delta$ to be independent set?}\yijun{Agreed.}
\begin{proof}
Consider a bipartite graph \( G = (D, Y, E) \), where \( D \) consists of all vertices of degree \( \Delta \) in one partition, \( Y \) contains the remaining vertices, and \( E \) represents the edges that were originally incident to vertices in \( D \). Since the vertices in \( D \) form an independent set in the original graph, it suffices to show that there exists a matching in {graph} \( G \) that covers all vertices in \( D \).

  Let us consider the \emph{linear program} for the \emph{maximum fractional matching} of the above bipartite graph \( G \).

    \[
\begin{aligned}
    \text{maximize} \quad & \sum_{e \in E} x_e \\
    \text{subject to} \quad & \sum_{e \in N_G(v)} x_e \leq 1, \quad \forall v \in D \cup Y, \\
    & 0 \leq x_e \leq 1, \quad \forall e \in E.
\end{aligned}
\]

Observe that \( x_e = \frac{1}{\Delta} \) for each \( e \in E \) is a feasible solution to the above linear program. Recall that the degree of each vertex in \( D \) is \( \Delta \). Thus,  
\[
\sum_{e \in N_G(v)} x_e = 1, \quad \text{for each } v \in D,
\]
i.e., the fractional matching (with \( x_e = \frac{1}{\Delta} \) for all \( e \in E \)) covers all vertices in \( D \).

As the graph under consideration is a bipartite graph, and due to the fact that there is always an integral matching corresponding to a fractional matching in a bipartite graph \cite{schrijver2003combinatorial}, there exists an integral matching in \( G \) such that all vertices in \( D \) are covered.    
\end{proof}
% \begin{proof}
% Consider a bipartite graph \(G = (D, Y, E)\), where \(D\) is the set of all vertices of degree \(\Delta\) in one partition, \(Y\) is the set of remaining vertices, and \(E\) is the set of all edges that are \emph{originally} incident to vertices in \(D\). We will prove that \(|W| \leq |N_G(W)|\) for every \(W\) by contradiction. 

% Suppose there exists a subset \(W \subseteq D\) such that \(|W| > |N_G(W)|\). We will examine the number of edges in the sub-bipartite graph created by \(W\) and \(N_G(W)\), which we denote as \(E_W\). 

% First, since each vertex in \(W\) has degree \(\Delta\), we have \(E_W = \Delta \cdot |W|\). 

% Second, since the maximum degree in \(N_G(W)\) is \(\Delta - 1\), we can upper bound \(E_W\) by \((\Delta - 1) \cdot |N_G(W)|\). This leads to a contradiction, as it implies

% \[
% \Delta \cdot |W| \leq (\Delta - 1) \cdot |N_G(W)|,
% \]

% which contradicts our assumption that \(|W| > |N_G(W)|\).
% \end{proof}

Thus, each party can find \(\Delta\)-perfect matchings in their respective graphs. By removing the edges in the \(\Delta\)-perfect matchings from the remaining subgraph \textsc{RG}, we obtain a new subgraph where the maximum degree is \(\Delta - 1\) and the set of vertices with degree \(\Delta - 1\) forms an independent set. According to \Cref{thm:independent}, there exists a proper \((\Delta - 1)\)-edge coloring in the remaining subgraphs of Alice, allowing her to use her respective sets of \(\Delta - 1\) colors to color the edges. Next, we need to address the coloring of the deferred subgraph \textsc{DG} and the edges of the matching.

Notably, up to this point, no communication has occurred. Alice aims to color the edges in the \(\Delta\)-perfect matchings using either the \textsf{special} color or colors from Bob's palette. Both parties exchange an array of \(n\) bits to indicate which vertices are covered by their \(\Delta\)-perfect matchings.

Consider an edge \(\{u, v\}\) in Alice's \(\Delta\)-perfect matching, where \(u\) is a vertex of degree \(\Delta\) in Alice's graph. If \(v\) is not covered by Bob's \(\Delta\)-perfect matching, then Alice can color the edge \(\{u, v\}\) with the \textsf{special} color. Otherwise, there exists a vertex \(t\) of degree \(\Delta\) in Bob's graph such that \(\{t, v\}\) is in Bob's \(\Delta\)-perfect matching.

We define vertices that appear in both \(\Delta\)-perfect matchings as \textsf{match} vertices. Alice and Bob exchange another array of \(n\) bits to indicate which vertices have a degree greater than \(\Delta / 2\) in their graphs. For each \textsf{match} vertex \(v\) that is incident to an edge \(\{u, v\}\) in Alice's \(\Delta\)-perfect matching, if the degree of vertex \(v\) in Bob's graph is greater than \(\Delta / 2\), Alice will color the edge \(\{u, v\}\) with the \textsf{special} color. Otherwise, she will select an available color from Bob's palette for vertex \(v\).

It is important to note that there may be several edges incident to \(v\) in Bob's remaining subgraph that use colors from Bob's palette. However, it suffices to use \emph{an available color in Bob's palette for vertex \(v\)}, as \(u\) is a vertex of degree \(\Delta\) in Alice's remaining subgraph, making the edge \(\{u, v\}\) the only possible edge incident to \(u\) using Bob's palette.

\begin{restatable}{lemma}{edgesample}\label{lm:edge-sample}
There exists a deterministic protocol such that, for the set of vertices \(V_{\leq \Delta/2}^{(B)} = \left\{v \in V \mid \deg_{G_B}(v) \leq \Delta/2 \right\}\)
in Bob's graph \( G_B \), the protocol allows Alice to find one available color from Bob's palette for each vertex in \( V_{\leq \Delta/2}^{(B)} \), and vice versa. This is achieved using \( O(n) \) bits of communication and one rounds.
\end{restatable}

\begin{proof}
Consider a vertex \(v \in V_{\leq \Delta/2}^{(B)}\). Bob's palette contains \( \Delta - 1 \) colors. The maximum number of colors already used by the edges incident to \(v\) is at most \( \lfloor \Delta / 2 \rfloor \). Therefore, the number of colors available for \(v\) in Bob's palette is at least \( (\Delta - 1) - \lfloor \Delta / 2 \rfloor \).

The fraction of available colors for \(v\) is thus at least
\[
\frac{(\Delta - 1) - \lfloor \Delta / 2 \rfloor}{\Delta - 1}.
\]

For \( \Delta \geq 4 \), the fraction of available colors is at least \( 1/3 \). Our \Cref{alg:EdgeColoringProtocol} assumes \( \Delta \geq 8 \), so this condition holds.

Now, consider a subset of vertices \( K \subseteq V_{\leq \Delta/2}^{(B)} \) with \( |K| = k \). For each vertex \( v \in K \), at least a \( 1/3 \) fraction of Bob's \( \Delta - 1 \) colors are available in its palette. Let \( A_v \) be the set of available colors for vertex \( v \). We have \( |A_v| \geq \frac{\Delta - 1}{3} \).

We want to show that there exists a color \( c \) that is available in the palettes of at least a \( 1/3 \) fraction of the vertices in \( K \). We use a double-counting argument. Let \( S = \{ (v, c) \mid v \in K, c \in A_v \} \) be the set of pairs where color \( c \) is available for vertex \( v \).

We can count the size of \( S \) in two ways:
\begin{enumerate}
    \item By summing over the vertices: For each vertex \( v \in K \), there are \( |A_v| \geq \frac{\Delta - 1}{3} \) available colors. Thus, \( |S| = \sum_{v \in K} |A_v| \geq k \cdot \frac{\Delta - 1}{3} \).
    \item By summing over the colors: Let \( n_c \) be the number of vertices in \( K \) for which color \( c \) is available. Then, \( |S| = \sum_{c \in \text{Bob's Palette}} n_c \).
\end{enumerate}

Suppose, for the sake of contradiction, that for every color \( c \), \( n_c < \frac{k}{3} \). Since there are \( \Delta - 1 \) colors in Bob's palette, we would have:
\[
|S| = \sum_{c \in \text{Bob's Palette}} n_c < (\Delta - 1) \cdot \frac{k}{3}.
\]
However, we also know that \( |S| \geq k \cdot \frac{\Delta - 1}{3} \), which leads to a contradiction. Therefore, there exists at least one color \( c_K \) such that it is available in the palettes of at least \( \frac{1}{3} \) fraction of the vertices in \( K \).

Now, Bob iteratively finds a set of colors that covers the available colors in the palettes of all vertices in \( V_{\leq \Delta/2}^{(B)} \). Let \( U_1 = V_{\leq \Delta/2}^{(B)} \). In the \( i \)th step, we find a color \( c_i \) that is available for at least \( 1/3 \) fraction of the vertices in the current set \( U_{i} \). Let \( V_i \) be the binary array indicating which vertices in \( U_{i} \) with color \( c_i \) available. We then define \( U_{i+1} \) as the subset of vertices in \( U_{i} \) for which color \( c_i \) is not available (corresponding to the $0$s in \( V_i \)).

The size of \( U_{i+1} \) is at most \(({2}/{3}) |U_{i}| \). Starting with \( |U_0| = |V_{\leq \Delta/2}^{(B)}| \leq n \), after \( t = O(\log n)\) iterations, we have \( |U_t| \leq (2/3)^t n < 1\), which means all vertices have at least one of the chosen colors available. Thus, there exists a set of \( O(\log n) \) colors \( \{c_1, \ldots, c_{O(\log n)}\} \) such that every vertex in \( V_{\leq \Delta/2}^{(B)} \) has at least one of these colors available in its palette.

Finally, Bob sends this set of \( O(\log n) \) colors. Along with this, Bob sends a sequence of binary arrays \( V_1, V_2, \ldots, V_{O(\log n)} \). 

The total number of bits sent in these binary arrays is the sum of their lengths:
\[
\sum_{i=1}^{O(\log n)} \text{length}(V_i) = \sum_{i=1}^{O(\log n)} |U_i| \leq \sum_{j=0}^{\infty} \left(\frac{2}{3}\right)^i \cdot |U_1| \leq 3|U_1| = O(n)
\]

Therefore, the total communication cost for this part is \( O(\log n) \cdot O(\log \Delta)\) for the colors and \( O(n) \) for the binary arrays, resulting in an overall communication of \( O(n) \) bits and one round.
\end{proof}

% We prove \Cref{lm:edge-sample} in \Cref{sec:edge-sample}. 

Thus, we find a proper coloring for both parties' \( \Delta \)-perfect matchings using either \textsf{special} color or color from other party's palette.

In an earlier version (\href{https://arxiv.org/abs/2412.12589v1}{arXiv:2412.12589v1}) of the paper, the protocol of \Cref{lm:edge-sample} was randomized and required  $O(\log^\ast \Delta)$ rounds. We are grateful to Maxime Flin for sharing the idea that led to the improvement to the current deterministic $O(1)$-round protocol.

\paragraph{Coloring deferred subgraph \textsc{DG}:} We now proceed to color Alice's deferred subgraph \textsc{DG} using Bob's palette.

\begin{lemma}
\label{lm:defer}
There exists a protocol that finds a proper coloring of the deferred subgraph \textsc{DG} using \underline{seven} colors from the other party, requiring \(O(n)\) bits of communication and completing in two rounds of communication.
\end{lemma}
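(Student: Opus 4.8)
The plan is to color the deferred subgraph $\textsc{DG}$ by exploiting its very restricted structure: by \Cref{lm:edge-defer}, $\textsc{DG}$ has maximum degree $2$, so each connected component is either a simple path or a cycle. A graph of maximum degree $2$ is $3$-edge-colorable in general (and $2$-edge-colorable if it is a union of even paths/even cycles), so intrinsically three colors would suffice; the reason we allow ourselves \emph{seven} colors is that Alice must use colors from Bob's palette, and the edges of $\textsc{DG}$ at a vertex $v$ may conflict with other edges already colored from Bob's palette at $v$ — namely the (at most one) matching edge colored in \Cref{step:edge-sample} and the (at most one) edge of Bob's own deferred subgraph incident to $v$. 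So first I would bound, for each vertex $v$, the number of ``forbidden'' colors from Bob's palette: at most $2$ from $v$'s two possible $\textsc{DG}$-edges once we fix an orientation/ordering, plus at most $1$ for a matching edge incident to $v$ colored from Bob's palette, plus at most $1$ for Bob's deferred edge at $v$. The key combinatorial claim is that this total is bounded by a constant, so a constant-size sub-palette (seven colors) of Bob's palette always leaves a free color at both endpoints of any $\textsc{DG}$-edge.

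Concretely, I would have Alice and Bob agree in advance on the first seven colors of each palette being reserved for deferred-subgraph coloring. The argument that seven suffices: consider any edge $\{u,v\}$ of Alice's $\textsc{DG}$. At $u$, the colors blocked within the reserved seven are: at most one from the other $\textsc{DG}$-edge at $u$ (handled greedily by orienting each path/cycle and coloring in order), at most one matching edge at $u$ colored from Bob's palette, and at most one edge of Bob's $\textsc{DG}$ at $u$; similarly at $v$. That is at most $3+3=6$ blocked colors across the two endpoints for the current edge, so a greedy choice over seven colors always succeeds. I would walk each path from one end and each cycle from an arbitrary edge, coloring edges one at a time; since each new edge shares an endpoint with at most one previously-colored $\textsc{DG}$-edge, the only subtlety is closing a cycle, where the last edge abuts two previously-colored $\textsc{DG}$-edges — but then the two ``other $\textsc{DG}$-edge'' constraints plus the (at most $1{+}1$) constraints at each endpoint still total at most $2+2+2 < 7$, so a color remains. (One should double-check the worst case at a cycle-closing edge carefully; I would be slightly generous with the constant precisely to absorb this, which is presumably why the bound is stated as seven rather than a smaller number.)

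For the communication and round complexity: Alice needs to know, for each vertex $v$, which of the seven reserved colors of Bob's palette are already used at $v$ by a matching edge colored in \Cref{step:edge-sample} and by Bob's own deferred edge at $v$. Bob can encode this as a constant-size bitstring per vertex — $O(1)$ bits for each of $n$ vertices — and likewise Alice sends Bob the symmetric information; this is $O(n)$ bits in one round. After receiving this, each party locally computes a valid coloring of its own $\textsc{DG}$ using the above greedy procedure, with no further interaction needed. I would budget a second round only if the protocol prefers to first exchange the matching-edge colors and the $\textsc{DG}$-structure before computing — hence the stated ``two rounds'' — but the essential point is that everything is local after a constant amount of per-vertex information is exchanged.

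The main obstacle I anticipate is the bookkeeping of exactly which colors are forbidden at each endpoint and verifying that the worst case (a vertex where a matching edge, a Bob-deferred edge, and two Alice-deferred edges all meet) still leaves a free color among the seven reserved ones, and that this holds \emph{simultaneously} for both endpoints of every edge as we color greedily along paths and around cycles. This is a purely finite case analysis with a comfortable slack, so I expect it to go through cleanly once the degree-$2$ structure of $\textsc{DG}$ and the ``at most one matching edge per vertex, at most one Bob-deferred edge per vertex'' facts are made precise; the latter follows because $u$ being a degree-$\Delta$ vertex in Alice's remaining subgraph forces $\{u,v\}$ to be the unique Alice-edge at $v$ using Bob's palette, and symmetrically for Bob.
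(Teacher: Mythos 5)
Your proposal matches the paper's proof in approach and bookkeeping: both exploit that $\textsc{DG}$ has maximum degree $2$ (\Cref{lm:edge-defer}), bound the Bob-palette colors forbidden at each endpoint by a small constant, observe that seven reserved colors leave slack for a greedy pass over the degree-$\le 2$ components (including cycle-closing edges), and have Bob transmit $O(1)$ bits per vertex indicating which of the seven reserved colors he has used, so the whole step is $O(n)$ bits and locally computable afterward. One small imprecision worth flagging: you identify the Bob-side conflict as ``at most one edge of Bob's $\textsc{DG}$,'' but Bob's deferred subgraph is colored from \emph{Alice's} palette and therefore cannot conflict with the seven colors drawn from \emph{Bob's} palette. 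The relevant fact is that a vertex of Alice's $\textsc{DG}$ has degree $\ge \Delta-1$ in $E_A$ and hence at most one incident edge in $E_B$ overall; if that edge happens to lie in Bob's remaining subgraph $\textsc{RG}$, it is the one colored from Bob's palette. Your per-endpoint count of at most one Bob-side conflict is still correct, so the final arithmetic and the claim of the lemma go through unchanged.
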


\begin{proof}
As established in \Cref{lm:edge-defer}, the deferred subgraph \textsc{DG} has a maximum degree of \(2\). In Alice's \textsc{DG}, the vertices that were originally of degree \(\Delta - 1\) have \emph{at most one} edge in Bob's graph. Additionally, each vertex is incident to \emph{at most one} edge in the \(\Delta\)-perfect matching that may use a color from Bob's palette.

We consider the first seven colors in Bob's palette, denoted as \(\{1, 2, 3, 4, 5, 6, 7\}\). We assume that \(\Delta \geq 8\), as a simple deterministic protocol with constant $\Delta$ is described in \Cref{lem:constant-degree-edge-col}. After coloring Bob's remaining subgraph using his palette, he will send Alice the list of available colors from this set for each vertex, utilizing \(O(n)\) bits of communication.

Since each vertex in Alice's deferred subgraph is incident to at most one edge in Bob's graph and at most one edge in the \(\Delta\)-perfect matching, there are at least five available colors for each vertex. This is sufficient for Alice to perform a greedy coloring of the deferred subgraph.

Consider an edge \(\{u, v\}\) in \textsc{DG}. Since both vertices \(u\) and \(v\) have at least five available colors, they share at least three colors in common. Given that the maximum degree of \textsc{DG} is \(2\), there exists at least one available color that can be used to color the edge \(\{u, v\}\), regardless of the colors of other edges incident to \(u\) and \(v\).

Thus, Alice can successfully color the deferred subgraph \textsc{DG}.
\end{proof}

\mainedgeupper*

\begin{proof}
We now demonstrate the correctness and communication complexity of the protocol.

\paragraph{Correctness of the protocol:} For the case when \(\Delta < 7\), we apply the deterministic protocol outlined in \Cref{lem:constant-degree-edge-col}. Otherwise, for \(\Delta \geq 8\), Alice and Bob use \Cref{alg:EdgeColoringProtocol}. Upon identifying the deferred subgraph and the \(\Delta\)-perfect matching, the remaining subgraph has a maximum degree of \(\Delta - 1\). Notably, vertices with degree \(\Delta - 1\) form an independent set within this subgraph. According to \Cref{thm:independent}, there exists a proper edge-coloring utilizing \(\Delta - 1\) colors, which Alice and Bob can find locally. As both parties use their distinct palettes of size \(\Delta - 1\), the coloring is valid at this stage.

Next, Alice and Bob will proceed to color their respective \(\Delta\)-perfect matchings. They are aware of the \emph{adjacency} of edges in their matchings by sending an array of size $n$ indicating the vertices covered by their matchings. If an edge in one party's matching is not adjacent to any edge in the other's matching, it will be colored with a \textsf{special} color. Conversely, if two edges in the matchings are adjacent, we identify a vertex incident to these edges, which must have a degree of at most \(\Delta / 2\) in at least one of the parties. By using \Cref{lm:edge-sample}, that party can find an available color from the other party's palette to color their edge. 

The coloring remains valid, as the \textsf{special} color is exclusively utilized within the matchings. Furthermore, when two adjacent edges in the matchings are found, at least one party, for instance, Alice, will sample from Bob's palette to color her edge in the matching. This approach is valid because one endpoint of the edge corresponds to a vertex of degree \(\Delta\) in Alice's graph, which is adjacent solely to edges colored with Alice's palette. The other endpoint is a vertex from which Alice and Bob have sampled for an available color from Bob's palette.

Finally, Alice and Bob color their respective deferred subgraphs according to \Cref{lm:edge-defer}, the correctness of which has been established in the proof. Thus, we conclude that \Cref{alg:EdgeColoringProtocol} successfully generates a proper \((2 \Delta - 1)\)-edge-coloring for both Alice and Bob.

%\gopinath{(Though it is obvious, may be a very short description  why the algorithm produces a correct output would be great.} %\yijun{I feel that the structure of the proof is a bit messy here. My feeling is that the analysis of the algorithm mainly depends on \Cref{lm:matching,lm:edge-sample,lm:defer}. They were mentioned in the algorithm description and should also be mentioned here. On the other hand, I found  to be quite strange and I don't see why we need this (it does not seem to add more information compared with the lemma before). I suggested removing this because it is not mentioned in the algorithm description.} 

\paragraph{Communication complexity of the protocol:} Regarding communication complexity, all steps in the protocol are deterministic and require \(O(n)\) bits across a constant number of communication rounds. Therefore, we conclude that the overall communication complexity of \Cref{alg:EdgeColoringProtocol} aligns with the analysis provided.
\end{proof}

\section{Lower bound for \texorpdfstring{$(2\Delta-1)$}{(2Delta-1)}-edge coloring}\label{sec:lowerbound}

In this section, we demonstrate that every constant-error protocol for the $(2\Delta - 1)$-edge coloring problem requires $\Omega(n)$ bits of communication in the worst case, thereby proving \Cref{them:lower}. In \Cref{sec:zero-comm-protocol}, we show that any constant-error protocol for the $(2\Delta - 1)$-edge coloring problem that communicates $o(n)$ bits implies the existence of a zero-communication protocol with a success probability of $2^{-o(n)}$. In \Cref{sec:zero-comm-game}, we construct a zero-communication game, named as \emph{zero communication edge coloring} game (\zec), that can be interpreted as solving the $(2\Delta - 1)$-edge coloring problem in a constant-sized graph without communication. Furthermore, using the parallel repetition theorem \cite{raz1998parallel, Holenstein_2009}, we show that any zero-communication protocol that solves $n$ independent instances of \zec admits an upper bound of $2^{-\Omega(n)}$ on the success probability. In \Cref{sec:final-lower-proof}, we present the desired lower bound proof for the $(2\Delta-1)$-edge coloring problem, where the hard graph instance used in the proof is derived from $n$ independent instances of  \zec. Finally, in \Cref{sec:wstream}, we show that the $(2\Delta-1)$-edge coloring problem admits a space lower bound of $\Omega(n)$ bits in the $W$-streaming model.
% \Cref{sec:zero-comm-protocol} and \Cref{sec:zero-comm-game}.

\subsection{Communication guessing protocol}
\label{sec:zero-comm-protocol}
Though we state the following lemma about zero-communication protocol for the $(2\Delta - 1)$-edge coloring problem, the lemma is quite general and is applicable to other problems.
%\yijun{While I said that the following lemma is folklore, I don't think we need to explicitly say that, unless we have an evidence, e.g., some papers that already used this.}\gopinath{Yes, I agree.}
\begin{lemma}
\label{lemma:zero-comm-protocol}
    Assume that there exists a constant-error randomized protocol $\mathcal{P}$ that solves the $(2\Delta - 1)$-edge coloring problem using $o(n)$ bits worst-case. Then, there exists a zero-communication protocol without public randomness, which solves the $(2\Delta - 1)$-edge coloring problem with a success probability of $2^{-o(n)}$.
\end{lemma}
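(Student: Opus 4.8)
The plan is to use the standard \emph{transcript-guessing} reduction. First I would put $\mathcal{P}$ into a convenient normal form. By Newman's theorem \cite{newman1991private} (cf.\ the remark in \Cref{sec:prelim}), the public randomness of $\mathcal{P}$ can be simulated by private randomness at the cost of only $O(\log n)$ extra bits, which is still $o(n)$; so I may assume $\mathcal{P}$ uses only private randomness, with Alice and Bob holding random strings $r_A$ and $r_B$. By padding I may further assume that on every input and every choice of $(r_A,r_B)$ the protocol produces a transcript of exactly $c = o(n)$ bits (the identity of the next sender is, as always, a function of the transcript so far). For fixed inputs $x,y$ and fixed $(r_A,r_B)$ write $\tau(x,r_A,y,r_B)\in\{0,1\}^c$ for the resulting transcript; the crucial structural fact is that $\tau$ is consistent with Alice's local behavior on $(x,r_A)$ and with Bob's local behavior on $(y,r_B)$, and is in fact the unique string with this property.

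Next I would define the zero-communication protocol $\mathcal{Q}$, which uses only private randomness. Alice privately samples $r_A$ together with a uniformly random guess $\tau_A\in\{0,1\}^c$, and then simulates $\mathcal{P}$ from her side: whenever it is Bob's turn to speak she reads the next bit of $\tau_A$, and whenever it is her turn she computes the bit $\mathcal{P}$ prescribes and checks that it equals the corresponding bit of $\tau_A$. If all of her checks pass she outputs the coloring of $E_A$ that $\mathcal{P}$ would output; otherwise she outputs an arbitrary coloring. Bob behaves symmetrically with an independent $r_B$ and an independent guess $\tau_B$. No bit is ever exchanged, so $\mathcal{Q}$ is a genuine zero-communication protocol without shared randomness.

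Then I would lower-bound the success probability on an arbitrary input $(x,y)$. Consider the event $\mathcal{E}=\{\tau_A=\tau_B=\tau(x,r_A,y,r_B)\}$. On $\mathcal{E}$ both consistency checks pass, since $\tau$ is consistent with each party's local behavior, so Alice and Bob output exactly the pair of colorings $\mathcal{P}$ would output on $(x,y)$ with randomness $(r_A,r_B)$, which is a valid proper $(2\Delta-1)$-edge coloring with probability at least $1-\epsilon=\Omega(1)$ over $(r_A,r_B)$ by the constant-error guarantee. Since $\tau_A$ and $\tau_B$ are uniform on $\{0,1\}^c$ and independent of each other and of $(r_A,r_B)$, for each fixed $(r_A,r_B)$ we have $\Pr[\tau_A=\tau(x,r_A,y,r_B)]\cdot\Pr[\tau_B=\tau(x,r_A,y,r_B)]=2^{-2c}$; averaging over $(r_A,r_B)$ and intersecting with the correctness event gives $\Pr[\mathcal{Q}\text{ succeeds}]\ge 2^{-2c}(1-\epsilon)=2^{-o(n)}$, as desired.

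The main obstacle is precisely the structural claim underlying everything: that the true transcript $\tau(x,r_A,y,r_B)$ is the unique string passing both parties' consistency checks, so that a correct \emph{mutual} guess forces a faithful joint simulation of $\mathcal{P}$ — this is where the fixed transcript length and the determinism of $\mathcal{P}$ given $(x,y,r_A,r_B)$ enter. The remaining ingredients, the Newman-theorem reduction to private randomness and the probability bookkeeping, are routine; one should only be slightly careful that, because $\mathcal{Q}$ has no public randomness, the two guesses must be made independently, which is exactly what costs a factor $2^{-2c}$ rather than $2^{-c}$ (still $2^{-o(n)}$).
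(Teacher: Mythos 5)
Your proposal is correct and takes essentially the same approach as the paper's proof: apply Newman's theorem to remove public randomness, then have both parties independently guess the $o(n)$-bit transcript and lower-bound the success probability by the event that both guesses equal the true transcript (yielding $2^{-2c}(1-\epsilon)=2^{-o(n)}$). The added consistency checks and the remark on transcript uniqueness are harmless extra care but are not actually needed for the bound, since the argument only conditions on both guesses being correct, in which case the simulation is automatically faithful.
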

%\gopinath{This lemma is nothing  specific about $(2\Delta-1)$-edge coloring. Therefore, may be we put it in preliminaries.}\yijun{I am fine with having the lemma here. While this lemma is quite general, it is easier to understand if we have a concrete problem in mind.}\farrel{Maybe add remark that this can be applied for any problem and not just coloring?}\yijun{I would not try to highlight/emphasize the lemma, as most likely this is known/folklore.}
\begin{proof}
First, by \cite{newman1991private}, there exists a protocol $\mathcal{P}'$ that solves the $(2\Delta - 1)$-edge coloring problem using $o(n)$ bits in the worst case, without requiring public randomness. In protocol $\mathcal{P}'$, Alice and Bob exchange a communication pattern $\mathcal{M} \in \{0,1\}^{o(n)}$ to coordinate their actions and find a valid edge coloring.

Now, consider a new protocol, $\mathcal{P}_0$, where Alice and Bob attempt to simulate $\mathcal{P}'$ without exchanging any information or using public randomness. Specifically, in $\mathcal{P}_0$, both Alice and Bob independently and uniformly guess the communication pattern $\mathcal{M}$ that would have been exchanged in $\mathcal{P}'$. Let $\mathcal{M}_A$ and $\mathcal{M}_B$ denote the guesses made by Alice and Bob, respectively, both drawn from $\{0,1\}^{o(n)}$. Each player then uses their respective guesses, $\mathcal{M}_A$ and $\mathcal{M}_B$, to simulate the actions they would take in $\mathcal{P}'$ and produce an output based on those actions.

With a probability of $2^{-o(n)}$, Alice’s and Bob’s guesses will match the actual communication pattern, i.e., $\mathcal{M}_A = \mathcal{M}_B = \mathcal{M}$. In this case, protocol $\mathcal{P}_0$ behaves exactly like $\mathcal{P}'$, successfully producing a valid edge coloring. Thus, $\mathcal{P}_0$ serves as a zero-communication protocol that solves the edge coloring problem with a success probability of at least $2^{-o(n)}$.
\end{proof}

\subsection{Zero-communication edge coloring (\zec) game}
\label{sec:zero-comm-game}

The \zec game is a cooperative game between two players, Alice and Bob, in which they must solve the $(2\Delta-1)$-edge coloring problem on an underlying graph $\mathbf{G}$ (chosen from a probability distribution) with $\Delta=2$.

\begin{description}
    \item[The underlying graph $\mathbf{G}$:] Alice and Bob are given a fixed set of nine vertices, $V = \{v_A, v_B, v_1, v_2, v_3, v_4, v_5, v_6, v_7\}$, which is known to both of them. A referee selects a set of edges $\bea$, for the input of Alice, chosen uniformly at random from 
    \[
    \mathcal{S}_A = \left\{ \{ \{v_A, v_i\}, \{v_A, v_j\} \} \ | \ 1 \leq i < j \leq 7 \right\}.
    \]
    Similarly, independent from $\bea$, the referee selects a set of edges $\beb$, for the input of Bob, chosen uniformly at random from
    \[
    \mathcal{S}_B = \left\{ \{ \{v_B, v_i\}, \{v_B, v_j\} \} \ | \ 1 \leq i < j \leq 7 \right\}.
    \]
    The resulting graph is $\mathbf{G} = (V, \bea \cup \beb)$, which is 3-edge colorable since the maximum degree $\Delta = 2$. Both Alice and Bob know that the set of acceptable colors for the edges of $\mathbf{G}$ is $\{c_1, c_2, c_3\}$.

    \item[Objective and winning condition:] Without any communication or access to public randomness, Alice and Bob must each assign a color from $\{c_1, c_2, c_3\}$ to the edges they receive. They win jointly if their combined assignments result in a valid 3-edge coloring of the graph $\mathbf{G} = (V, \bea \cup \beb)$. 
\end{description}

   Note that the outputs of Alice and Bob are independent because each player’s output is determined solely by their respective input and private random bits. Consequently, any algorithm employed by Alice and Bob to succeed in the \zec game (i.e., to solve the $(2\Delta - 1)$-edge coloring problem on $\mathbf{G}$) can be viewed as a combination of the individual strategies used by the two players.

    %Furthermore, the \zec game can be viewed as finding $(2\Delta - 1)$-edge coloring of a $O(1)$-vertices graph where $\Delta = 2$ with zero communication and no access to public randomness.%\gopinath{May be we specify $\Delta=2$?}

\begin{lemma}
\label{lemma:game-prob}
    For every (randomized) algorithm $\mathcal{A}$ used by Alice and Bob, the probability for them to jointly win the 
    \zec game is bounded by some constant $c<1$. More specifically,
    \[\Pr[\mathcal{A} \text{ gives a proper 3-edge coloring of $\mathbf{G}$}]\le \frac{11024}{11025}\]
\end{lemma}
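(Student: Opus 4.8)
The plan is to reduce to deterministic strategies and then count. First I would observe that since Alice's and Bob's outputs are independent (each determined by their own input plus private randomness), any randomized strategy $\mathcal{A}$ is a distribution over pairs of deterministic strategies $(\mathcal{A}_A, \mathcal{A}_B)$, one for each player. By averaging, it suffices to bound the winning probability (over the referee's random choice of $\bea \in \mathcal{S}_A$ and $\beb \in \mathcal{S}_B$) for every \emph{fixed} pair of deterministic strategies. So fix deterministic $\mathcal{A}_A, \mathcal{A}_B$.

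Next I would set up the labeling machinery from the technical overview. For each $i \in \{1,\dots,7\}$, define $\labela(v_i) \subseteq \{c_1,c_2,c_3\}$ to be the set of colors $c_j$ such that there is \emph{some} input $\bea \in \mathcal{S}_A$ containing $\{v_A,v_i\}$ for which $\mathcal{A}_A$ colors $\{v_A,v_i\}$ with $c_j$; define $\labelb(v_i)$ analogously for Bob. The key structural observations are: (1) if two vertices $v_x, v_y$ have $\labela(v_x) = \labela(v_i) = \{c_j\}$ the same singleton, then on input $\bea = \{\{v_A,v_x\},\{v_A,v_y\}\}$ Alice produces an improper coloring — so at most three vertices can have singleton Alice-labels, and hence at least four vertices $v_i$ have $|\labela(v_i)| \geq 2$; similarly for Bob. (2) Among the (at least four) vertices with $|\labela(v_i)|\ge 2$ and the (at least four) with $|\labelb(v_i)|\ge 2$, a pigeonhole/counting argument forces the existence of a vertex $v_i$ with $|\labela(v_i)|\ge 2$, $|\labelb(v_i)|\ge 2$, and $\labela(v_i)\cap\labelb(v_i)\ne\emptyset$. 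Then there is an input $\bea$ on which Alice colors $\{v_A,v_i\}$ with some $c_j$ and an input $\beb$ on which Bob colors $\{v_i,v_B\}$ with the same $c_j$; since the inputs are chosen independently, this joint input has positive probability and yields an improper coloring. This shows the winning probability is strictly less than $1$ for every fixed deterministic pair, hence for $\mathcal{A}$.

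The remaining work — and the main obstacle — is getting the \emph{explicit} constant $\frac{11024}{11025}$ rather than just ``$<1$''. For this I would quantify the positive failure probability. The number of Alice-inputs is $|\mathcal{S}_A| = \binom{7}{2} = 21$, likewise $|\mathcal{S}_B| = 21$, so $|\mathcal{S}_A|\cdot|\mathcal{S}_B| = 441$; note $441^2 = 194481$, and $11025 = 441 \cdot 25 = 105^2$, which suggests the intended bound comes from analyzing a bad event with probability at least $1/11025$ under the product distribution (e.g. a specific pair of Alice-inputs crossed with a specific pair of Bob-inputs, or a count of ordered-pair witnesses). I would carefully identify, from the structural argument above, a guaranteed set of ``bad'' joint inputs (those simultaneously containing the conflicting edge $\{v_A,v_i\}$ for a witnessing Alice-input and $\{v_i,v_B\}$ for a witnessing Bob-input, on which both players use color $c_j$), lower-bound its cardinality relative to $441^2$, and conclude $\Pr[\text{win}] \le 1 - \frac{1}{11025} = \frac{11024}{11025}$. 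The delicate point is ensuring the counting is worst-case over all deterministic strategy pairs — i.e. that even an adversarially chosen $(\mathcal{A}_A,\mathcal{A}_B)$ cannot make the bad set smaller than this threshold — which is exactly where the ``at most three singletons per party'' bound and the three-color pigeonhole are used tightly.
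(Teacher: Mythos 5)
Your approach is sound and genuinely differs from the paper's. You reduce to deterministic strategy pairs by observing that, since Alice and Bob use only private randomness, the randomized strategy $\mathcal{A}$ is a (product) distribution over pairs of deterministic strategies, and by averaging it suffices to bound the winning probability for every fixed pair; the labels $\labela(v_i),\labelb(v_i)$ then become unambiguous sets of colors that the fixed strategy actually uses. The paper instead works directly with the randomized strategy and places $c_j$ in a label only if some input gives it probability $\ge 1/5$ of being used; the threshold is tuned so a singleton label still forces the color with probability $\ge 3/5$, and the two $1/5$ factors (one per party) are exactly what produce the $11025=(21\cdot 5)^2$ in the denominator. Your reduction avoids those threshold factors, which makes the argument cleaner and, as a bonus, gives a stronger constant.

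Where your proposal stumbles is in treating the explicit constant as a ``remaining obstacle.'' It is not: the lemma only requires \emph{some} bound $\le \frac{11024}{11025}$, and your argument already produces one. For a fixed deterministic pair, the case analysis is exhaustive: (a) four or more vertices have singleton $\labela$-labels, so two of them share a singleton and Alice deterministically miscolors on a specific $\bea$, which the referee picks with probability $\frac{1}{21}$; (b) the symmetric case for Bob; (c) at most three singletons per party, so at least four vertices have $|\labela(v_i)|\ge 2$ and at least four have $|\labelb(v_i)|\ge 2$, hence (with only seven vertices) some $v_i$ has both, and since both labels are $\ge 2$-subsets of a $3$-set they share a color $c_j$ — giving a specific joint input $(\bea,\beb)$ of probability exactly $\frac{1}{21}\cdot\frac{1}{21}=\frac{1}{441}$ on which $\{v_A,v_i\}$ and $\{v_i,v_B\}$ both receive $c_j$. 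The worst case is (c), so $\Pr[\text{win}] \le 1-\frac{1}{441}=\frac{440}{441} < \frac{11024}{11025}$, which proves the lemma. Your closing numerological guess about $11025=441\cdot 25$ and a hidden ``bad set'' is a red herring: that constant is an artifact of the paper's $1/5$-threshold device, and in your formulation you should simply compute the probability of the bad input you already exhibited.
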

%\gopinath{I guess the claim is for a deterministic lower bound when the input is coming from a hard distribution?}\yijun{As discussed in the Monday meeting, I think we need a randomized lower bound for a fixed distribution for the lower bound to work.}

\begin{proof}
    Consider any (randomized) strategy $\mathcal{A}$ used by the players. Based on the behavior of $\mathcal{A}$,  consider the following labeling of the vertices in $\{v_1,\ldots,v_7\}$ w.r.t. Alice and Bob.%Denote $\{c_1, c_2, c_3\}$ as the available palette for the 3-coloring of the edges.

    For each vertex $v_i \in \{v_1, \ldots, v_7\}$, the label of $v_i$ w.r.t. Alice is denoted by $\labela (v_i)\subseteq \{c_1, c_2, c_3\}$ and it contains only those $c_j$ such that there exists a possible  input set of edges $S_A \in \mathcal{S}_A$ for Alice with
    
    \[
        \Pr[\text{Alice colors the edge }\{v_A, v_i\} \ \text{with }c_j \mid \bea =  S_A]\ge \frac{1}{5}.
    \]

    Similarly, we construct the label of $v_i$ w.r.t. Bob is denoted by $\labelb(v_i) \subseteq \{c_1, c_2, c_3\}$ such that it contain only those $c_j$ there exists  a possible input set of edge $S_B \in \mathcal{S}_B$ for Bob with

    \[
        \Pr[\text{Bob colors the edge }\{v_B, v_i\} \ \text{with }c_j \mid \beb = S_B]\ge \frac{1}{5}.
    \]

    Next, define set $V_A$ containing all $v_i \in \{v_1, v_2, \dots, v_7 \}$ such that $|\labela(v_i)| = 1$. Analogously, define $V_B$ as the set containing all $v_i \in \{v_1, v_2, \dots, v_7 \}$ with $|\labelb(v_i)| = 1$.
    
    Next, we categorize the analysis into two distinct cases, depending on the cardinalities of \(V_A\) and \(V_B\).
    
    % when l_A(v_1) = l_A(v_2) = {c_j}
    \paragraph{Case 1:} If $|V_A| \ge 4$ or $|V_B| \ge 4$, assume without loss of generality that $|V_A| \ge 4$. By pigeonhole principle, there exists distinct $v_{i_1}, v_{i_2} \in V_A$, and a color $c_j \in \{c_1,c_2,c_3\}$ such that
    \[
        \labela(v_{i_1})=\labela(v_{i_2})=\{c_j\}.
    \]

Let us   define $\mathcal{E}_1$ and $\mathcal{E}_2$ as the events that $c_j$ is the color assigned to $(v_A, v_{i_1})$ and $(v_A, v_{i_2})$, respectively. Define also $\mathcal{E}$ as the event that the input set edges given to Alice is $\{\{v_A, v_{i_1}\}, \{v_A, v_{i_2}\} \}$, i.e., $\bea$ is $\{\{v_A, v_{i_1}\}, \{v_A, v_{i_2}\} \}$. Then,
    \[
        \Pr[\mathcal{E}_1 \mid \mathcal{E}] = 1 - \sum_{j' \neq j} \Pr[\{v_A, v_{i_1}\} \text{ is colored with }c_{j'} \mid \mathcal{E}] \ge 1 - \sum_{j' \neq j} \frac{1}{5} = \frac{3}{5}.
    \]
    Similarly, we can get $\Pr[\mathcal{E}_2 \mid \mathcal{E}] \ge \frac{3}{5}$. Hence,
    \[
        \Pr[\mathcal{E}_1 \cap \mathcal{E}_2 \mid \mathcal{E}] = \Pr[\mathcal{E}_1 \mid \mathcal{E}] + \Pr[\mathcal{E}_2 \mid \mathcal{E}] - \Pr[\mathcal{E}_1 \cup \mathcal{E}_2 \mid \mathcal{E}] \ge \frac{3}{5} + \frac{3}{5} - 1 = \frac{1}{5}.
    \]
Since, $\bea$ is chosen uniformly at random from $\mathcal{S}_A$, 

$$\Pr[\mathcal{E}] = \frac{1}{|\mathcal{S}_A|} = \frac{1}{\binom{7}{2}} = \frac{1}{21}.$$

Observe that,   if all of $\mathcal{E}_1, \mathcal{E}_2, \mathcal{E}$  happen, then  the two incident edges $\{v_A, v_{i_1}\}$ and $\{v_A, v_{i_2}\}$ are colored with the same color $c_j$. This will imply that the resulting edge coloring of $\mathbf{G}$ is not proper. Therefore,
    \begin{align*}
        \Pr[\text{The strategy  gives aproper $3$-coloring of } \mathbf{G}] &\le 1 - \Pr[\mathcal{E}_1 \cap \mathcal{E}_2 \cap \mathcal{E}] \\
        &= 1 - \Pr[\mathcal{E}] \cdot \Pr[\mathcal{E}_1 \cap \mathcal{E}_2 \ | \ \mathcal{E}] \\
        &\le 1 - \frac{1}{21} \cdot \frac{1}{5}\\
        &= \frac{104}{105}.
    \end{align*}

%    Thus, proving the lemma in this case.

    \paragraph{Case 2:} Otherwise, we have $|V_A| \le 3$ and $|V_B| \le 3$. Then, there exists $v_i \in \{v_1, \ldots, v_7\}$ such that $v_i \notin V_A$ and $v_i \notin V_B$. We have $|\labela(v_i)| \ge 2$ and $|\labelb(v_i)| \ge 2$, and hence by pigeonhole principle, exists common color $c_{j'} \in \labela(v_i) \cap \labelb(v_i)$.

    By definition of $\labela(v_i)$, there exists possible input set of  edges $T_A \in \mathcal{S}_A$ for Alice such that $\Pr[\{v_A,v_i\} \text{ is colored with } c_{j'} \mid \bea = T_A] \ge \frac{1}{5}$. Similarly, there also exists a possible input set of edges $T_B \in \mathcal{S}_B$ for Bob such that $\Pr[\{v_B,v_i\} \text{ is colored with }c_{j'} \mid \beb = T_B] \ge \frac{1}{5}$.

    Let us define  $\mathcal{E}_A$ and $\mathcal{E}_B$ as the events that Alice is given input $T_A$ and Bob is given input $T_B$, respectively. Also, define $\mathcal{E}'_A$ and $\mathcal{E}'_B$ as the events that $c_{j'}$ is the color of edge $\{v_A, v_i\}$ and $\{v_B, v_i\}$ respectively.  Note that $\Pr[\mathcal{E}'_A \mid \mathcal{E}_A] \ge \frac{1}{5}$ and $\Pr[\mathcal{E}'_B \mid \mathcal{E}_B] \ge \frac{1}{5}$. We can then get:
    \begin{align}
        \Pr[\mathcal{E}_A \cap \mathcal{E}_B \cap \mathcal{E}'_A \cap \mathcal{E}'_B] &= \Pr[\mathcal{E}_A \cap \mathcal{E}'_A] \cdot \Pr[\mathcal{E}_B \cap \mathcal{E}'_B] &&\tag{1}\label{eq1} \\
        &= \Pr[\mathcal{E}_A] \cdot \Pr[\mathcal{E}'_A \mid \mathcal{E}_A] \cdot \Pr[\mathcal{E}_B] \cdot \Pr[\mathcal{E}'_B \mid \mathcal{E}_B] &&\tag{2}\label{eq2}\\
        &\ge \frac{1}{21} \cdot \frac{1}{5} \cdot \frac{1}{21} \cdot \frac{1}{5} &&\tag{3}\label{eq3} \\
        &= \frac{1}{11025} &&\notag
    \end{align}
    Where \Cref{eq1} comes from the independence of $\mathcal{E}_A \cap \mathcal{E}'_A$ and $\mathcal{E}_B\cap  \mathcal{E}'_B$ as the input of Alice and Bob are independent of each other and the output of Alice and Bob are independent of each other; \Cref{eq2} comes from Bayes theorem; \Cref{eq3} comes from the fact that $\Pr[\mathcal{E}_A] = \Pr[\mathcal{E}_B] = \frac{1}{\binom{7}{2}} = \frac{1}{21}$, $\Pr[\mathcal{E}'_A \mid \mathcal{E}_A] \ge \frac{1}{5}$, $\Pr[\mathcal{E}'_B \mid \mathcal{E}_B] \ge \frac{1}{5}$.
    %\yijun{Can write ${7 \choose 2} = 21$ so it becomes clearer.}
    %Where equation (1) comes from the independence of $\mathcal{E}_A$ and $\mathcal{E}_B$; equation (2) comes from the independence of $\mathcal{E}_1$ and $\mathcal{E}_B$ as Alice won't have any knowledge about Bob's edges; equation (3) comes from the fact that Bob's output only depends on the given input edges \farrel{Can check this statement? I'm not sure if the reasoning for (3) is correct.}\yijun{Looks correct. I would simplify the calculation until (3) into just 1 or 2 lines: first use independence to write $\Pr[\mathcal{E}_A \cap \mathcal{E}_B \cap \mathcal{E}_1 \cap \mathcal{E}_2]=\Pr[\mathcal{E}_A \cap \mathcal{E}_1] \cdot \Pr[\mathcal{E}_B \cap \mathcal{E}_2]$, and then jump to equation (3)}; inequality (4) comes from the fact that $\Pr[\mathcal{E}_A] = \Pr[\mathcal{E}_B] = \frac{1}{21}$, $\Pr[\mathcal{E}_1 \mid \mathcal{E}_A] \ge \frac{1}{5}$, $\Pr[\mathcal{E}_2 \mid \mathcal{E}_B] \ge \frac{1}{5}$.
    
    Observe that when $\mathcal{E}_A, \mathcal{E}_B, \mathcal{E}_1, \mathcal{E}_2$ are all happening, the resulting edge coloring of $\mathbf{G}$ is not proper as two incident edges $\{v_A, v_i\}$ and $\{v_B, v_i\}$ are assigned with the same color $c_{j'}$. Hence,
    \[
        \Pr[\text{The strategy gives a proper 3-coloring}] \le 1 - \Pr[\mathcal{E}_A \cap \mathcal{E}_B \cap \mathcal{E}_1 \cap \mathcal{E}_2] \le 1 - \frac{1}{11025} = \frac{11024}{11025}
    \]
Combining the analysis of Case 1 and 2, we are done with the proof of \Cref{lemma:game-prob}.
    %Once again, proving the lemma.
\end{proof}
Now, we show that $2^{-\Omega(n)}$ is the upper bound on the probability with which Alice and Bob can solve $n$ independent instances of \zec game. We rely on the parallel repetition theorem of Raz~\cite{Holenstein_2009,raz1998parallel}.

\begin{proposition}[{Parallel repetition theorem~\cite{Holenstein_2009,raz1998parallel}}]
\label{thm:parallel-game}
    Suppose that there is a game where a referee selects a pair $(x,y)$ from a known joint distribution $P_{XY}$. The referee sends $x$ to Alice and $y$ to Bob, who then provide outputs $a$ and $b$ respectively, without communicating with each other. Alice and Bob win the game if a known condition $Q(x,y,a,b)$ is satisfied. If the highest probability for Alice and Bob to win this game is $v<1$, then the probability for them to win all $n$ independent instances of this game is at most $\left(1 - \frac{(1-v)^3}{6000}\right)^{\frac{n}{\log s}}$, where $s$ is the maximum number of possible outputs $(a,b)$ of the original game.
\end{proposition}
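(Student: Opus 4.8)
Since this statement is quoted essentially verbatim from \citet{raz1998parallel} and \citet{Holenstein_2009}, the ``proof'' in the paper is really just an invocation of their theorem; what follows is a sketch of how one would actually establish it, following Holenstein's streamlined version of Raz's argument. Write $G$ for the single game, with value $v=\mathrm{val}(G)<1$, and $G^{n}$ for the $n$-fold repetition, whose inputs $(X_{1},Y_{1}),\dots,(X_{n},Y_{n})$ are drawn i.i.d.\ from $P_{XY}$. Fix an arbitrary (randomized) strategy $(f,g)$ for $G^{n}$ and let $W_{j}$ denote the event that coordinate $j$ is won. The plan is to prove the contrapositive: assuming $\Pr\bigl[\,\bigcap_{j=1}^{n}W_{j}\,\bigr] > \bigl(1-\tfrac{(1-v)^{3}}{6000}\bigr)^{n/\log s}$, I would manufacture a strategy for the \emph{single} game $G$ that wins with probability strictly greater than $v$, contradicting the definition of $v$.

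The first step is the standard ``win a growing set of coordinates'' argument: by a chain-rule/telescoping estimate applied to the quantities $\Pr[\bigcap_{j\in T}W_{j}]$ as $T$ grows, one extracts a set $T\subseteq[n]$ and an index $i\notin T$ such that, writing $\mathcal{W}_{T}:=\bigcap_{j\in T}W_{j}$, the event $\mathcal{W}_{T}$ has probability at least some $(1-\cdots)^{|T|}$ (so that $\log(1/\Pr[\mathcal{W}_{T}])=O(|T|)$), while coordinate $i$ is still won conditionally: $\Pr[W_{i}\mid\mathcal{W}_{T}]\ge v+\eta$ for a suitable $\eta=\eta(v)>0$. This is a pure counting argument and uses nothing about the structure of $G$.

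The heart of the argument, which I would carry out next, is a distortion bound followed by a dependency-breaking (correlated-sampling) step. Because conditioning on an event of probability $p$ inflates relative entropy by at most $\log(1/p)$, and the $P$-independence of the coordinates makes relative entropy superadditive across coordinates, the ``information budget'' $O(|T|)=O(\log(1/\Pr[\mathcal{W}_{T}]))$ is spread over the $\approx n-|T|$ free coordinates; hence for a typical free coordinate $i$ the conditional input law of $(X_{i},Y_{i})$ given $\mathcal{W}_{T}$ is within small statistical distance of $P_{XY}$ (Pinsker), and, conditioned further on $X_{i}=x$ and $Y_{i}=y$, the remaining coordinates can each be sampled \emph{locally} — Alice from $(x,\text{shared randomness})$ and Bob from $(y,\text{shared randomness})$ — so that the two local samples coincide with high probability. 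It is in this reconstruction that the answer alphabet enters the bookkeeping and produces the $\log s$ in the exponent.

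Finally I would assemble the single-game strategy: given $(x,y)\sim P_{XY}$, Alice and Bob use shared randomness to sample $T$, to condition on $\mathcal{W}_{T}$, and to fill in all coordinates other than $i$ consistently with $X_{i}=x$, $Y_{i}=y$ (each party samples its own side; the correlated-sampling guarantee makes the two fillings agree with high probability), and then output $f(\cdot)_{i}$ and $g(\cdot)_{i}$. The winning probability of this strategy on $G$ is at least $\Pr[W_{i}\mid\mathcal{W}_{T}]$ minus the accumulated distortion and sampling errors, which by the parameter choices is still $>v$ — the contradiction. Propagating the constants through these error terms is exactly what yields the cubic dependence $\tfrac{(1-v)^{3}}{6000}$ and the $n/\log s$ exponent. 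The main obstacle is the dependency-breaking step: exhibiting, with \emph{no communication}, local sampling procedures for Alice and Bob whose outputs agree with high probability and whose joint law is close to the true conditional law of $(X_{[n]\setminus i},Y_{[n]\setminus i})$ given $\mathcal{W}_{T}$, $X_{i}$ and $Y_{i}$ — this is the technical core of parallel repetition, and the loss it incurs is what forces both the power $3$ and the $\log s$ factor.
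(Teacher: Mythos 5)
The paper provides no proof of this proposition: it is quoted verbatim as an external black-box result from \citet{raz1998parallel} and \citet{Holenstein_2009}, exactly as you note at the outset. Your sketch of Holenstein's streamlined argument — contrapositive, extraction of a good coordinate $i$ given a conditioning set $T$, the information/Pinsker distortion bound, the no-communication correlated-sampling step that breaks dependency and introduces the $\log s$ factor — is an accurate summary of the genuine proof at the level of detail you give, and correctly identifies the dependency-breaking step as the technical core.
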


\begin{lemma}
\label{lemma:low-success-prob}
Any randomized algorithm that solves   $n$ independent instances of the \zec game admits a success probability upper bounded by $2^{-\Omega(n)}$.%\yijun{I would call this a lemma. A corollary is something of independent interest.}
\end{lemma}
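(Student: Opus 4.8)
The plan is to apply the parallel repetition theorem (\Cref{thm:parallel-game}) directly to the \zec game, using the single-instance bound established in \Cref{lemma:game-prob}. First I would identify the parameters: by \Cref{lemma:game-prob}, the maximum winning probability of a single \zec instance is $v \leq \frac{11024}{11025} < 1$, so $1 - v \geq \frac{1}{11025}$. Next I would bound $s$, the number of possible joint outputs: in one \zec instance, Alice receives a set of two edges incident to $v_A$ and colors each with one of three colors, giving at most $3^2 = 9$ possible colorings on her side (and similarly for Bob), so $s \leq 9 \cdot 9 = 81$; in any case $s$ is an absolute constant, so $\log s = O(1)$.

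With these in hand, \Cref{thm:parallel-game} gives that the probability of winning all $n$ independent instances is at most
\[
\left(1 - \frac{(1-v)^3}{6000}\right)^{\frac{n}{\log s}} \leq \left(1 - \frac{1}{6000 \cdot 11025^3}\right)^{\frac{n}{\log 81}}.
\]
Since the base is a constant strictly less than $1$ and the exponent is $\Theta(n)$, this expression is $2^{-\Omega(n)}$. I would make this last step explicit by writing the bound as $\exp(-cn)$ for the constant $c = \frac{1}{\log 81}\cdot\ln\!\left(\frac{1}{1 - 1/(6000\cdot 11025^3)}\right) > 0$, which establishes the claim.

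The only subtlety — and the step I would be most careful about — is verifying that the \zec game genuinely fits the hypotheses of \Cref{thm:parallel-game}: the referee picks $(\bea, \beb)$ from a known product distribution over $\mathcal{S}_A \times \mathcal{S}_B$, sends one coordinate to each player, the players answer without communication, and the winning predicate "the combined assignment is a proper $3$-edge coloring of $\mathbf{G} = (V, \bea \cup \beb)$" is a fixed function of $(\bea, \beb)$ and the two answers. All of this was set up in \Cref{sec:zero-comm-game}, and the remark there that Alice's and Bob's outputs are independent given their inputs is exactly what makes \zec a valid two-prover one-round game. I would also note that running $n$ independent copies of \zec on disjoint vertex sets (nine fresh vertices per copy) is precisely the $n$-fold parallel repetition, so no extra argument about "independence" is needed beyond citing the construction. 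This is essentially a bookkeeping step rather than a real obstacle; the mathematical content is entirely carried by \Cref{lemma:game-prob} and \Cref{thm:parallel-game}.
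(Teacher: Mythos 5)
Your proposal is correct and follows exactly the paper's approach: the paper's own proof of this lemma is a one-liner ("follows from \Cref{lemma:game-prob} and \Cref{thm:parallel-game}"), and you have simply filled in the bookkeeping that the paper leaves implicit — bounding $v$, bounding $s$ by a constant, and checking that \zec fits the two-prover one-round game template. The parameter check and the observation that $\log s = O(1)$ are exactly what's needed, and nothing in your argument goes beyond or deviates from the paper's route.
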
%\gopinath{Giving a name to the edge-coloring game will help this statement to be more precise.}
\begin{proof}
    The proof follows from \Cref{lemma:game-prob} and \Cref{thm:parallel-game}.
\end{proof}

\subsection{Final lower bound and the proof of \Cref{them:lower}}
\label{sec:final-lower-proof}

\edgecoloringlower*

\begin{proof}
Assume, for the sake of contradiction, that there exists a constant-error randomized protocol that solves the $(2\Delta - 1)$-edge coloring problem using $o(n)$ bits in the worst case. By \Cref{lemma:zero-comm-protocol}, this would imply the existence of a zero-communication protocol, without access to public randomness, that solves the $(2\Delta - 1)$-edge coloring problem with a success probability of $2^{-o(n)}$.

Next, consider the problem of solving $n$ independent instances of the \zec game. This is equivalent to finding a $(2\Delta - 1)$-edge coloring of some specific kind of graphs with $O(n)$ vertices, where the edges are partitioned between two parties, without any communication or access to public randomness. By \Cref{lemma:low-success-prob}, the probability that Alice and Bob successfully find a $(2\Delta - 1)$-edge coloring for such a graph is at most $2^{-\Omega(n)}$, which is worse than $2^{-o(n)}$. This leads to a contradiction.
%Thus, \Cref{them:lower} is proved.%\yijun{Try to write the proof of the main theorem into a proof environment.}
\end{proof}

\subsection{Space lower bound in the \texorpdfstring{$W$}{W}-streaming model}\label{sec:wstream}
%\gopinath{I will edit the section tomorrow.}
Consider reducing two-party communication complexity to the standard streaming model, where output size also counts toward space usage. If a problem $P$ has an $r$-pass, $s$-space streaming algorithm, then Alice and Bob can simulate it in the two-party communication model using $rs$ bits. Alice runs the algorithm on her input and sends the memory state of the streaming algorithm to Bob, who continues with his input and sends the updated memory state back. Repeating this $r$ times yields the desired protocol. Therefore, from \Cref{them:lower} and \Cref{cor-lb-main}, we have an $\Omega(n)$-bit space lower bound for the $(2\Delta-1)$-edge coloring problem in the standard streaming model for a constant number of passes. However, an $\Omega(n)$-bit lower bound is trivial in the context of the $(2\Delta-1)$-edge coloring in the standard streaming model, since the output size is as large as the input size. Therefore, as already mentioned in \Cref{sec:contribution}, the $W$-streaming model is a more natural model for the $(2\Delta-1)$-edge coloring problem. Recall that, in the $W$-streaming model, the output is also allowed to be reported in the streaming fashion and the space complexity of an algorithm is just the internal storage of the algorithm.

Observe that the general reduction of the two-party communication model to standard streaming does not apply to the reduction of the $(2\Delta-1)$-coloring problem in the two-party model to the same problem in the $W$-streaming model. This is because, in our setup of the $(2\Delta-1)$-edge coloring problem, Alice and Bob need to report the colors of their respective edges. However, a $W$-streaming algorithm may report the color of an edge belonging to Alice while processing Bob's edges.  

To address this issue, we consider the following relaxed version of the $(2\Delta-1)$-edge coloring problem in the two-party communication model:  

\paragraph{Weaker-$(2\Delta - 1)$-edge coloring:}  The input setup for the weaker-$(2\Delta - 1)$-edge coloring problem in the two-party model is the same as that of the $(2\Delta - 1)$-edge coloring problem, i.e., the edges of the input graph are partitioned between Alice and Bob. However, unlike in the $(2\Delta - 1)$-edge coloring problem, Alice and Bob are not required to report the colors of their respective edges. Instead, each party is allowed to report the color of any edges, as long as the color of every edge is output by at least one of them.

Observe that we can reduce the weaker-$(2\Delta-1)$-edge coloring problem in the two-party communication model to the $(2\Delta-1)$-edge coloring problem in the $W$-streaming model. Therefore, let us first discuss how we can modify the proof of \Cref{them:lower} to establish a communication complexity lower bound of $\Omega(n)$ bits for the weaker-$(2\Delta-1)$-edge coloring problem.

 %Hence, showing that the communication complexity of  weaker-$(2\Delta-1)$-edge coloring is $\Omega(n)$ bits is enough to get \Cref{coro:w}.

 \paragraph{Modified \zecnew game:} Analogous to the \zec game used to prove \Cref{them:lower}, we consider a modified version of \zec.  Previously, in the \zec game, Alice receives the edges $\{\{v_A, v_i\}, \{v_A, v_j\}\}$, where $i$ and $j$ are sampled uniformly from $\{1, 2, \dots, 7\}$. In the modified \zecnew game, Alice instead receives the edges $\{\{v_A^\ast, v_i\}, \{v_A^\ast, v_j\}\}$, where the vertex $v_A^\ast$ is sampled uniformly from the set of vertices $\{v_{A_1}, v_{A_2}, \dots, v_{A_{33075}}\}$, and $i, j$ are sampled uniformly from $\{1, 2, \dots, 7\}$. Similarly, Bob receives the edges $\{\{v_B^\ast, v_i\}, \{v_B^\ast, v_j\}\}$, where $v_B^\ast$ is sampled uniformly from $\{v_{B_1}, v_{B_2}, \dots, v_{B_{33075}}\}$, and $i, j$ are sampled uniformly from $\{1, 2, \dots, 7\}$.
We say that Alice and Bob win the game if at least one of the following winning conditions are met:
\begin{enumerate}
    \item Alice and Bob color their respective edges using three colors, and the combined result forms a valid 3-edge coloring;
\item Alice correctly guesses Bob's vertex $v_B^\ast$;
\item Bob correctly guesses Alice's vertex $v_A^\ast$.
\end{enumerate}

Using \Cref{lemma:game-prob}, the probability that a strategy $\mathcal{A}$ wins this modified \zecnew game is bounded by
\begin{align*}
&\Pr[\text{A proper 3-edge coloring is given}]\\
&+ \Pr[\text{Alice correctly guesses } v_B^\ast] + \Pr[\text{Bob correctly guesses } v_A^\ast] \\
    & \leq \frac{11024}{11025} + \frac{1}{33075} + \frac{1}{33075} = \frac{33074}{33075}. 
\end{align*}

Suppose there exists a constant-error randomized protocol that solves the weaker-$(2\Delta - 1)$-edge coloring using $o(n)$ bits in the worst case. Using the proof of \Cref{lemma:zero-comm-protocol}, this implies the existence of a zero-communication protocol $\mathcal{P}_0$ that solves the weaker-$(2\Delta - 1)$-edge coloring problem with a success probability of $2^{-o(n)}$.  

Now, consider solving the weaker-$(2\Delta - 1)$-edge coloring problem on a graph $G$, where $G$ is constructed as the union of the underlying graphs of $n$ independent \zecnew game instances, $(\mathcal{G}_1, \mathcal{G}_2, \dots, \mathcal{G}_n)$. For each \zecnew game instance $\mathcal{G}_i$:

\begin{itemize}
    \item If Alice and Bob each outputs the color for their own edges in the underlying graph of $\mathcal{G}_i$ and these colors form a valid edge coloring, then game $\mathcal{G}_i$ is won.
    \item If Alice outputs some of Bob's edges in the underlying graph of $\mathcal{G}_i$ and the overall coloring is valid, then Alice must know Bob's vertex $v_B^\ast$ in the underlying graph $\mathcal{G}_i$, which counts as a win.
    \item Similarly if Bob outputs some of Alice's edges in the underlying graph of $\mathcal{G}_i$ and the overall coloring is valid, then Bob must know Alice's vertex $v_A^\ast$, and $\mathcal{G}_i$ is won.
\end{itemize}

Thus, protocol $\mathcal{P}_0$ would allow Alice and Bob to win all $n$ independent instances of the \zecnew game with probability $2^{-o(n)}$. On the other hand, by \Cref{thm:parallel-game}, the probability for both parties to win all $n$ independent instances of the \zecnew game is bounded by $2^{-\Omega(n)}$, which leads to a contradiction. Hence, $\Omega(n)$ bits are required to solve the weaker $(2\Delta - 1)$-edge coloring problem.

Hence, we have the following theorem analogous to \Cref{them:lower}:

\begin{theorem}
      Any randomized protocol that solves the weaker-$(2\Delta - 1)$-edge coloring problem with probability $1/2$ requires $\Omega(n)$ bits of communication in the worst case.
\end{theorem}

In a similar way to how we argued that \Cref{them:lower} implies \Cref{cor-lb-main} in \Cref{sec:contribution}, we can also argue that the expected communication complexity of the weaker-$(2\Delta - 1)$-edge coloring problem is $\Omega(n)$ bits. Now, considering the reduction of the weaker-$(2\Delta - 1)$-edge coloring problem to the same problem in the $W$-streaming model, we obtain the desired result in the $W$-streaming model.

\wstream*

%\farrel{Need to explain that this implies $\Omega(n)$ lower bound in W-Stream? I'm not sure how to do this.}

\section{Conclusions and open problems}\label{sect:conclusions}

In this work, we presented \emph{round-efficient} protocols for the \((\Delta + 1)\)-vertex coloring problem and the \((2\Delta - 1)\)-edge coloring problem  in the two-party communication model with \emph{optimal} communication complexity $O(n)$. %Moreover, we demonstrated the optimality of the communication cost for the edge coloring protocol by proving the communication lower bound in this setting. 
We now discuss several remaining open questions.

\paragraph{Round complexity:} Is it possible to design protocols for the \((\Delta + 1)\)-vertex coloring problem and the \((2\Delta - 1)\)-edge coloring problem that use \(O(n)\) bits of communication while maintaining an \emph{asymptotically minimum} number of rounds? To the best of our knowledge, there is no known round complexity lower bound for these problems. If it is impossible to achieve both optimal round and communication complexities simultaneously, what is the best round-communication tradeoff?  %Although our protocols are round-efficient, it remains open whether they achieve the optimal round complexity while keeping the communication cost within \(O(n)\) bits.

\paragraph{Deterministic and high probability protocols:} For the \((\Delta + 1)\)-vertex coloring problem, we focus on designing \emph{Las Vegas} protocols and analyzing their \emph{expected} round complexity. Very little is known about protocols that are deterministic or succeed \emph{with high probability}, which means that the success probability is $1 - 1/\poly(n)$.
It is an interesting future research direction to explore the round-communication tradeoff in these settings. What is the optimal communication complexity for the \((\Delta + 1)\)-vertex coloring problem in these settings?

For the \((\Delta + 1)\)-vertex coloring problem, \citeauthor{10.1145/3662158.3662796}~\cite{10.1145/3662158.3662796} showed the existence of a deterministic protocol with communication complexity \(O(n \log \Delta)\) and a randomized protocol with \(O(n \log^* \Delta)\) communication complexity that succeeds with high probability. However, these algorithms are not round-efficient. For the round-communication tradeoff, \citeauthor{assadi2023coloring}~\cite[Corollary 3.11 of arXiv:2212.10641v1]{assadi2023coloring} presented a deterministic protocol using \(O(n \log^4 n)\) bits of communication and requiring only \(O(\log \Delta \log \log \Delta)\) rounds.

%A natural open question is whether the communication complexity of \((\Delta + 1)\)-vertex coloring can be reduced to \(O(n)\) in both deterministic and randomized high-probability settings, while maintaining the efficient round complexity.

%For edge coloring, the \emph{only} step requiring randomness in \Cref{alg:EdgeColoringProtocol} is the sampling of a color for the vertices in both \(\Delta\)-perfect matchings. By Hall's marriage theorem, there exist two disjoint \(\Delta\)-perfect matchings corresponding to the two parties, and these matchings require only \(O(n)\) bits of space to store. There is no known deterministic \((2\Delta - 1)\)-edge coloring protocol that is both round and communication efficient. %An interesting open question is to design a deterministic \((2\Delta - 1)\)-edge coloring protocol.

\paragraph{Fewer colors:} A natural research direction is to reduce the number of colors. \emph{Brooks' theorem} shows the existence of a $\Delta$-vertex coloring for any graph that is not a complete graph or an odd-length cycle. 
\citeauthor{10.1145/3662158.3662796} \cite[Problem 5]{10.1145/3662158.3662796} posed an open problem to determine the optimal communication complexity to compute a $\Delta$-vertex coloring, which remains unsolved. 
For edge coloring, Vizing's theorem shows the existence of a $(\Delta+1)$-edge coloring for any graph. What is the optimal communication complexity to compute a $(\Delta+1)$-edge coloring?

\section*{Acknowledgments}
In an earlier version (\href{https://arxiv.org/abs/2412.12589v1}{arXiv:2412.12589v1}) of the paper, our $(2\Delta-1)$-edge coloring algorithm was randomized and required  $O(\log^\ast \Delta)$ rounds. We are grateful to Maxime Flin for sharing the idea that led to the improvement to the current deterministic $O(1)$-round protocol.

\printbibliography

\appendix

\section{Algorithm for sampling an available color: Proof of  \Cref{lm:ksi}}\label{app:col-samp}

\citet{10.1145/3662158.3662796} solved the problem of finding an arbitrary {available} color by translating it into a specific kind of set-intersection problem, as follows:

\begin{problem}[$k$-\textsc{Slack}-\textsc{Int}]
    \label{prob:ksi}
    Given two sets \( X \) and \( Y \) with \( |X| + |Y| \leq m - k \) for some \( k \geq 1 \), and a positive integer \( m \), the goal is to find an element in the intersection of \( [m] \setminus X \) and \( [m] \setminus Y \).
\end{problem}

In the context of vertex coloring, finding an arbitrary {available} color for vertex \( v \) is translated into solving $k$-\textsc{Slack}-\textsc{Int} (\Cref{prob:ksi}) with \( m = \Delta + 1 \), \( X = A \), and \( Y = B \). If the number of {available} colors is at least $k$, i.e., \( |[\Delta + 1]\setminus (A \cup B)| \geq k \), then the condition \( |X| + |Y| \leq m - k \) is satisfied because the neighborhoods \( N_A(v) \) and \( N_B(v) \) are disjoint.

Flin and Mittal established that a randomized protocol (as described in \Cref{alg:k-Slack-Int}) exists to solve $k$-\textsc{Slack}-\textsc{Int} 
%\yijun{$k$-\textsc{Slack}-\textsc{Int} or \textsc{Slack}-\textsc{Int} or it doesn't matter?}\gopinath{Done!}
with an expected communication cost of \( O(\log^2((m+1)/k)) \) bits and expected round complexity of $O\left(\log ((m+1)/{k})\right))$. A deterministic protocol, which serves as an ingredient for the randomized one, can achieve this with \( O(\log^2(m)) \) bits of communication and $O(\log m)$ rounds by employing binary search to locate elements.

\begin{lemma}[\citet{10.1145/3662158.3662796}]
    \label{lm:deterministic}
    There exists a deterministic protocol that solves $k$-\textsc{Slack}-\textsc{Int} that makes $O(\log^2 m)$ bits of communication and completes in $O(\log m)$ rounds in the worst case.
\end{lemma}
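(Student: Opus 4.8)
The plan is to implement a doubly-recursive binary search over the universe $[m]$, maintaining at every step a contiguous subinterval $I \subseteq [m]$ that is guaranteed to contain an element of $([m]\setminus X)\cap([m]\setminus Y)$. For an interval $J \subseteq [m]$ define its \emph{slack} as $s(J) := |J| - |X \cap J| - |Y \cap J|$; note $s(J)$ is an integer (possibly negative, since $X\cap J$ and $Y\cap J$ may overlap) and, for a singleton $J=\{e\}$, the condition $s(\{e\}) \ge 1$ is exactly the statement that $e \notin X$ and $e \notin Y$. The invariant I would keep is $s(I) \ge 1$. It holds initially for $I = [m]$, since $s([m]) = m - |X| - |Y| \ge k > 1$.

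For the recursive step, Alice and Bob split the current interval $I$ into two halves $I_1, I_2$ of sizes $\lceil |I|/2\rceil$ and $\lfloor |I|/2\rfloor$. Because $I = I_1 \sqcup I_2$, slack is additive: $s(I_1) + s(I_2) = s(I) \ge 1$, and as both summands are integers, at least one of them is $\ge 1$. To decide which, Alice sends Bob the integer $|X \cap I_1|$, which takes $O(\log m)$ bits; Bob, who knows $|Y \cap I_1|$ and $|I_1|$, computes $s(I_1)$ and replies with a single bit telling both parties to recurse on $I_1$ if $s(I_1) \ge 1$, and on $I_2$ otherwise (in which case $s(I_2) = s(I) - s(I_1) \ge 1$, since then $s(I_1) \le 0$). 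This preserves the invariant. After $\lceil \log_2 m\rceil$ iterations the interval has shrunk to a single element $e$, and $s(\{e\}) \ge 1$ forces $e \notin X \cup Y$; both parties output $e$, which is a valid solution.

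Each iteration costs $O(\log m)$ bits (Alice's count together with Bob's reply bit) and takes $O(1)$ rounds, so the protocol uses $O(\log^2 m)$ bits of communication over $O(\log m)$ rounds, and it is deterministic. There is no substantial obstacle here; the only point requiring care is the slack bookkeeping — one must remember that slack can be negative on subintervals, that it is additive over the partition $I = I_1 \sqcup I_2$, and that integrality upgrades ``$s(I_1)+s(I_2) \ge 1$'' to ``$s(I_1)\ge 1$ or $s(I_2)\ge 1$'' (the guarantee on a subinterval is only $\ge 1$, not $\ge k$). As a minor symmetric variant, Alice and Bob could instead each broadcast their own count of elements in $I_1$, so that both deduce the recursion branch without the reply bit; this changes nothing asymptotically.
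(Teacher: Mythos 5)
Your proof is correct and follows precisely the approach the paper gestures at (the paper does not prove \Cref{lm:deterministic} itself but merely attributes it to Flin and Mittal, noting only that it works ``by employing binary search to locate elements''). Your slack-accounting binary search — with the observations that slack is additive over a partition, can go negative on subintervals, and that integrality turns $s(I_1)+s(I_2)\ge 1$ into ``at least one half has slack $\ge 1$'' — is exactly the right bookkeeping to make that hint rigorous, and the per-iteration cost of $O(\log m)$ bits over $O(\log m)$ iterations matches the claimed bounds.
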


The randomized algorithm for $k$-\textsc{Slack}-\textsc{Int} makes a guess for \( k \) and samples elements based on a probability \( p \) that ensures \( |S \cap X| + |S \cap Y| < |S| \) occurs with significant probability. Under this condition, \( S \) must contain at least one element not in \( X \) or \( Y \). Subsequently, the deterministic protocol is used to identify that element. The pseudocode of \cite{10.1145/3662158.3662796} for $k$-\textsc{Slack}-\textsc{Int} problem is provided in \Cref{alg:k-Slack-Int}.

\begin{algorithm}[ht!]
\caption{An $O(\log^2 \frac{m}{k})$-bit communication protocol for $k$-\textsc{Slack}-\textsc{Int}}
\label{alg:k-Slack-Int}
\KwIn{Alice gets a set $X \subseteq [m]$, Bob gets a set $Y \subseteq [m]$ such that $|X| + |Y| \leq m - k$.}
\KwOut{Any element from $X \cap Y$.}

\For{$\tilde{k} = m, \frac{m}{2}, \ldots, 1$ \tcp{A sequence of exponentially decreasing guesses for $k$}}{
    Alice and Bob choose $S$ by sampling each element of $[m]$ independently with probability $p = \min\left\{1, \frac{150m}{\tilde{k}^2}\right\}$, using public randomness.\;
    \If{$|S \cap X| + |S \cap Y| < |S|$}{
        Run the deterministic binary search protocol to find an element in $S \setminus (X \cup Y)$.\;
    }
}

\end{algorithm}

The result of Flin and Mittal about $k$-\textsc{Slack}-\textsc{Int} is summarized in the following lemma.

\begin{lemma}[\citet{10.1145/3662158.3662796}]
    \label{lm:rand-fm}
There exists a randomized protocol that solves $k$-\textsc{Slack}-\textsc{Int}, which makes an expected $O\left(\log^2 \frac{m+1}{k}\right)$ bits of communication and completes in $O\left(\log \frac{m+1}{k}\right)$ rounds in expectation.
\end{lemma}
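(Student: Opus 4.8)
The plan is to analyze \Cref{alg:k-Slack-Int} directly, establishing in order its correctness, its expected communication cost, and its expected round complexity. Fix the inputs $X,Y$ with $|X|+|Y|\le m-k$; for a guess $\tilde k$ write $p=p(\tilde k)=\min\{1,150m/\tilde k^2\}$ and let $S=S(\tilde k)$ be the corresponding random sample of $[m]$. First I would record correctness: whenever the test $|S\cap X|+|S\cap Y|<|S|$ succeeds, inclusion--exclusion gives $|S\setminus(X\cup Y)|=|S|-|S\cap(X\cup Y)|\ge|S|-|S\cap X|-|S\cap Y|\ge1$, so the sub-instance with universe $S$ and sets $S\cap X,\,S\cap Y$ has slack at least $1$, and \Cref{lm:deterministic} run on it returns an element of $S\setminus(X\cup Y)\subseteq([m]\setminus X)\cap([m]\setminus Y)$, as required. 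The loop always reaches a successful iteration: at $\tilde k=1$ we have $p=1$, so $S=[m]$ and $|S\cap X|+|S\cap Y|=|X|+|Y|\le m-k<m=|S|$ because $k>1$.

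The technical core is a ``critical scale'' claim: for every guess $\tilde k\le k$, the test fails with probability at most $\tilde k^2/(150k^2)\le1/150$. To prove it, set $Z:=|S|-|S\cap X|-|S\cap Y|=\sum_{i\in[m]}c_i\,\mathbf{1}[i\in S]$ with $c_i:=1-\mathbf{1}[i\in X]-\mathbf{1}[i\in Y]\in\{-1,0,1\}$; the summands are independent, $\E[Z]=p(m-|X|-|Y|)\ge pk$, and $\operatorname{Var}(Z)=\sum_i c_i^2\,p(1-p)\le pm$. If $p=1$ then $Z=m-|X|-|Y|\ge k>0$ deterministically; otherwise $p=150m/\tilde k^2$ and Chebyshev's inequality gives $\Pr[Z\le0]\le\operatorname{Var}(Z)/\E[Z]^2\le m/(pk^2)=\tilde k^2/(150k^2)$. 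Since consecutive guesses differ by a factor of $2$, the first guess $\tilde k_0\le k$ satisfies $\tilde k_0>k/2$, and as distinct iterations use independent public randomness, the probability that the loop has not stopped by the $i$-th guess that is $\le k$ is at most $(1/150)^i$. Hence the loop runs for $\log_2(m/k)+O(1)$ iterations in expectation, the overshoot past $\tilde k_0$ being dominated by a geometric random variable.

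Writing the guesses as $\tilde k=m/2^j$ for $j=0,1,2,\dots$, the test at guess $j$ costs $O(\log(|S|+1))$ bits (Alice sends $|S|-|S\cap X|$, Bob answers with one bit) and $O(1)$ rounds, while the unique terminating iteration costs an additional $O(\log^2|S|)$ bits and $O(\log|S|)$ rounds for \Cref{lm:deterministic} on a universe of size $|S|\le m$. Since $\E[|S|]=pm=\min\{m,150\cdot4^{\,j}\}$, concavity of $\log$ and of $\log^2$ on the relevant range (or a Chernoff bound on the upper tail of $|S|$) gives $\E[\log(|S|+1)]=O(j+1)$ and $\E[\log^2(|S|+1)]=O((j+1)^2)$. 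Because the event ``the loop reaches guess $j$'' depends only on the earlier, independent samples, it is independent of $S(m/2^j)$, so the expected cost charged to guess $j$ is $\Pr[\text{reach }j]\cdot O((j+1)^2)$. Summing over $0\le j\le j^\ast:=\log_2(m/k)+O(1)$, the tests contribute $\sum_{j\le j^\ast}O(j+1)=O(\log^2(m/k))$ and the single deterministic call contributes $O((j^\ast)^2)=O(\log^2(m/k))$; the probability of overshooting this range by $i$ further guesses is at most $(1/150)^i$, so the overshoot adds only $\sum_{i\ge1}(1/150)^i\cdot O\big((\log(m/k)+i)^2\big)=O(\log^2(m/k))$. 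This yields expected communication $O(\log^2(m/k))$, and by the same bounds on the iteration count and on $\E[\log|S|]$ the expected round complexity is $O(\log(m/k))$.

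The main obstacle is exactly the critical-scale argument together with the surrounding bookkeeping: one must show the loop halts with high constant probability precisely when $\tilde k$ first drops to within a constant factor of $k$, use independence across iterations to control the overshoot, and then sum a series whose per-term cost grows polynomially in the iteration index against a geometrically decaying reach probability. The constant $150$ in the sampling probability is chosen so that $\Pr[Z\le0]\le1/150$ at the critical scale, which is what makes this series converge to $O(\log^2(m/k))$.
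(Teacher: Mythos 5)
The paper does not prove this lemma; it states it as Flin and Mittal's result (citing \cite{10.1145/3662158.3662796}) and reproduces only the pseudocode in \Cref{alg:k-Slack-Int}. You have therefore supplied an argument the paper delegates to its reference. Your route --- Chebyshev at the critical scale $\tilde{k}\approx k$, independence across guesses to damp the overshoot geometrically, and per-guess costs controlled by $\E\left[\log^a(|S|+1)\right]=O\left((j+1)^a\right)$ --- is the natural one and surely coincides in spirit with the original. The critical-scale computation ($\E[Z]\ge pk$, $\operatorname{Var}(Z)\le pm$, failure probability at most $\tilde{k}^2/(150k^2)\le 1/150$) is correct, including the care with $c_i\in\{-1,0,1\}$ when $X$ and $Y$ intersect, and so is the termination guarantee at $\tilde{k}=1$ via $k>1$.

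Two steps are stated more loosely than they should be, though neither breaks the argument. First, $\log^2$ is not globally concave, so $\E\left[\log^2(|S|+1)\right]=O\left((j+1)^2\right)$ needs the binomial tail-bound route you flag in passing, not a bare Jensen. Second, the claim that ``the single deterministic call contributes $O\left((j^\ast)^2\right)$'' elides a conditioning issue: the event $\{T=j\}$ (the test succeeds at guess $j$) is \emph{not} independent of $S_j$, so one cannot write $\E\left[\log^2(|S_T|+1)\,\mathbf{1}[T=j]\right]=\Pr[T=j]\cdot O\left((j+1)^2\right)$; and the unconditional bound $\sum_{j\le j^\ast}\Pr[\text{reach }j]\cdot O\left((j+1)^2\right)$ yields only $O\left(\log^3(m/k)\right)$. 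The repair really does need the ``only one call'' fact: on $\{T\le j^\ast\}$ one has $|S_T|\le\max_{j\le j^\ast}|S_j|\le\sum_{j\le j^\ast}|S_j|$, whose mean is $O\left(4^{j^\ast}\right)$, giving $\E\left[\log^2(|S_T|+1)\,\mathbf{1}[T\le j^\ast]\right]=O\left((j^\ast)^2\right)$ by the same tail/Jensen split; the overshoot $T>j^\ast$ is then killed by the geometric decay of $\Pr[\text{reach }j]$, exactly as you do. With these patches the communication and round bounds both go through.
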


By the above lemma and the connection between the problem of finding an arbitrary {available} color and $k$-\textsc{Slack}-\textsc{Int}, we can say the following: If there are at least $k$ available colors for vertex $v$, then Alice and Bob can find an arbitrary available color making \( O\left(\log^2 \frac{\Delta + 1}{k}\right) \) bits of communication in expectation and spending \( O\left(\log \frac{\Delta + 1}{k}\right) \) rounds in expectation. Hence, in the worst case, the communication cost is $O(\log^2 \Delta)$ and the number of rounds spent is $O(\log \Delta)$.

%From now on, we will treat the process of sampling a color from the remaining palette of each vertex as a black box. It is important to note that this protocol can be executed in \emph{parallel} for all vertices. Therefore, if we execute \cref{alg:k-Slack-Int} for all vertices simultaneously, we only need to focus on the worst-case round complexity, which is \( O(\log m) \).\gopinath{Actually, we are not running \Cref{alg:k-Slack-Int}, rather a slight modification (as described in the below paragraph). May be we give a name to the sampling problem like $k$-\textsc{Slack}-\textsc{Samp}.}

% \begin{lemma}
% \label{lm:ksi}
%     There exists a randomized protocol that solves \( k \)-\textsc{Slack}-\textsc{Int} with \( O\left(\log^2 (m/k) \right) \) bits of communication in expectation and \( O(\log m) \) rounds of communication in the worst case.
% \end{lemma}
\begin{proof}[Proof of \Cref{lm:ksi}]
Now, we argue how a minor modification to \Cref{alg:k-Slack-Int} leads to solving $k$-\textsc{Slack}-\textsc{Int} by outputting an element in $[m] \setminus (X \cup Y)$ uniformly at random. Note that neither \Cref{alg:k-Slack-Int} nor the deterministic subroutine favor any specific element in $[m] \setminus (X \cup Y)$ to be the output. Let Alice and Bob (before running \Cref{alg:k-Slack-Int}) draw a random mapping $[m] \rightarrow [m]$ using a source of public randomness. They then apply this mapping to the set of all $m$ elements before inputting the transformed sets into the algorithm \Cref{alg:k-Slack-Int}. This modification ensures that if they run \Cref{alg:k-Slack-Int} on the permuted set and produce the output, then the output is indeed an element in $[m] \setminus (X \cup Y)$ uniformly at random. Now, instead of \Cref{alg:k-Slack-Int}, if we use the modified version above to find an arbitrary available color for vertex $v$, then the output is indeed a color chosen uniformly at random from the set of available colors, which is the desired output if the goal is to sample an available color uniformly at random. Hence, \Cref{lm:ksi} follows.
\end{proof}

\section{A protocol for the \texorpdfstring{$(\mathsf{degree}+1)$}{(degree+1)}-list coloring: Proof of \Cref{lem:d1lc}}\label{app:d1lc}

\begin{proof}[Proof of \Cref{lem:d1lc}]

The desired protocol for the $(\mathsf{degree}+1)$-list coloring problem (\DLC) is as follows.

\begin{description}
    \item[Step 1:] For each vertex $v$, Alice and Bob run $O(\log^2 n)$ independent instances of $\colorsamp$ (as described in \Cref{lm:ksi}) to obtain a set $L(v)$ of $O(\log^2 n)$ colors from $\Psi(v) = \Psi_A(v) \cap \Psi_B(v)$.

    \item[Step 2:] Both Alice and Bob remove the edges $\{u,v\}$ such that $L(u)$ and $L(v)$ are disjoint. Let the resulting graph be $H$.

    \item[Step 3:] If the number of edges in $H$ is $O(n \log^2 n)$, Alice gathers the entire graph $H$ by making $O(n \log^3 n)$ bits of communication. Alice then attempts to find a vertex coloring of the graph $H$, ensuring that each vertex $v$ is assigned a color from $L(v)$. If successful, she sends the colors assigned to all vertices to Bob.

    \item[Step 4:] If either $H$ has more than $O(n \log^2 n)$ edges or Alice fails to find a vertex coloring of $H$, the entire graph $G$ is gathered at Alice. She then solves the \DLC problem on her own and sends the coloring of all vertices to Bob.
\end{description}

From the description of the protocol, it is evident that the output is a valid solution to the input $\DLC$ instance. We now analyze the communication complexity of the protocol. In Step 1, we execute $O(n \log^2 n)$ instances of \colorsamp. By \Cref{lm:ksi}, the communication cost to perform  Step 1 is $O(n \cdot \log^2 n \cdot \log^2 \Delta)$ bits in the worst case. Note that Step 2 does not incur any communication cost. The communication cost to perform Steps 3 and 4 are $O(n \log^3 n)$ and $O(n^2)$ bits, respectively, in the worst case. However, Step 4 is executed only if $H$ contains more than $O(n \log^2 n)$ edges or if Alice fails to find a vertex coloring of $H$ in Step 3. By \Cref{pro:pal}, the probability of executing Step 4 is at most $1/n^c$, for some large constant $c$. Combining these observations, the expected communication cost of the protocol is $O(n \log^2 n \log^2 \Delta+ n\log^3 n)$ bits. Regarding the round complexity, note that in Step 1, we run $O(n \log^2 n)$ instances of $\colorsamp$ in parallel, implying that the number of rounds required in Step 1 is $O(\log \Delta)$, as stated in \Cref{lm:ksi}. Since each of Steps 2, 3, and 4 requires at most $O(1)$ rounds of communication, the total round complexity of the protocol is $O(\log \Delta)$. Putting things together, \Cref{lem:d1lc} follows. 
\end{proof}

\end{document}